\documentclass[a4paper,11pt]{article}
\usepackage[T1]{fontenc}
\usepackage{lmodern}
\usepackage[english]{babel}
\usepackage[margin=27mm]{geometry}
\usepackage{microtype}
\usepackage{mathtools,amssymb,amsthm}
\usepackage{booktabs,xspace}
\usepackage[ruled,vlined,linesnumbered]{algorithm2e}
\usepackage{cite}
\usepackage[hidelinks]{hyperref}

\newtheorem{theorem}{Theorem}[section]
\newtheorem{lemma}[theorem]{Lemma}
\newtheorem{corollary}[theorem]{Corollary}
\newtheorem{proposition}[theorem]{Proposition}
\theoremstyle{definition}
\newtheorem{definition}[theorem]{Definition}
\theoremstyle{remark}
\newtheorem{remark}[theorem]{Remark}

\newcommand{\fpt}{\ensuremath{\mathsf{FPT}}\xspace}
\newcommand{\Pclass}{\ensuremath{\mathsf{P}}\xspace}
\newcommand{\NPclass}{\ensuremath{\mathsf{NP}}\xspace}
\newcommand{\Direct}{\textnormal{\textsc{Direct}}\xspace}
\newcommand{\Rooted}{\textnormal{\textsc{Rooted}}\xspace}
\newcommand{\epas}{\textnormal{\textsf{EPAS}}\xspace}
\newcommand{\clique}{\textnormal{\textsc{Clique}}}
\newcommand{\maxwtcsp}{\textnormal{\textsc{Max $k$-Weight CSP}}\xspace}
\newcommand{\OPT}{\mathrm{OPT}}
\newcommand{\wt}{\operatorname{wt}}
\newcommand{\scope}{\mathrm{sc}}
\newcommand{\supp}{\operatorname{supp}}
\newcommand{\maxsat}{\textnormal{\textsc{Max-SAT}}\xspace}
\newcommand{\maxcoverage}{\textnormal{\textsc{Maximum Coverage}}\xspace}
\newcommand{\wone}{\ensuremath{\mathsf{W[1]}}\xspace}
\newcommand{\maxcsp}{\textnormal{\textsc{Max-CSP}}\xspace}
\newcommand{\mincsp}{\textnormal{\textsc{Min-CSP}}\xspace}
\newcommand{\minones}{\textnormal{\textsc{Min-Ones}}\xspace}
\newcommand{\maxones}{\textnormal{\textsc{Max-Ones}}\xspace}
\newcommand{\gapeth}{\textnormal{Gap-ETH}\xspace}

\title{Parameterized Approximation of Max \texorpdfstring{$k$}{k}-Weight CSP:\\
A Dichotomy for Boolean Languages}
\author{Rudranarayan Kar\textsuperscript{1}\quad
Souvik Saha\textsuperscript{2}\quad
Abhishek Sahu\textsuperscript{1}\\[0.8em]
\small\textsuperscript{1}School of Computer Sciences, National Institute of Science Education\\
\small and Research (NISER), Bhubaneswar, India\\
\small (An Off-Campus Centre of Homi Bhabha National Institute (HBNI), Mumbai)\\
\small\textsuperscript{2}Chennai Mathematical Institute, Chennai, India}
\date{}
\hypersetup{
  pdftitle={Parameterized Approximation of Max k-Weight CSP: A Dichotomy for Boolean Languages},
  pdfauthor={Rudranarayan Kar, Souvik Saha, Abhishek Sahu},
  pdfsubject={Parameterized approximation and Boolean constraint satisfaction},
  pdfkeywords={Max k-Weight CSP, parameterized approximation, Boolean constraint language, dichotomy}
}

\begin{document}
\raggedbottom
\maketitle
\begin{abstract}
We study \textnormal{\textsc{Max $k$-Weight CSP}}, where Boolean constraints
carry nonnegative rational weights and the goal is to maximize the satisfied
weight using at most $k$ variables set to $1$. Each constraint uses at most
$d$ pairwise distinct variables and a relation from a finite Boolean constraint
language $\Gamma$. The framework includes \textsc{Partial Vertex Cover},
\textsc{Densest $k$-Subgraph}, bounded-frequency \textsc{Maximum Coverage},
and cardinality-constrained \textsc{Max-SAT}.

We give a complete classification of the fixed finite Boolean constraint
languages $\Gamma$ of arity at most $d$ for which this problem admits an
efficient parameterized approximation scheme (EPAS) parameterized by $k+d$. Only nonempty relations that exclude the all-zero
tuple affect the classification. In such a relation, a coordinate is a
\emph{unit witness} if the tuple with a single $1$ in that coordinate is
accepted, and a \emph{control} otherwise. A relation is \emph{Direct} if
setting a selected control to $0$, with all other coordinates unchanged,
preserves satisfaction. It is \emph{Rooted} if every satisfying tuple has a
selected unit witness such that keeping this witness, setting all other
unit witnesses to $0$, and setting any subset of the selected controls
to $0$ preserves satisfaction.

If all the relations are Direct, or all are Rooted, we give a deterministic
$(1-\varepsilon)$-approximation for every rational $\varepsilon\in(0,1)$
in time $2^{O(d+k\log(1+dk/\varepsilon))}\ell_{\mathrm{in}}^{O(1)}$,
where $\ell_{\mathrm{in}}$ is the total input length.
The same algorithms work for every value of $d$ and give an EPAS with
parameter $k+d$, provided that relation membership can be tested in
polynomial time.

If neither condition holds, no EPAS exists unless
$\mathsf{FPT}=\mathsf{W[1]}$, even for unweighted instances where no two
constraints involve the same set of variables.

\end{abstract}

\section{Introduction}

We study Boolean constraint satisfaction problems (CSPs). An instance
consists of a finite set $X$ of variables, each taking a value in $\{0,1\}$,
and a finite collection $\mathcal C$ of constraints. Let $d$ be an upper bound
on all constraint arities. A constraint $C$ of arity $r\le d$ is specified
by a tuple $(x_1,\ldots,x_r)$ of variables and a relation $R_C\subseteq\{0,1\}^r$. We call the positions $1,\ldots,r$ the coordinates
of the relation and of the constraint. An assignment $\tau:X\to\{0,1\}$ satisfies $C$ if
$(\tau(x_1),\ldots,\tau(x_r))\in R_C$. We write $\mathbf 0_r$ and
$\mathbf 1_r$ for the all-zero and all-one tuples in $\{0,1\}^r$, omitting
the subscript when the arity is clear. Throughout this paper, the variables
$x_1,\ldots,x_r$ of every constraint $C$ are pairwise distinct. This
assumption affects the classification; see Remark~\ref{rem:repeated-variables}. We write
$\scope(C):=\{x_1,\ldots,x_r\}$ for the set of variables of $C$ and call it
the \emph{underlying scope set} of $C$; thus $|\scope(C)|=r$. A Boolean
constraint language $\Gamma$ is a finite set of relations, and an instance is
over $\Gamma$ if $R_C\in\Gamma$ for every $C\in\mathcal C$.

The input to \maxwtcsp is a CSP instance $(X,\mathcal C)$ together with a
nonnegative rational weight $w(C)$ for every constraint $C\in\mathcal C$
and a nonnegative integer budget $k$. The \emph{Hamming weight}
$\wt(\tau):=|\tau^{-1}(1)|$ of an assignment $\tau$ is the number of
variables it sets to $1$. The goal is to find an assignment
$\tau$ with $\wt(\tau)\le k$ that maximizes the total weight of the
constraints it satisfies. The unweighted case has $w(C)=1$ for every
$C\in\mathcal C$.
For a constraint language $\Gamma$, \maxwtcsp{}$(\Gamma)$ is the restriction
to instances over $\Gamma$.

For an instance, let $n:=|X|$ and $m:=|\mathcal C|$.
All accuracy parameters are rational.
Let $\ell_{\mathrm{in}}$ be the total number of bits used to encode
the input, including the weights, budget, accuracy parameters when present,
and any relation descriptions supplied as part of the input.
All running times count bit operations.

An \emph{efficient
parameterized approximation scheme} (\epas) for \maxwtcsp{}$(\Gamma)$,
for a fixed $\Gamma$, parameterized by $k$, is an algorithm that, given an
instance and $\varepsilon\in(0,1)$, returns an assignment $\tau$ with
$\wt(\tau)\le k$ of value at least $(1-\varepsilon)$ times the optimum,
in time $f(k,\varepsilon)\ell_{\mathrm{in}}^{O(1)}$ for a computable function $f$.
When $d$ varies, an \epas{} parameterized by $k+d$ uses one algorithm
running in time $f(k,d,\varepsilon)\ell_{\mathrm{in}}^{O(1)}$.

\maxwtcsp{} contains \textsc{Partial Vertex Cover}~\cite{
DBLP:journals/cj/Marx08,DBLP:conf/soda/Manurangsi19},
\maxcoverage~\cite{DBLP:conf/aaai/SkowronF15,
DBLP:conf/compgeom/BadanidiyuruKL12,DBLP:conf/esa/Sellier23},
cardinality-constrained \maxsat~\cite{jain_parameterized_2023,
DBLP:journals/tcs/Manurangsi25}, and \textsc{Max Cut} with at most
$k$ vertices on one side~\cite{DBLP:journals/algorithmica/BonnetEPT15}
as special cases (Section~\ref{sec:intro-special-cases}).
Earlier work also gives approximation schemes for several of these
problems through common frameworks~\cite{jain_parameterized_2023,
FominGIK24,DBLP:conf/icalp/00020LS0U24,KunduSSU26}. Can the
parameterized approximability of these problems be determined uniformly by
the permitted relations alone? More precisely, for every $d\ge1$, for which
Boolean constraint languages $\Gamma$ of arity at most $d$ does
\maxwtcsp{}$(\Gamma)$ admit an \epas?
Theorem~\ref{thm:arityd-complete-dichotomy} gives a complete
classification. It also shows that two languages can each admit
an \epas{} while their union admits none.

\subsection{Our results}
\label{sec:intro-results}

Fix $d\ge 1$ and a finite Boolean constraint language $\Gamma$ whose
relations have arity at most $d$. Throughout, $d$ denotes this global arity
bound, while $r$ denotes the arity of an individual relation. Only the
nonempty relations that exclude the all-zero tuple affect the classification.

A relation $R\subseteq\{0,1\}^r$ is \emph{zero-valid} if
$\mathbf 0\in R$ and \emph{zero-free} otherwise. Let
\[
  \Gamma_+:=\{R\in\Gamma:R\ne\varnothing,\ \mathbf 0\notin R\}.
\]
Constraints with an empty relation are never satisfied and can be discarded.
The all-zero assignment satisfies every constraint whose relation contains
$\mathbf 0$, although selecting variables may later falsify such constraints.
The zero-valid lifting argument handles this possible loss. The dichotomy
therefore depends only on $\Gamma_+$.

Fix $R\in\Gamma_+$, and let $r\le d$ be its arity. Let $e_i$ be the tuple
whose only $1$ is in coordinate $i$. Coordinate $i$ is a \emph{unit witness} if
$e_i\in R$ and a \emph{control} otherwise.
The classification is determined by two \emph{repair rules}, that is, two
ways of deleting $1$s from a satisfying tuple.

\begin{itemize}
\item $R$ is \Direct if, for every $\alpha\in R$ and every control
      coordinate $i$ with $\alpha_i=1$, setting $\alpha_i$ to $0$ while
      leaving all other coordinates unchanged produces another tuple in $R$.

\item $R$ is \Rooted if every $\alpha\in R$ has a unit-witness coordinate
      $u$ with $\alpha_u=1$ for which the following operation
      preserves satisfaction. First keep $u$ at $1$, set every other
      unit-witness coordinate to $0$, and leave all controls unchanged. From
      this modified tuple, set any subset of its $1$-valued controls to $0$.
      Every tuple obtained in this way must belong to $R$.
\end{itemize}

Both rules can also be described by what they exclude. A control
coordinate $c$ is \emph{helpful} if some $\beta\notin R$ with
$\beta_c=0$ satisfies $\beta+e_c\in R$. A \emph{pinned conjunction}
consists of disjoint sets $A,T\subseteq[r]$, where $|A|\ge2$ and
$T$ contains only controls: after fixing the coordinates in $T$
to $1$ and those outside $A\cup T$ to $0$, a tuple is accepted
exactly when every coordinate in $A$ is $1$. The set $T$ may be
empty. Then $R$ is \Direct{} exactly when it has no helpful
control, and \Rooted{} exactly when it has no pinned conjunction
(Lemma~\ref{arityd-dist-hard-lem-pinned-and}). Thus the hard case
occurs when $\Gamma_+$ contains both obstructions, possibly in
different relations.

In the hardness part of the following theorem, an instance has
\emph{multiplicity one} if no two of its constraints have the same
underlying scope set.

\begin{theorem}[Fixed-arity dichotomy]
\label{thm:arityd-complete-dichotomy}
Let $d\ge 1$, and let $\Gamma$ be a fixed finite Boolean constraint language
whose relations have arity at most $d$. The following statements hold.
\begin{enumerate}
\item If every relation in $\Gamma_+$ is \Direct, then
      \maxwtcsp{}$(\Gamma)$ admits a deterministic \epas{} that, for every
      $\varepsilon\in(0,1)$, runs in time
      \[
        2^{O(d+k\log(1+dk/\varepsilon))}\ell_{\mathrm{in}}^{O(1)}.
      \]

\item If every relation in $\Gamma_+$ is \Rooted, then
      \maxwtcsp{}$(\Gamma)$ admits a deterministic \epas{} that, for every
      $\varepsilon\in(0,1)$, runs in time
      \[
        2^{O(k\log(1+dk/\varepsilon))}\ell_{\mathrm{in}}^{O(1)}.
      \]

\item If neither of the preceding conditions on $\Gamma_+$ holds, then
      \maxwtcsp{}$(\Gamma)$ admits no \epas{} unless
      $\fpt{}=\wone{}$, even on unweighted multiplicity-one
      instances.
\end{enumerate}
\end{theorem}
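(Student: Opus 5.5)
The proof splits into three parts, one per item, after a common preprocessing step. First I would reduce to $\Gamma_+$. Empty relations are discarded. For zero-valid relations I would use a lifting argument: the all-zero assignment already satisfies every zero-valid constraint. Any assignment of weight at most $k$ falsifies only zero-valid constraints whose scope meets its support, and the weight of those can be charged against the $\Gamma_+$-part. Concretely, I would run the $\Gamma_+$ algorithm on the instance where each zero-valid constraint is replaced by a zero-free constraint recording the loss, or guess whether the zero-valid weight dominates, and combine the two candidates via a factor-$(1-\varepsilon)$ comparison.

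For item 1 (all \Direct), the key observation is that no control ever helps. In an optimal solution, each satisfied zero-free constraint contains a selected variable in a unit-witness coordinate, because otherwise repeatedly zeroing selected controls would keep it satisfied down to $\mathbf 0$, contradicting zero-freeness. Moreover, zeroing any selected control never breaks a satisfied constraint. So the value of a support set $S$ is essentially determined by which constraints have their selected witnesses and controls arranged suitably. I would then apply a greedy/branching scheme in the style of the \maxcoverage EPAS. Sort variables by individual witness weight. If the top $O(dk/\varepsilon)$ variables have enough marginal weight, greedy-by-singletons is within $1-\varepsilon$, since each constraint is counted by at most $d$ variables. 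Otherwise the optimum lies in a candidate set of size $\mathrm{poly}(dk/\varepsilon)$, and I enumerate $k$-subsets of it, giving $2^{O(k\log(1+dk/\varepsilon))}$. The extra $2^{O(d)}$ comes from evaluating relation patterns per constraint.

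For item 2 (all \Rooted), each satisfied tuple has a root witness $u$. Keeping only $u$ among the witnesses, together with any subset of the selected controls, still satisfies the constraint. I would charge each satisfied constraint to its root and use the same high-degree/low-degree dichotomy: either $k$ variables of large root-weight exist and can be taken almost independently, losing only overlap of order $dk^2/\text{(large)}$, or the relevant variables form a small set and we enumerate. The Rooted property ensures that dropping other witnesses or controls from the chosen set never destroys credit.

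For item 3 (hardness), Lemma~\ref{arityd-dist-hard-lem-pinned-and} gives one relation with a helpful control and one with a pinned conjunction, possibly the same relation. I would reduce from $k$-\clique. Pinned conjunctions on $A$ realize an edge indicator, requiring both endpoints selected, after fixing the pinned controls to $1$ via shared global "hub" variables. The helpful control supplies a gadget that makes selecting those hub variables worthwhile only if they activate many constraints. Each edge $uv$ of the graph becomes a constraint whose $A$-coordinates are $u,v$ padded with fresh variables, with $T$ on a few hub variables. The total gain is then $\binom{k}{2}$ plus a large hub bonus exactly when a clique exists, and otherwise strictly less. Choosing $\varepsilon$ as $1/\mathrm{poly}(k)$ turns an EPAS into an \fpt algorithm for \clique. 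I expect the main obstacle to be this gadget engineering: the obstructions may sit in different relations, and multiplicity one must hold, so the padding with fresh distinct variables and the case analysis over which relation supplies which obstruction (also ensuring that unit witnesses in padding coordinates cannot be exploited cheaply) will be the delicate part.
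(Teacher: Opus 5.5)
Your proposal has gaps in the positive part and in the hardness part: the reduction to $\Gamma_+$ and the hardness construction both fail as described, and the algorithm sketches are not proofs.

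\textbf{Reduction to $\Gamma_+$ and the algorithms.} Your reduction does not reach a factor $1-\varepsilon$. A zero-valid constraint cannot be turned into a zero-free one ``recording the loss'' without negative weights. Also, the complement need not be \Direct{} or \Rooted{}: for the clause $\neg x\vee\neg y$ it is $R_{\mathrm{AND}}$. Taking the better of $\mathbf 0$ and a $\Gamma_+$-solution guarantees only about $\OPT/2$. In weighted \textsc{Balanced Set Cover} with $k=1$, let $S_1$ cover blue weight $1/2$ and no red, and let $S_2$ cover a different blue weight $1/2$ and all red weight $1/2$. Then $\OPT=1$, the $\Gamma_+$-algorithm may return $S_2$, and both candidates are worth $1/2$.

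The missing idea is the degree-threshold guess. Enumerate the optimum's values on the $O(dk/\varepsilon)$ variables with $D_0(v)>\varepsilon w_0/(4k)$. Two feasible sets that agree there differ on at most $2k$ light variables, hence on at most $\varepsilon w_0/2\le\varepsilon\OPT/2$ zero-valid weight. This is also why both zero-free algorithms must extend a prescribed partial assignment without substituting it into the relations.

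Your algorithmic sketches are coverage-style kernels and do not address non-monotonicity. Adding a variable can falsify constraints in which it is a control or a second witness. An exchange argument must therefore also bound the loss from inserting the replacement, and which candidates are good depends on what is already selected. Your claim that otherwise ``the optimum lies in a candidate set of size $\mathrm{poly}(dk/\varepsilon)$'' is unsupported.

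For \Direct{}, even single-witness credit fails. With its control on, $R_{\mathrm{dir}}$ accepts $(1,1,1)$ but neither $(1,0,1)$ nor $(0,1,1)$. Fixing the third coordinate to $1$ turns $R_{\mathrm{dir}}$ into $R_{\mathrm{AND}}$, so branching by substitution loses the \Direct{} property. The paper handles this with the shared-witness graph, color coding into small components, and masked item constraints, which stay \Direct{} under fixing. For \Rooted{}, it uses scores $m_K$ relative to the current required set $K$, a bounded out-degree interaction digraph, independent replacement sets, and recursion on $K$.

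\textbf{Hardness.} Your construction fails for accessible relations, and a sanity check shows it must. Your edge encoding uses only the pinned conjunction, which $R_{\mathrm{dir}}$ has ($A=\{1,2\}$, $T=\{3\}$), yet $\{R_{\mathrm{dir}}\}$ admits an \epas. Rewarding the hubs needs only a unit-witness coordinate, so the failure of \Direct{} must do more than that.

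In an accessible relation, $A$ contains a unit-witness coordinate. So when all hubs are $0$, the variable in that coordinate satisfies the edge constraint alone; for $R_{\mathrm{dir}}$ the edge constraint becomes $x_u\vee x_v$. Leaving the hubs off therefore gives a partial-vertex-cover score, or even $|E_G|$ if a shared hub sits in the witness coordinate. That score is not bounded by any function of $k$. The gap-of-one argument with $\varepsilon=1/(2t)$ needs a target $t$ bounded in $k$, so no hub bonus can compensate.

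The missing idea is the paper's switch gadget:
\begin{itemize}
\item Vertex variables occur only in control coordinates, as the helpful control $c$ of $P$. Turning one on adds a point exactly when a designated private set $B$ is fully selected and the pins are on.
\item The pinned conjunction of $Q$ is applied to cyclic groups of per-edge private variables.
\item Hence every satisfied edge constraint needs a selected private variable, and the edge score is at most $2(h+1)s$ whatever the pins do.
\item Pins are then forced by $\varphi=2(h+1)k'+1$ copies of $P$ with the pin in a unit-witness coordinate.
\end{itemize}
Your direct encoding of edges into the pinned conjunction is essentially the paper's argument only in the inaccessible case. There $T=\varnothing$, every coordinate of $A$ is a control, and the extra coordinates of $A$ are filled by shared anchors.
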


\begin{remark}
\label{rem:arityd-uniform-algorithms}
The deterministic algorithms in
Theorem~\ref{thm:arityd-complete-dichotomy} also apply when
relation descriptions are supplied as part of the input and
the arity bound $d$ varies.
Assume that membership in each relation can be tested in time
polynomial in $\ell_{\mathrm{in}}$, with a fixed exponent
for all inputs.

If every relation in $\Gamma_+$ is \Direct, or every relation
in $\Gamma_+$ is \Rooted, the corresponding algorithm satisfies
the running-time bound in the theorem.
This gives an \epas{} parameterized by $k+d$.
The algorithm assumes the corresponding condition on $\Gamma_+$;
the time needed to check that condition is not included.
\end{remark}

The two rules capture two ways in which control coordinates are
redundant for satisfaction. In a \Direct relation, a $1$-valued control can
be set to $0$ while every other coordinate, including every unit witness
currently set to $1$, remains unchanged. Since $R$ is zero-free, every
satisfying tuple must set some unit witness to $1$; otherwise, repeatedly
deleting its $1$-valued controls would produce $\mathbf 0\in R$. In a
\Rooted relation, the definition explicitly chooses a unit witness set to
$1$. It keeps this witness, removes all competing unit witnesses, and then
permits any subset of the remaining $1$-valued controls to be deleted. Call
a relation $R\in\Gamma_+$ \emph{accessible} if every tuple in $R$ has a $1$
in some unit-witness coordinate, and \emph{inaccessible} otherwise. Thus
every \Direct relation and every \Rooted relation is accessible.

Neither rule implies the other, and the \epas{} side requires the same rule
for all relations in $\Gamma_+$. The relation
\[
  R_{\mathrm{dir}}:=\{\beta\in\{0,1\}^3:
    (\beta_1\vee\beta_2)\wedge(\beta_3\to\beta_1\wedge\beta_2)\}
  =\{(1,0,0),(0,1,0),(1,1,0),(1,1,1)\}
\]
is \Direct but not \Rooted. Its unit witnesses are coordinates $1$ and $2$,
and its only control is coordinate $3$. It is \Direct because $(1,1,1)$ is
its only tuple with $\beta_3=1$, and setting $\beta_3$ to $0$ gives
$(1,1,0)\in R_{\mathrm{dir}}$. It is not \Rooted because of $(1,1,1)$:
keeping coordinate $1$ and setting coordinate $2$ to $0$ gives
$(1,0,1)\notin R_{\mathrm{dir}}$, and symmetrically
$(0,1,1)\notin R_{\mathrm{dir}}$. The relation
\[
\begin{aligned}
  R_{\mathrm{root}}&:=\{\beta\in\{0,1\}^3:
    (\beta_1\vee\beta_2)\wedge(\beta_1\wedge\beta_2\to\beta_3)\}\\
  &=\{(1,0,0),(0,1,0),(1,0,1),(0,1,1),(1,1,1)\}
\end{aligned}
\]
is \Rooted but not \Direct, with the same unit witnesses and control. It is
\Rooted because every tuple with exactly one of $\beta_1,\beta_2$ equal to
$1$ lies in $R_{\mathrm{root}}$, and for every $\beta\in R_{\mathrm{root}}$,
keeping a unit witness $u$ with $\beta_u=1$ and setting the other unit
witness to $0$ produces such a tuple, whatever value the control then
takes. It is not \Direct because setting the control of $(1,1,1)$ to $0$
gives $(1,1,0)\notin R_{\mathrm{root}}$. By
Theorem~\ref{thm:arityd-complete-dichotomy}, \maxwtcsp{}$(\{R_{\mathrm{dir}}\})$
and \maxwtcsp{}$(\{R_{\mathrm{root}}\})$ each admit an \epas, whereas
\maxwtcsp{}$(\{R_{\mathrm{dir}},R_{\mathrm{root}}\})$ admits none unless
$\fpt{}=\wone{}$.

The conditions are decidable by finite inspection. For a relation of
arity $r$, testing \Direct takes $O(r2^r)$ membership queries and testing
\Rooted takes $O(r4^r)$. Thus $O(|\Gamma|\,d\,4^d)$ queries determine the
case of the theorem; Section~\ref{sec:arityd-complete-distinct} gives the tests.

\paragraph*{The arity-two boundary.} For languages whose relations have arity at most two,
Theorem~\ref{thm:arityd-complete-dichotomy} depends on a single relation,
$R_{\mathrm{AND}}:=\{(1,1)\}$. It has no unit witness, so it is inaccessible
and hence neither \Direct nor \Rooted. Every other nonempty zero-free
relation of arity at most two is both \Direct and \Rooted
(Proposition~\ref{prop:arityd-binary-specialization}).

\begin{corollary}[Arity-two boundary]
\label{cor:arity-two}
Let $\Gamma$ be a fixed finite Boolean constraint
language of arity at most two.
For \maxwtcsp{}$(\Gamma)$ parameterized by the budget $k$:
\begin{enumerate}
\item If $R_{\mathrm{AND}}\notin\Gamma$, then there is
      a deterministic \epas{} that, for every rational
      $\varepsilon\in(0,1)$, runs in time $2^{O(k\log(1+k/\varepsilon))}
        \ell_{\mathrm{in}}^{O(1)}.$
      
\item If $R_{\mathrm{AND}}\in\Gamma$, then there is no
      \epas{} unless $\fpt{}=\wone{}$.
\end{enumerate}
\end{corollary}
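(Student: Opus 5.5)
The plan is to derive Corollary~\ref{cor:arity-two} from Theorem~\ref{thm:arityd-complete-dichotomy} with $d=2$, after classifying all nonempty zero-free relations of arity at most two. Since $d=2$ is a constant, the factor $2^{O(d)}$ in part~1 of the theorem is absorbed, and $\log(1+dk/\varepsilon)=O(\log(1+k/\varepsilon))$, which gives the running time stated in the corollary.

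The classification, which is the content of Proposition~\ref{prop:arityd-binary-specialization}, is a finite check.

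\emph{Arity one.} The only zero-free nonempty relation is $\{(1)\}$. Its single coordinate is a unit witness, so the relation has no controls and is trivially \Direct. It is also \Rooted: take $u=1$, and note there are no other witnesses or controls to change.

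\emph{Arity two.} The zero-free nonempty relations are the subsets of $\{(1,0),(0,1),(1,1)\}$. We split by the number of unit witnesses.
\begin{itemize}
\item If both coordinates are unit witnesses, there are no controls, so the relation is \Direct. For \Rooted, take any tuple and keep one witness set to $1$ while zeroing the other; the result is some $e_i\in R$.
\item If exactly one coordinate, say coordinate $1$, is a unit witness, the relation is $\{(1,0)\}$ or $\{(1,0),(1,1)\}$. In both cases every tuple has $\alpha_1=1$. Zeroing the control gives $(1,0)\in R$, so the relation is \Direct. Choosing $u=1$ and optionally zeroing the control yields tuples of $R$, so it is also \Rooted.
\item If neither coordinate is a unit witness, the relation is $\{(1,1)\}=R_{\mathrm{AND}}$.
\end{itemize}

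$R_{\mathrm{AND}}$ is neither \Direct nor \Rooted. For \Direct, zeroing a control of $(1,1)$ gives $e_i\notin R$. For \Rooted, the tuple $(1,1)$ has no unit-witness coordinate at all.

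Part~1 of the corollary then follows. If $R_{\mathrm{AND}}\notin\Gamma$, every relation in $\Gamma_+$ is \Direct, so part~1 of the theorem applies (part~2 would work equally well).

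Part~2 of the corollary is the only step that is not immediate. When $R_{\mathrm{AND}}\in\Gamma$, we have $R_{\mathrm{AND}}\in\Gamma_+$ and it is neither \Direct nor \Rooted, so $\Gamma_+$ fails both conditions and part~3 of the theorem gives the lower bound. This holds even if every other relation in $\Gamma$ is both \Direct and \Rooted.

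The main subtlety is that we must not invoke a stronger hardness than the theorem provides. Part~3 only needs that some relation violates \Direct and some relation violates \Rooted, and here $R_{\mathrm{AND}}$ does both. As a sanity check that matches the intended reduction, $R_{\mathrm{AND}}$ alone already encodes \textsc{Densest $k$-Subgraph}, which is the known source of this kind of \wone-hardness.
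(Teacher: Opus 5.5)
Your proposal is correct and follows the paper's proof: you reprove Proposition~\ref{prop:arityd-binary-specialization}, using a case split on the number of unit witnesses, and then apply the corresponding cases of Theorem~\ref{thm:arityd-complete-dichotomy} with $d=2$. The only difference is that for part~1 you invoke the \Direct{} case and absorb its $2^{O(d)}$ factor using $d=2$, whereas the paper invokes the \Rooted{} case, whose bound has no such factor.
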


\subsection{Problems covered by the framework}
\label{sec:intro-special-cases}
We briefly survey several special cases of our framework and the known
approximation results for these problems.
In the problems below, $k$ bounds the number of selected vertices, sets, or
true variables. All our \epas{} consequences use the \Rooted case of
Theorem~\ref{thm:arityd-complete-dichotomy} and its uniform version: they
are deterministic, allow nonnegative rational weights, and run in time
$(1+dk/\varepsilon)^{O(k)}\ell_{\mathrm{in}}^{O(1)}$, parameterized by $k+d$.
The relations below have polynomial-time membership tests.
For graphs, $d=2$. 

\noindent\textbf{Partial Vertex Cover.}\enspace
The goal is to select at most $k$ vertices maximizing the weight of edges
with a selected endpoint. Each vertex is a variable, and each edge gives
$R_{\mathrm{OR}}=\{(0,1),(1,0),(1,1)\}$. Both coordinates are unit
witnesses, so the relation is both \Direct and \Rooted. Thus
Theorem~\ref{thm:arityd-complete-dichotomy} recovers an \epas{} for this
problem. Marx gave an \epas{}~\cite{DBLP:journals/cj/Marx08}, and Manurangsi
later obtained the faster bound $(1/\varepsilon)^{O(k)}$ for the weighted
version~\cite{DBLP:conf/soda/Manurangsi19}.

\noindent\textbf{\maxcoverage{}.}\enspace
Viewing each vertex as the set of its incident edges makes
\textsc{Partial Vertex Cover} a coverage problem in which every element
belongs to two sets. Allowing arbitrary set systems gives \maxcoverage:
select at most $k$ sets maximizing the weight of covered elements. An
element belonging to $r$ sets gives the relation
$\mathrm{OR}_r:=\{\beta\in\{0,1\}^r:|\beta|\ge1\}$. This relation is both
\Direct and \Rooted, so frequency at most $d$ gives an \epas, recovering
the unweighted result of Skowron and
Faliszewski~\cite{DBLP:conf/aaai/SkowronF15}. For the weighted problem,
Sellier's scheme gives the faster bound
$(d/\varepsilon)^{O(k)}$~\cite{DBLP:conf/esa/Sellier23}. Without a frequency bound,
the polynomial-time $(1-1/e)$-approximation~\cite{DBLP:journals/mp/NemhauserWF78}
is optimal unless $\Pclass{}=\NPclass{}$~\cite{DBLP:journals/jacm/Feige98}.
Under \gapeth{}, for every fixed $\delta>0$, no
$(1-1/e+\delta)$-approximation runs in \fpt{} time parameterized by
$k$~\cite{DBLP:conf/icalp/Cohen-AddadG0LL19}; under randomized \gapeth{},
none runs even in time $f(k)\ell_{\mathrm{in}}^{o(k)}$ for any computable
$f$~\cite{Manurangsi20}.

\noindent\textbf{Cardinality-constrained \maxsat{}.}\enspace
The OR constraints above are clauses with only positive literals.
Allowing negative literals gives cardinality-constrained \maxsat:
maximize the total weight of satisfied clauses with at most $k$
variables set to $1$. A clause with a negative literal gives a
zero-valid relation, while a clause with only positive literals
gives $\mathrm{OR}_r$. Thus, formulas with at most $d$ variables
per clause admit an \epas.

More generally, Jain et al.\ gave an \epas{} for unweighted formulas
whose variable-clause incidence graph is $K_{t,t}$-free for fixed $t$
\cite{jain_parameterized_2023}. Faster randomized and deterministic
schemes were subsequently obtained independently by Inamdar et al.\
and Manurangsi
\cite{DBLP:conf/icalp/00020LS0U24,DBLP:journals/tcs/Manurangsi25}.

Biclique-freeness can be further weakened to bounded semi-ladder
index: Kundu et al.\ give deterministic schemes for weighted
\maxcoverage{} and cardinality-constrained \maxsat{} in this
setting~\cite{KunduSSU26}. More generally still, bounded
VC-dimension suffices for \maxcoverage{}
\cite{DBLP:conf/compgeom/BadanidiyuruKL12}
and unweighted cardinality-constrained \maxsat{}
\cite{KunduSSU26}.

For unrestricted formulas, a polynomial-time $(1-1/e)$-approximation
is known~\cite{Sviridenko01}, and the coverage lower bounds above
also apply.

\noindent\textbf{Balanced Set Cover.}\enspace
One can also reward leaving some elements uncovered. In a
cardinality-constrained, weighted version of \textsc{Balanced Set
Cover}~\cite[Definition~4.6]{FeldmanGKP09}, select at most $k$ sets maximizing covered blue
weight plus uncovered red weight. Blue elements give OR constraints, and a
red element in $r$ sets gives the zero-valid relation $\{\mathbf 0_r\}$.
Thus frequency at most $d$ gives an \epas.

\noindent\textbf{Maximum Unique Coverage.}\enspace
Changing the coverage rule to reward only elements contained in exactly
one selected set gives \textsc{Max Unique
Coverage}~\cite{DemaineFHS08,MisraRSS09}. Each element gives
$\mathrm{EO}_r:=\{\beta\in\{0,1\}^r:|\beta|=1\}$. All coordinates are unit
witnesses, so these relations are both \Direct and \Rooted. Frequency at
most $d$ therefore gives an \epas. For unweighted instances, the scheme of
Cervenjak et al.~\cite{CervenjakGUW24} gives
$(d\log(d+1)/\varepsilon)^{O(k)}$ time by enumerating subsets of its
approximate kernel.

\noindent\textbf{Graph and hypergraph Max Cut.}\enspace
For a graph $G=(V,E)$ and $S\subseteq V$, an edge crosses the cut $(S,V\setminus S)$
if and only if it has exactly one endpoint in $S$. Thus selecting
at most $k$ vertices to maximize the total weight of crossing edges
gives the cardinality-constrained version of
\textsc{Max Cut}~\cite{MisraRSS09}, with each edge represented by
$R_{\mathrm{XOR}}:=\mathrm{EO}_2=\{(0,1),(1,0)\}$.
For unweighted graphs with the requirement $|S|=k$, Bonnet et al.\
gave an \epas{}~\cite{DBLP:journals/algorithmica/BonnetEPT15},
and Fomin et al.\ later improved the running time to
$(1/\varepsilon)^{O(k)}$~\cite{FominGIK24}.
These schemes also apply to the version with $|S|\le k$:
after adding $k$ isolated vertices, any set of at most $k$
original vertices can be extended to a set of exactly $k$ vertices
without changing the cut value. Conversely, removing the added
vertices from a solution preserves its cut value.

In a hypergraph with vertex set $V$ and $S\subseteq V$, a hyperedge
crosses $(S,V\setminus S)$ if and only if it contains a vertex of $S$
and a vertex of $V\setminus S$.
A hyperedge of size $r\ge2$ is therefore represented by
$\mathrm{NAE}_r:=\{\beta\in\{0,1\}^r:1\le|\beta|<r\}$.
Every coordinate is a unit witness, so each $\mathrm{NAE}_r$
is both \Direct and \Rooted.
By Theorem~\ref{thm:arityd-complete-dichotomy}, weighted hypergraph
\textsc{Max Cut} with at most $k$ vertices on one side and hyperedges
of size at most $d$ admits an \epas{} parameterized by $k+d$.

\noindent\textbf{Threshold coverage.}\enspace
Raising the coverage requirement from one selected set to at least $t$
selected sets gives threshold coverage. An element in $r$ sets gives the
relation $\mathrm{TH}_{t,r}:=\{\beta\in\{0,1\}^r:|\beta|\ge t\}$.
For $2\le t\le r\le d$, this relation has no unit witness and is inaccessible.
Hence threshold coverage with frequency at most $d$ and any fixed
threshold $2\le t\le d$ admits no \epas{} unless
$\fpt{}=\wone{}$.

\subsection{Related CSP classifications}
\label{sec:intro-related}

Several earlier classifications of Boolean CSPs concern different
objectives. Creignou~\cite{Creignou95} classified the polynomial-time
approximability of \maxcsp{}. Khanna, Sudan, Trevisan, and
Williamson~\cite{KhannaSTW01} classified \minones{}, \maxones{}, and
\mincsp{}.
\maxcsp{} maximizes the weight of satisfied constraints without limiting
the number of ones. \minones{} minimizes the number of ones while
satisfying every constraint. Schaefer's satisfiability
classification~\cite{10.1145/800133.804350} and later classifications
with cardinality constraints~\cite{CreignouSS08} also require every
constraint to be satisfied. Our problem combines the \maxcsp{} objective
with a budget of at most $k$ ones, allowing some constraints to remain
unsatisfied.

Bonnet, Egri, Lin, and Marx~\cite{BonnetELM18} gave a parameterized
approximation classification for \mincsp{}. They determined which finite
Boolean languages admit a constant-factor approximation in
fixed-parameter time. \mincsp{} minimizes the number of unsatisfied
constraints, and its parameter bounds that number. Here, the parameter
instead bounds the number of ones. The approximation guarantee also
differs: they study constant-factor approximation, while we seek an
\epas{}, which allows arbitrarily small error. Their classification
therefore does not determine which languages admit a scheme for our
problem.

The distinction between exactly $k$ ones and at most $k$ ones is also
important. Marx~\cite{DBLP:journals/cc/Marx05} showed that satisfying all
constraints with exactly $k$ ones is either fixed-parameter tractable or
$\wone{}$-complete, depending on the Boolean language.
Bulatov and Marx~\cite{BulatovM11} studied related questions over finite
domains. Under our upper bound of $k$ ones, however, satisfying all
constraints is fixed-parameter tractable for every fixed
arity~\cite[Section~11, p.~181]{DBLP:journals/cc/Marx05}.
Our classification addresses the remaining optimization question:
which languages admit an \epas{} for maximizing the satisfied weight
when not every constraint can be satisfied within the budget?

\paragraph*{Organization.}
Section~\ref{sec:preliminaries} gives the notation and preliminaries,
and Section~\ref{sec:overview} explains the main proof ideas.
Section~\ref{sec:arityd-complete-distinct} develops the classification
of zero-free relations and the common ingredients for the algorithms.
Sections~\ref{ss:arityd-rooted} and~\ref{arityd-direct-sec-main}
give the \Rooted and \Direct algorithms, respectively.
Section~\ref{arityd-dist-hard-subsec-necessity} proves hardness.
Section~\ref{sec:classification-completion} completes the classification
and proves the arity-two boundary.
Section~\ref{sec:conclusion} concludes with open questions.

\section{Notation and Preliminaries}
\label{sec:preliminaries}

Throughout this paper, $X$ denotes a finite set of Boolean variables.

\paragraph*{Assignments and Hamming weight.}
A \emph{truth assignment} (or simply an assignment) is a function $\tau \colon X \to \{0,1\}$. The \emph{Hamming weight} of an assignment $\tau$, denoted $\wt(\tau) := |\tau^{-1}(1)|$, is the number of variables set to $1$. We denote the all-zero assignment by $\mathbf{0}$. For a subset of variables $S \subseteq X$, its indicator assignment is denoted by $\tau_S$, where $\tau_S(x) = 1$ if $x \in S$ and $0$ otherwise. 

A \emph{partial assignment} is a function $\rho\colon Y\to\{0,1\}$ for
some $Y\subseteq X$. If $\rho_1\colon Y_1\to\{0,1\}$ and
$\rho_2\colon Y_2\to\{0,1\}$ are partial assignments with
$Y_1\cap Y_2=\varnothing$, then $\rho_1\sqcup\rho_2\colon Y_1\cup Y_2\to\{0,1\}$
is the partial assignment that agrees with $\rho_1$ on $Y_1$ and with
$\rho_2$ on $Y_2$.

\begin{definition}[Boolean constraints and constraint languages]
For a positive integer $r$, let $[r]:=\{1,\ldots,r\}$.  A \emph{Boolean
relation} of arity $r$ is a set $R\subseteq\{0,1\}^r$, and a
\emph{Boolean constraint language} $\Gamma$ is a finite set of Boolean
relations.

A constraint $C$ over $X$ is specified by a relation
$R_C\subseteq\{0,1\}^{r_C}$ and an ordered tuple
$(x_1,\ldots,x_{r_C})$ of pairwise distinct variables from $X$.
This tuple is the \emph{scope tuple} of $C$, and
$\scope(C):=\{x_1,\ldots,x_{r_C}\}$ is its
\emph{underlying scope set}.
The same relation and scope tuple may occur more than once;
each occurrence is treated as a separate constraint.

For an assignment $\tau$, write
\[
  \tau[C]:=(\tau(x_1),\ldots,\tau(x_{r_C})).
\]
We say that $\tau$ satisfies $C$, written $\tau\models C$,
if $\tau[C]\in R_C$.
A constraint $C$ is a $\Gamma$-constraint if $R_C\in\Gamma$.
We also write $\iota_C(i):=x_i$ for $i\in[r_C]$.
\end{definition}

The feasibility problem $\operatorname{CSP}(\Gamma)$ asks, given a
finite variable set $X$ and a finite collection $\mathcal C$ of
$\Gamma$-constraints over $X$, whether there is an assignment
satisfying every constraint.
We instead maximize the weight of the satisfied constraints subject to a
Hamming-weight budget.

\begin{definition}[\maxwtcsp{}$(\Gamma)$]
For a fixed Boolean constraint language $\Gamma$, an instance is a tuple
$\mathcal I=(X,\mathcal C,w,k)$, where $X$ is a finite set of Boolean
variables, $\mathcal C$ is a finite collection of $\Gamma$-constraints
over $X$, $w:\mathcal C\to\mathbb Q_{\ge0}$ assigns a nonnegative rational
weight to each constraint, and $k\in\mathbb Z_{\ge0}$ is the
Hamming-weight budget.  A feasible solution is an assignment $\tau:X\to
\{0,1\}$ with $\wt(\tau)\le k$.  Its value and the optimum are
\[
  F^{\mathcal I}(\tau)
    :=\sum_{\substack{C\in\mathcal C:\ \tau\models C}}w(C),
  \qquad
  \OPT(\mathcal I):=\max_{\wt(\tau)\le k}F^{\mathcal I}(\tau).
\]
The task is to find a feasible assignment of maximum value.
\end{definition}

When the instance is clear, we write $F(\tau)$ instead of
$F^{\mathcal I}(\tau)$.
The \emph{support} of $\tau$ is the set
$\{x\in X:\tau(x)=1\}$.
For $S\subseteq X$, we write $F(S):=F(\tau_S)$.
Thus a feasible solution can be represented by a set $S\subseteq X$
with $|S|\le k$.

An instance is \emph{unweighted} if $w(C)=1$ for every
$C\in\mathcal C$.
It has \emph{arity at most $d$} if $r_C\le d$ for every
$C\in\mathcal C$.
We omit $(\Gamma)$ from the problem name when the constraint language
is clear from context or when arbitrary Boolean relations are allowed.

\paragraph*{Feasible completions and residual instances.}
Let $\mathcal I=(X,\mathcal C,w,k)$ be an instance and let
$\rho\colon H\to\{0,1\}$ be a partial assignment with
$|\rho^{-1}(1)|\le k$.
A \emph{feasible completion} of $\rho$ is an assignment
$\sigma\colon X\setminus H\to\{0,1\}$ that selects at most
$k-|\rho^{-1}(1)|$ variables. Together, $\rho$ and $\sigma$
give the feasible assignment $\rho\sqcup\sigma$ to the original
instance, with value $F^{\mathcal I}(\rho\sqcup\sigma)$.

The \emph{residual instance} under $\rho$ is obtained by substituting
the values prescribed by $\rho$ into every constraint and keeping
the remaining variables in their original coordinate order.
The remaining budget is $k-|\rho^{-1}(1)|$.
Constraints that become always unsatisfied are discarded.
The weights of constraints that become always satisfied are kept
as a constant contribution to the objective.
The remaining constraints keep their original weights.
The resulting relations need not belong to $\Gamma$.

An algorithm can also require its output to extend $\rho$ while
keeping the original constraints unchanged. Our algorithms for
zero-free constraints use this form, because substituting fixed
values need not preserve zero-freeness or the repair rules.

\begin{remark}[Repeated variables within a constraint]
\label{rem:repeated-variables}
Allowing repeated variables can change whether an \epas{} exists.
Consider
\[
  R:=\{(1,0,0),(0,1,0),(1,0,1),(0,1,1),(1,1,1)\}.
\]
Its unit-witness coordinates are $1$ and $2$.
Every satisfying tuple has a $1$ in at least one of these coordinates.
Keeping one such coordinate at $1$ and setting the other to $0$
gives $(1,0,c)$ or $(0,1,c)$, both of which belong to $R$
for either value of $c$. Setting the third coordinate to $0$
also preserves membership. Thus $R$ is \Rooted, and
\maxwtcsp{}$(\{R\})$ admits an \epas{} by
Theorem~\ref{thm:arityd-complete-dichotomy}.

If repeated variables are allowed, a constraint with relation $R$
and scope tuple $(x,x,y)$ is satisfied exactly when both $x$ and $y$
are assigned $1$. It therefore expresses $R_{\mathrm{AND}}$.
Replacing each AND constraint in this way preserves its weight
and the number of variables assigned $1$.
Hence the version allowing repeated variables admits no \epas{}
unless $\fpt{}=\wone{}$.

For each constraint $C$, delete from its relation every tuple whose entries
disagree at positions occupied by the same variable. Then retain only the
first position occupied by each variable, obtaining a relation $R'_C$ on the
distinct variables of $C$. Different constraints using the same relation may
yield different $R'_C$. Let $\Gamma^{\mathrm{id}}$ be the set of all such
relations obtainable from constraints over $\Gamma$, allowing all possible
scope tuples. These replacements work in both directions and preserve weights,
budget, and assignment values. Theorem~\ref{thm:arityd-complete-dichotomy}
therefore gives an \epas{} if all nonempty zero-free relations in
$\Gamma^{\mathrm{id}}$ are \Direct, or all are \Rooted; otherwise, none exists
unless $\fpt{}=\wone{}$.
\end{remark}

\paragraph*{Multiplicity and instance notation.}
For $S\subseteq X$, its \emph{multiplicity} is the number of
constraints $C\in\mathcal C$ with $\scope(C)=S$.
Constraints are counted separately, regardless of their relations,
weights, or the order of variables in their scope tuples.
An instance has \emph{multiplicity one} if no two constraints have
the same underlying scope set.
The requirement that variables be distinct within each constraint
places no restriction on how many constraints may share a scope set.

For an instance $\mathcal I=(X,\mathcal C,w,k)$, let $n:=|X|$
be the number of variables and $m:=|\mathcal C|$ the number of
constraints.

\paragraph*{Approximation schemes.}
Let $d$ bound the arities of the relations in $\Gamma$.
An \epas{} for \maxwtcsp{}$(\Gamma)$, parameterized by $k+d$,
is an algorithm that, given an instance $\mathcal I$ and a rational
$\varepsilon\in(0,1)$, returns a feasible assignment $\tau$ satisfying
\[
  F^{\mathcal I}(\tau)\ge(1-\varepsilon)\OPT(\mathcal I)
\]
in time $f(k,d,\varepsilon)\ell_{\mathrm{in}}^{O(1)}$ for some computable
function $f$. The polynomial exponent is independent of
$k$, $d$, and $\varepsilon$.
For fixed $\Gamma$, the arity bound $d$ is constant, so this is
an \epas{} parameterized by $k$.

The \emph{decision version} of \maxwtcsp{}$(\Gamma)$ asks, given
an instance $\mathcal I$ and a rational threshold $t$, whether
there is a feasible assignment of value at least $t$,
equivalently, whether $\OPT(\mathcal I)\ge t$.

\section{Technical Overview}
\label{sec:overview}

We explain the proof of Theorem~\ref{thm:arityd-complete-dichotomy}
in four parts: the treatment of zero-valid constraints, the \Rooted
algorithm, the \Direct algorithm, and the hardness reduction.
The full proofs are in Sections~\ref{sec:arityd-complete-distinct}--\ref{sec:classification-completion}.
Selecting a variable means setting it to $1$; all other variables
are set to $0$. For $S\subseteq X$, write $\tau_S$ for the assignment
that selects exactly $S$, as in Section~\ref{sec:preliminaries}.
Throughout the overview, $\OPT$ is the optimum
of the original instance. Since weights are nonnegative, the value
of any feasible assignment on just the zero-free constraints is
also at most $\OPT$.

\smallskip
\noindent\textbf{Zero-valid lifting.}
A zero-valid constraint accepts the all-zero assignment; a zero-free
constraint rejects it. Our treatment of zero-valid constraints adapts
Manurangsi's degree-threshold idea for clauses with negative
literals~\cite{DBLP:journals/tcs/Manurangsi25}. We fix the variables
that can affect a large weight of these constraints and bound the
loss caused by the others.

If $k=0$, return the all-zero assignment. Otherwise fix $\xi\in(0,1)$,
eventually chosen as $\varepsilon/2$. Let $w_0$ be the total weight
of zero-valid constraints, and let $D_0(v)$ be the total weight
of those containing $v$. The all-zero assignment shows that $w_0\le\OPT$.
Call $v$ \emph{heavy} if $D_0(v)>\xi w_0/(2k)$, and let $H$
contain these variables. Each constraint contains at most $d$
variables, so $\sum_v D_0(v)\le dw_0$. If $H$ is nonempty, adding
the defining inequalities for all heavy variables gives
$|H|\xi w_0/(2k)<\sum_{v\in H}D_0(v)\le dw_0$.
Dividing by $\xi w_0/(2k)>0$ gives $|H|<2dk/\xi$.
The same bound holds when $H$ is empty; in particular, $w_0=0$
implies that there are no heavy variables.

Try every partial assignment $\rho:H\to\{0,1\}$ with
$|\rho^{-1}(1)|\le k$. For each $\rho$, run the appropriate zero-free
algorithm to obtain a feasible completion $\sigma$ on $X\setminus H$.
Evaluate the combined assignment $\rho\sqcup\sigma$ on all original constraints. There are at most
$(|H|+1)^k$ guesses, since each is determined by at most $k$ selected
variables. Keep the best answer.
The initial fixing imposes requirements on the returned assignment;
it does not substitute values into the Boolean relations, which
could destroy their repair properties.

Consider the guess agreeing with an optimal solution. That solution
and the returned assignment agree on $H$ and each select at most
$k$ variables. Any constraint whose satisfaction changes contains
a variable on which the two assignments differ. Since they agree on $H$,
every such variable is nonheavy. Moving from the optimal assignment
to the returned assignment can change at most $k$ variables from
$1$ to $0$ and at most $k$ variables from $0$ to $1$.
Thus at most $2k$ variables change. Each belongs to zero-valid
constraints of total weight at most $\xi w_0/(2k)$, so the loss
is at most $2k\cdot\xi w_0/(2k)=\xi w_0\le\xi\OPT$.
The algorithms below lose at most $\varepsilon\OPT/2$ on the
zero-free constraints compared with any feasible assignment that
agrees with the guess. Taking $\xi=\varepsilon/2$ therefore gives
total loss at most $\varepsilon\OPT$.

\smallskip
\noindent\textbf{The \Rooted case.}
The \Rooted algorithm builds on the branching and exchange
ideas of Jain et al.~\cite{jain_parameterized_2023}.
The scores $m_K$ and the bounds on the loss from adding
variables extend these ideas to general \Rooted relations,
where adding a variable can destroy satisfaction.

Assume every nonempty zero-free relation is \Rooted.
For a constraint $C$, let $U_C$ contain its unit-witness variables:
selecting any one of them alone satisfies $C$. Let $N_C$ contain
its remaining variables, called controls. These roles depend on
$C$. The \Rooted rule says that, from every satisfying assignment
to $C$, we can keep one of its selected unit witnesses at $1$ and set
all other unit witnesses to $0$. The constraint remains satisfied even
if we also set any subset of its selected controls to $0$.

\smallskip
\noindent\emph{Scores and the graph.}
Write $F(S)$ for the weight of zero-free constraints satisfied by
selecting $S$. The \emph{single-variable score} $M(v)=F(\{v\})$
is the weight satisfied by selecting $v$ alone, with every other
variable set to $0$. It stays fixed throughout the search.

The search maintains a set $K$ of variables required to be selected,
initially $K=\rho^{-1}(1)$. A candidate is a variable $v$ outside
$K$ that has not been fixed to $0$. Its \emph{candidate score at $K$},
$m_K(v)$, is the total weight of constraints that would be satisfied
with $v$ as their only selected unit witness if we added $v$ to $K$.
For each $C$ with $v\in U_C$ and $K\cap U_C=\varnothing$,
count $w(C)$ if $\tau_{K\cup\{v\}}\models C$.
Every constraint counted by $m_K(v)$ is also counted by
$M(v)$, so $0\le m_K(v)\le M(v)$.

The candidate score need not equal the actual change
$F(K\cup\{v\})-F(K)$: it counts only these constraints
and does not subtract any weight lost by adding $v$.

We use a directed graph to record pairs of variables that could
cause large losses or make us count the same constraint twice.
It has one vertex for each variable.
Fix a parameter $\theta>0$, to be chosen below. For distinct
variables $v,x$, let $I(v\to x)$ be the total weight of all
constraints counted by $M(v)$ that also contain $x$.
Draw an arrow $v\to x$ exactly when
$I(v\to x)>\theta M(v)$, and give it weight $I(v\to x)$.
Each sum is defined even when no arrow is drawn. The sums and the graph are computed once before the search.

Suppose we add two distinct candidates $v$ and $x$ to $K$.
Among the constraints counted by $m_K(v)$, only those
involving $x$ can also be counted by $m_K(x)$ or become
unsatisfied when $x$ is added after $v$.
These constraints are also counted by $M(v)$, so their
total weight is at most $I(v\to x)$.
We call these arrows \emph{heavy interactions}.
They need not be symmetric: $v\to x$ does not necessarily
imply $x\to v$.

Each constraint counted by $M(v)$ contributes to at most $d-1$
sums $I(v\to x)$, so
\[
 \sum_{x\ne v}I(v\to x)\le(d-1)M(v).
\]
Hence at most $d_{\max}:=\lceil(d-1)/\theta\rceil$ arrows leave
any variable. If $M(v)=0$, all these sums are zero and no arrow
leaves $v$. A variable may still receive many arrows.

\smallskip
\noindent\emph{The search.}
Set $\ell=k(3d_{\max}+1)$.
At state $K$, let $T$ contain the $\ell$ candidates with largest
positive candidate scores $m_K(v)$, or all such candidates if fewer
exist. Let $E$ contain candidates receiving an arrow from a variable
in $K$. There are at most
$|K|d_{\max}\le kd_{\max}$ such candidates. Record $K$ as a
possible answer.

When $|T|=\ell$, remove $E$ from $T$.
Since $|E|\le kd_{\max}$, at least
$\ell-kd_{\max}=k(2d_{\max}+1)$ candidates remain.
Form an undirected graph on $T\setminus E$, joining two
candidates whenever there is an arrow between them in
either direction. We seek an \emph{independent set} of
at least $k$ candidates, meaning that no two are joined
by an edge.

For any set $W$ of vertices in this graph, there are at most
$d_{\max}|W|$ edges between its vertices, since each vertex
has at most $d_{\max}$ outgoing arrows.
Thus every nonempty induced subgraph has average degree
at most $2d_{\max}$ and contains a vertex with at most
$2d_{\max}$ neighbors. The standard coloring
algorithm~\cite{Diestel2000,MatulaBeck1983} therefore uses
at most $2d_{\max}+1$ colors. Its largest color class
gives an independent set $Q$ of size at least
$|T\setminus E|/(2d_{\max}+1)\ge k$.
This takes deterministic polynomial time.

Since $Q\subseteq T\setminus E$, no arrow goes from $K$
to $Q$, and independence excludes arrows between
candidates in $Q$. By the definition of the arrows,
\[
 I(u\to v)\le\theta M(u)
 \qquad (u\in K\cup Q,\ v\in Q,\ u\ne v).
\]
We use this bound below to control losses and repeated
counting when adding the replacements.

Choose $k$ candidates from the independent set and order
them as $q_1,\ldots,q_k$. At the current call, record
$K\cup\{q_1,\ldots,q_j\}$ for every $0\le j\le k-|K|$.
The case $j=0$ records $K$ itself.

If $|T|<\ell$, the only solution recorded directly at
the current state is $K$.
In either case, if $|K|<k$, the algorithm repeats the
entire procedure from $K\cup\{v\}$ for every $v\in T\cup E$.
Each call recomputes its candidate scores and sets and,
whenever its top set has size $\ell$, constructs its own
independent set and records the corresponding solutions.
The single-variable scores $M(x)$ and the directed graph
are computed once before the search.
For the fixed initial guess, the algorithm returns the
solution with the largest $F$-value recorded anywhere
in the search.

\smallskip
\noindent\emph{Why the search works.}
To see why the search suffices, fix a feasible solution $S$
agreeing with the fixed initial guess.
The algorithm does not know $S$; we use it only in the proof.
The initial set $K$ satisfies $K\subseteq S$.
Whenever $T\cup E$ contains a variable of $S\setminus K$,
follow a recursive call that adds such a variable.
Each call preserves $K\subseteq S$ and increases $|K|$.

Eventually either $K=S$, in which case $S$ has been recorded,
or $(S\setminus K)\cap(T\cup E)=\varnothing$.
It remains to consider the latter case.
In this case, every variable of $S\setminus K$ lies outside
$T$ and receives no arrow from $K$.

We first compare $F(S)$ with $F(K)$ by considering the
constraints that become unsatisfied when we remove
$S\setminus K$. We show that each such constraint contributes its weight
to $m_K(v)$ for some $v\in S\setminus K$, or to
$I(u\to x)$ for some $u\in K$ and $x\in S\setminus K$.

Consider such a constraint $C$. If $K$ contains no unit
witness of $C$, the \Rooted rule lets us keep one unit
witness $v\in S\setminus K$ and exactly the controls
selected by $K$. Thus $\tau_{K\cup\{v\}}$ satisfies $C$, with
$v$ as its only selected unit witness, so $w(C)$ is
counted by $m_K(v)$.

Otherwise, choose a unit witness $u\in K$ of $C$.
Selecting $u$ alone satisfies $C$, so $w(C)$ is counted
by $M(u)$. Since removing $S\setminus K$ makes $C$
unsatisfied, $C$ must contain some variable
$x\in S\setminus K$. Hence $w(C)$ is counted by
$I(u\to x)$. There is no arrow from $K$ to
$S\setminus K$, so $I(u\to x)\le\theta M(u)$.
For each $u\in K$, there are $|S\setminus K|$ choices
of $x$, and each term $I(u\to x)$ is at most $\theta M(u)$.
Thus the total weight of constraints in this case is at most
\[
 \sum_{u\in K}\sum_{x\in S\setminus K} I(u\to x)
 \le \theta|S\setminus K|\sum_{u\in K}M(u).
\]

Every constraint lost when passing from $S$ to $K$
is counted in one of these two cases. Counting a
constraint more than once only increases the bound.
Therefore,
\[
 F(S)-F(K)
 \le \sum_{v\in S\setminus K}m_K(v)
 +\theta|S\setminus K|\sum_{u\in K}M(u).
\]

If $T$ is not full, it contains every candidate with $m_K(v)>0$.
Every $v\in S\setminus K$ therefore has $m_K(v)=0$,
and returning $K$ loses at most
$\theta k\sum_{u\in K}M(u)$.

If $T$ is full, put $s=|S\setminus K|$ and let $Q_s$ be the
first $s$ candidates in the chosen independent set.
Every candidate in $T$ has $m_K$-score at least as large
as that of every variable in $S\setminus K$: $T$ contains
the largest positive $m_K$-scores, and $S\setminus K$ lies
outside $T$. Thus $Q_s$ has total candidate score at least
that of the variables in $S\setminus K$.
Since $s\le k-|K|$, trying every $j$ includes $j=s$.
The algorithm therefore records $A=K\cup Q_s$, even though
it does not know $S$.

We compare the gain $F(A)-F(K)$ with
$\sum_{v\in Q_s}m_K(v)$, the sum of the candidate scores
before any replacements are added.
To do this, consider adding the candidates in $Q_s$
one at a time. The actual gain can be smaller than
the sum of their scores for two reasons.

First, consider a constraint $C$ counted by $m_K(v)$.
It is unsatisfied by $\tau_K$ and satisfied by $\tau_{K\cup\{v\}}$.
If $C$ contains no candidate from $Q_s$ added before $v$,
the preceding additions do not change any variable of $C$,
so adding $v$ gives a new gain of $w(C)$.

Thus only constraints containing a candidate $x\in Q_s$
added before $v$ can fail to give the gain counted by
$m_K(v)$.
Each such constraint is also counted by $M(v)$, so its
weight is included in $I(v\to x)$.
Since no arrow joins two replacements,
$I(v\to x)\le\theta M(v)$.
Summing over at most $s$ candidates added before each
$v\in Q_s$, the total weight counted by the scores
that may fail to give a new gain is at most
$\theta s\sum_{v\in Q_s}M(v)$.

Second, consider a constraint $C'$ that is satisfied
before $v$ is added but becomes unsatisfied afterwards.
The \Rooted rule guarantees that, before the addition,
$C'$ has a selected unit witness $u$, either in $K$
or among the candidates from $Q_s$ added before $v$.
Selecting $u$ alone satisfies $C'$, so its weight is
counted by $M(u)$. Also, $C'$ must contain $v$, since
only the value of $v$ changes. Hence $w(C')$ is counted
by $I(u\to v)$.

There is no arrow from $K$ to $Q_s$ or between replacements,
so $I(u\to v)\le\theta M(u)$.
For each $u\in A$, there are at most $s$ later additions
of candidates $v$. Summing over these additions and
then over $u$ bounds the total weight lost by
$\theta s\sum_{u\in A}M(u)$.

The two bounds for adding $Q_s$ give
\[
 F(A)-F(K)\ge \sum_{v\in Q_s}m_K(v)
 -\theta s\sum_{v\in Q_s}M(v)
 -\theta s\sum_{z\in A}M(z).
\]
Since $F(A)-F(S)=[F(A)-F(K)]-[F(S)-F(K)]$,
we subtract the upper bound on $F(S)-F(K)$ from the
lower bound on $F(A)-F(K)$ to obtain
\[
 \begin{aligned}
 F(A)\ge{}& F(S)+\sum_{v\in Q_s}m_K(v)
                -\sum_{x\in S\setminus K}m_K(x)\\
          &-\theta s\left(
             \sum_{u\in K}M(u)+\sum_{v\in Q_s}M(v)
             +\sum_{z\in A}M(z)\right).
 \end{aligned}
\]
The difference between the two candidate-score sums
is nonnegative. Also, $K$ and $Q_s$ are disjoint and
together form $A$, so the first two sums of
single-variable scores add up to $\sum_{z\in A}M(z)$.
Therefore,
\[
 F(A)\ge F(S)-2\theta s\sum_{z\in A}M(z).
\]
We apply this bound only once along the proof's path,
when replacing $S\setminus K$ by $Q_s$.
Earlier calls on this path simply add variables of $S$ to $K$.

\smallskip
\noindent\emph{Bounding the error.}
For every $Q\subseteq X$ with $|Q|\le k$,
\begin{equation}
\label{eq:overview-score-bound}
 \sum_{v\in Q}M(v)\le c_d\OPT,
\end{equation}
where $c_1=1$ and $c_d=d(1-1/d)^{-(d-1)}\le ed$ for $d\ge2$.
For $d\ge2$, retain each variable of $Q$ independently with
probability $1/d$, obtaining $Q'$. For a constraint $C$ counted
by $M(v)$ with $v\in Q$, the event
$Q'\cap\scope(C)=\{v\}$ satisfies $C$ and has probability
\[
 \frac1d\left(1-\frac1d\right)^{|Q\cap\scope(C)|-1}
 \ge\frac1{c_d}.
\]
For each $C$, these events are disjoint. Hence
$\mathbb E[F(Q')]\ge\sum_{v\in Q}M(v)/c_d$.
Since $Q'$ is feasible, its expected value is at most $\OPT$.
For $d=1$, $\sum_{v\in Q}M(v)=F(Q)\le\OPT$.
This random choice only proves the bound; the search is deterministic.

Applying the bound to $A$ and using $s\le k$, the
replacement loss is at most $2\theta k c_d\OPT$.
Choosing $\theta=\varepsilon/(4kc_d)$ makes this
at most $\varepsilon\OPT/2$.
The same choice bounds the loss when $|T|<\ell$
by $\varepsilon\OPT/4$ and gives
$\ell=O(d^2k^2/\varepsilon)$.

Each recursive call adds one variable and has
$O(k(d_{\max}+1))$ children. The depth is at most $k$, and the
running time, including the initial guesses, is
$2^{O(k\log(1+dk/\varepsilon))}\ell_{\mathrm{in}}^{O(1)}$.
The common reduction supplies the remaining $\varepsilon\OPT/2$
loss allowance for zero-valid constraints.

\smallskip
\noindent\textbf{The \Direct case.}
Now assume every nonempty zero-free relation is \Direct.
Keep $F$, $M(v)$, and $U_C$ from above.
For distinct variables $x,y$, let $J(x,y)$ be the total
weight of constraints for which both $x,y\in U_C$.
Fix a threshold $\theta\in(0,1)$, to be chosen below.
Construct an undirected graph with one vertex per variable,
joining $x,y$ when $J(x,y)>0$ and
$J(x,y)\ge\theta\max\{M(x),M(y)\}$.
Since $\sum_{y\ne x}J(x,y)\le(d-1)M(x)$ and every
neighbor contributes at least $\theta M(x)$, the maximum
degree is at most $d_{\max}:=\lceil(d-1)/\theta\rceil$.
Variables with $M(x)=0$ are isolated.

Color each vertex independently green with probability
$1/(d_{\max}+1)$ and red otherwise. For a feasible solution
$S$ agreeing with the initial guesses, call a coloring
\emph{good} if every variable of $S$ is green and every
neighbor outside $S$ is red. At most $k$ variables must
be green and $kd_{\max}$ must be red, so a random coloring
is good with probability at least $[e(d_{\max}+1)]^{-k}$.

In a good coloring, every neighbor of $S$ outside $S$
is red. Hence every green connected component meeting
$S$ lies within $S$ and has at most $k$ variables.
We use a branching algorithm, described below, to choose
which components to select in full.

Before branching, process each sampled coloring as follows.
Set all red variables to $0$. Also set all variables in
a green component to $0$ if it has more than $k$ variables
or contains a variable fixed to $0$.
Skip the coloring if this assigns $0$ to a variable
fixed to $1$.

Select in full every remaining component containing
a variable fixed to $1$. If the total number of selected
variables exceeds $k$, skip the coloring. Otherwise,
subtract this number from $k$ to obtain the remaining
budget. The components still available for selection
contain no variables fixed by the initial guesses.

Each remaining component is an \emph{item}.
Write $\mathcal K_{\mathrm{free}}$ for this family of components.
Selecting an item $y$ selects every variable in that component
and has cost $\kappa(y):=|y|$.
Leaving it unselected sets all its variables to $0$ and costs zero.

\smallskip
\noindent\emph{Constraints on items.}
For each original zero-free constraint $C$ of positive weight,
define a constraint on the at most $d$ items whose components
meet $\scope(C)$.
An item assignment satisfies this new constraint exactly when
the corresponding variable assignment satisfies $C$ and all
selected unit witnesses of $C$ lie in exactly one component.
All variable values fixed earlier are retained, including
selected unit witnesses in components already fixed to $1$.
The new constraint has weight $w(C)$, and its scope is the set
of items it involves. Write $F_{\rm mask}$ for the total
satisfied weight of these item constraints.

The masking condition makes the repair rule survive the recursive
fixings: every nonempty zero-free relation obtained by fixing items
remains \Direct (Lemma~\ref{arityd-direct-lem-owner-hus}).
Indeed, in a satisfying assignment all selected original unit witnesses
lie in one component. Resetting another selected item changes only
controls, so satisfaction is preserved. If that component were already
fixed to $1$, resetting every remaining item would preserve satisfaction,
contradicting zero-freeness. It is therefore a remaining unit witness,
and any remaining control item can be reset safely.

\smallskip
\noindent\emph{The recursive search.}
Start with these item constraints and the remaining budget.
Write $b_0\le k$ for the initial budget and $b$ for the budget
at a recursive call. All previously fixed choices remain fixed.
If $b=0$ or no constraints remain, select no further items.
Otherwise, let $\mathcal Z$ contain the current constraints
satisfied when no further item is selected, and put
\[
 W:=\sum_{C\in\mathcal Z}w(C).
\]
The thresholds $\eta_1,\eta_2$ are chosen below and remain
fixed throughout the search.

Each branch fixes choices for some undecided items.
Subtract the cost of the items fixed to $1$ from $b$ and
restrict each current constraint to the fixed choices.
Remove constraints that become always satisfied or always
unsatisfied, since their contribution is the same for every
completion. Recurse on the remaining items and constraints
with the reduced budget.

If $W>0$, define the \emph{weighted degree} of an undecided
item $y$ within $\mathcal Z$ by
\[
 \deg_0(y):=
 \sum_{\substack{C\in\mathcal Z:\\y\in\scope(C)}}w(C).
\]
Let $\mathcal K_{\mathrm{hi}}$ contain the items with
$\deg_0(y)\ge\eta_1W/b$.
Each constraint involves at most $d$ items, so
\[
 \sum_y\deg_0(y)\le dW,
 \qquad
 |\mathcal K_{\mathrm{hi}}|\le\frac{db}{\eta_1}.
\]
Guess which items of $\mathcal K_{\mathrm{hi}}$ are selected
by creating one branch for every subset of total cost at
most $b$, including the empty subset.
Fix the items in that subset to $1$ and all other items of
$\mathcal K_{\mathrm{hi}}$ to $0$.
After restricting the constraints, also remove from the
recursive instance every remaining constraint originally in
$\mathcal Z$ that is still satisfied when no further item
is selected. The proof counts its weight before the remaining
choices are known; the total weight later lost is at most
$\eta_1W$, as shown below.

If $W=0$, no current constraint is satisfied without
selecting another item. For each undecided item $y$, let
$\mu(y)$ be its \emph{single-item value}: the total weight
satisfied by selecting $y$ alone among the undecided items,
with all fixed choices retained.
For distinct items $y,z$, let $I_{\rm item}(y\to z)$ be
the part of $\mu(y)$ contributed by constraints involving $z$.

For an item $y$ with $\mu(y)>0$, define its
\emph{interaction candidates} by
\[
 \mathsf{Bad}(y):=
 \{z\ne y:I_{\rm item}(y\to z)\ge(\eta_2/b)\mu(y)\}.
\]
Put $\mathsf{Bad}(y)=\varnothing$ when $\mu(y)=0$.
Every constraint counted by $\mu(y)$ involves at most
$d-1$ other items. Consequently,
\[
 \sum_{z\ne y}I_{\rm item}(y\to z)\le(d-1)\mu(y),
 \qquad
 |\mathsf{Bad}(y)|\le\frac{(d-1)b}{\eta_2}.
\]
Set $\ell:=\lfloor(d-1)b^2/\eta_2\rfloor+1$.
For each cost bound $c\in\{1,\ldots,b\}$, the
\emph{top candidates} $L_c$ are the $\ell$ items with
largest positive values $\mu(y)$ among those with
$\kappa(y)\le c$, or all such items if fewer exist.
Use a fixed item order to break ties.
The branch set is the union of the top candidates and their
interaction candidates:
\[
 \mathcal B:=
 \bigcup_{c=1}^b\left(
 L_c\cup\bigcup_{y\in L_c}\mathsf{Bad}(y)\right).
\]
In particular,
$|\mathcal B|\le b\ell(1+(d-1)b/\eta_2)$.
Create one branch selecting each $y\in\mathcal B$ with
$\kappa(y)\le b$, leaving all other undecided items available.
Also keep the choice of selecting no further items as a
candidate, without recursing on it.

For each branch, combine the recursive solution with the
choices fixed in that branch. Evaluate each resulting
assignment on all constraints of the current instance,
including those removed before recursion, and return an
assignment with largest value.

Also record $\rho^{-1}(1)$ as a feasible candidate, setting every
other variable to $0$. Repeat for $[3(d_{\max}+1)]^k$
independent colorings and return the recorded solution with
largest $F$-value, giving constant success probability.
For a deterministic version, use
Lemma~\ref{arityd-direct-lem-universal-isolation}.

More precisely, put $q=k(d_{\max}+1)$.
That lemma constructs at most $(q+1)^{O(k)}\log(n+2)$ colorings in
$(q+1)^{O(k)}\ell_{\mathrm{in}}^{O(1)}$ time, with a good coloring
for every feasible $S$.
A family of maps to $[q^2]$ assigns distinct labels to
$S$ and its external neighbors under some map.
For each map, enumerate the subsets of at most $k$ labels and color
exactly those labels green. Choosing the labels of $S$ makes all its
external neighbors red. Enumerating this family gives the deterministic
algorithm claimed in Theorem~\ref{thm:arityd-complete-dichotomy}.

\smallskip
\noindent\emph{Why the search works.}
Fix a feasible solution $S$. One of the initial guesses
agrees with $S$. Consider this guess and a coloring good
for $S$. Every green component meeting $S$ lies within
$S$, so preprocessing keeps it. Every component selected
in advance also lies in $S$. Thus the remaining items
of $S$ fit the remaining budget.

Write $F_{\rm mask}(S)$ for the value of the corresponding
item assignment. A satisfied original constraint is counted
unless its selected unit witnesses lie in different green
components. Assign each omitted constraint to one such
witness pair $x,y$. Since these vertices are not joined
by an edge, $J(x,y)\le\theta(M(x)+M(y))$.
Summing over pairs and using
$\sum_{x\in S}M(x)\le c_d\OPT$ from~\eqref{eq:overview-score-bound}
gives
\[
 F_{\rm mask}(S)\ge F(S)-\theta k c_d\OPT.
\]
We next bound the additional loss caused by the search.

Let $\sigma$ initially be the item assignment determined by
$S$. At each recursive call, it specifies the choices for
the remaining undecided items. We follow a branch agreeing
with these choices, changing $\sigma$ only when a replacement
is needed.

When $W>0$, follow the branch selecting exactly the items
of $\mathcal K_{\mathrm{hi}}$ selected by $\sigma$.
This fixes all items of large weighted degree consistently
with $\sigma$, and its remaining choices fit the reduced budget.
A removed constraint whose weight was counted before the
remaining choices were known can later become unsatisfied
only if some selected remaining item occurs in its scope.
Every remaining item is outside $\mathcal K_{\mathrm{hi}}$,
so any feasible completion loses weight at most
\[
 \sum_{\text{selected remaining items }y}\deg_0(y)
 \le b\cdot\frac{\eta_1W}{b}
 =\eta_1W.
\]
This bound applies to the completion returned by the recursive
call as well as to the comparison assignment.

When $W=0$, if $\sigma$ satisfies no current constraint,
selecting no further items loses no additional value.
Otherwise, if $\sigma$ selects an item of $\mathcal B$,
follow the branch selecting that item.

Suppose that $\sigma$ has positive current value and selects
no item of $\mathcal B$. Choose a current constraint satisfied
by $\sigma$, and let $C$ be the original constraint from
which it was constructed. The selected unit witnesses of
$C$ lie in a single component.

Keep this component selected and set all other undecided
items to $0$, leaving fixed choices unchanged. This changes
only control entries of $C$ from $1$ to $0$, so the \Direct
property preserves satisfaction of $C$. Its selected unit
witnesses remain in the same component, so the item constraint
also remains satisfied.

If this component were already fixed to $1$, selecting no
undecided items would satisfy the current constraint,
contradicting $W=0$. Thus the component is an undecided item
$u$ selected by $\sigma$. Selecting $u$ alone among the
undecided items satisfies this constraint, so $\mu(u)>0$.
Put $c:=\kappa(u)$.

Since $\mu(u)>0$ and $\kappa(u)=c$, item $u$ is eligible
for $L_c$. However, $u\notin L_c$, because
$L_c\subseteq\mathcal B$ and $\sigma$ selects no item of
$\mathcal B$. Thus $|L_c|=\ell$, and every item in $L_c$
costs at most $c$ and has single-item value at least $\mu(u)$.

Each selected item $z$ has at most $(d-1)b/\eta_2$
interaction candidates. Since $\sigma$ selects at most $b$ items,
the union of their interaction candidates has size at most
$(d-1)b^2/\eta_2<\ell$.
Choose $v\in L_c$ outside this union.
Then $\kappa(v)\le\kappa(u)$ and $\mu(v)\ge\mu(u)$.
Also, $v$ is unselected because $L_c\subseteq\mathcal B$.

For every selected item $z$, we have
$v\notin\mathsf{Bad}(z)$ by the choice of $v$.
We also have $z\notin\mathsf{Bad}(v)$, since
$\mathsf{Bad}(v)\subseteq\mathcal B$.
Therefore,
\[
 I_{\rm item}(z\to v)\le\frac{\eta_2}{b}\mu(z),
 \qquad
 I_{\rm item}(v\to z)\le\frac{\eta_2}{b}\mu(v).
\]
Selecting $v$ alone, or any selected item $z$ alone,
together with the previously fixed choices is feasible.
Each satisfied current constraint contributes the weight
of a satisfied original constraint. Hence
$\mu(v)\le\OPT$ and $\mu(z)\le\OPT$.
Summing over the at most $b$ selected items gives
\[
 \sum_{z:\sigma(z)=1}I_{\rm item}(z\to v)
 \le\eta_2\OPT,
 \qquad
 \sum_{z:\sigma(z)=1}I_{\rm item}(v\to z)
 \le\eta_2\OPT.
\]

Replace $u$ by $v$ in $\sigma$.
Since $\kappa(v)\le\kappa(u)$, the assignment remains feasible.
Lemma~\ref{arityd-direct-lem-item-calculus} shows that removing
$u$ loses at most $\mu(u)$, whereas adding $v$ gains at least
$\mu(v)$ minus the two interaction sums.
Since $\mu(v)\ge\mu(u)$, the replacement loses at most
$2\eta_2\OPT$. Follow the branch selecting the top candidate
$v$ with this modified assignment.

The value $W$ is also at most $\OPT$.
Each recursive step therefore introduces loss at most
$\max\{\eta_1,2\eta_2\}\OPT$.
A branch selecting an item spends at least one unit of budget.
The only branch that spends no budget fixes every item of
$\mathcal K_{\mathrm{hi}}$ to $0$.
All constraints previously in $\mathcal Z$ are then removed,
and every retained constraint rejects the choice of selecting
no further items. Thus a branch that spends no budget is
followed by a step that either spends budget or terminates.
There are at most $2b_0+1$ recursive steps along any path.
Combining the loss bounds gives a returned selected set $A$
satisfying
\[
 F(A)\ge F_{\rm mask}(S)
 -(2b_0+1)\max\{\eta_1,2\eta_2\}\OPT.
\]
Here we add the losses along a recursion path. The formal proof
of Theorem~\ref{arityd-direct-thm-item-hus} uses induction on
$\phi(\mathcal J)=2b+\mathbf1[W>0]$.

\smallskip
\noindent\emph{Error and running time.}
Set $\theta=\varepsilon/(4kc_d)$.
For each coloring, let $b_0$ be the initial remaining budget
of its item search and set
$\eta_1=\varepsilon/[4(2b_0+1)]$ and $\eta_2=\eta_1/2$.
Passing to item constraints omits at most $\varepsilon\OPT/4$,
and the search loses at most another $\varepsilon\OPT/4$.
Thus $F(A)\ge F(S)-\varepsilon\OPT/2$.
Together with the common reduction's loss, this gives
the required $(1-\varepsilon)$ approximation.

Choose $C=(1+dk/\varepsilon)^{O(1)}\ge2$ so that, for each
$r\ge1$, a call has at most $C^r$ children that reduce the
budget by $r$. There is at most one child that leaves the
budget unchanged, and its next step spends budget or terminates.
Counting children by $r$ gives at most $(4C)^k$ nodes.
Each item constraint involves at most $d$ items, so its
satisfying assignments can be listed using at most $2^d$
evaluations of the original constraint.
At each node and edge, restricting constraints, computing
degrees and scores, and forming candidate lists take
$k^{O(1)}2^d\ell_{\mathrm{in}}^{O(1)}$ time.
Including the colorings and initial guesses gives total time
$2^{O(d+k\log(1+dk/\varepsilon))}\ell_{\mathrm{in}}^{O(1)}$;
see Section~\ref{arityd-direct-sec-main}.

\smallskip
\noindent\textbf{Hardness: a reduction from Clique.}
Suppose $\Gamma_+$ contains a relation that fails \Direct and a relation
that fails \Rooted. They may be the same relation.
We reduce \clique{} to an unweighted instance $\mathcal I$ with budget
$k'$ and integer target $t\ge1$ satisfying
\[
 \OPT(\mathcal I)\le t,
 \qquad
 \OPT(\mathcal I)=t
 \iff G\text{ has a }k\text{-clique}.
\]
Both $k'$ and $t$ are bounded by functions of $k,d$.

\smallskip
\noindent\emph{An inaccessible relation.}
Choose a minimal satisfying support $A$ consisting only of controls,
and put $q=|A|\ge2$.
For each graph edge, place its endpoints in two coordinates of $A$,
use $q-2$ shared variables in the other coordinates of $A$, and use
fresh variables in every coordinate outside $A$.
Set $k'=k+q-2$ and $t=\binom{k}{2}$.
If all shared variables are selected, an edge constraint with no fresh
variable selected is satisfied exactly when both endpoints are selected.
With $s$ selected vertices and $r$ selected fresh variables,
$s+r\le k$ and the score is at most $\binom{s}{2}+r$.
For $k\ge q+3$, reaching $t$ forces $s=k$ and a clique.
If a shared variable is missing, every satisfied constraint needs its
own selected fresh variable, giving score at most $k'<t$.

\smallskip
\noindent\emph{Two accessible relations.}
Otherwise, choose accessible $P,Q\in\Gamma_+$, with $P$ failing
\Direct and $Q$ failing \Rooted. From these relations, we construct
a fixed \emph{switch gadget}: a small collection of constraints with
a switch variable $x$, private variables, and shared variables called
\emph{pins}.

With all pins set to $1$, selecting a designated set of $h$ private
variables satisfies $h$ constraints when $x=0$ and $h+1$ when $x=1$.
Any other choice of private variables satisfies at most as many
constraints as the number of private variables selected, regardless
of $x$. Thus, the switch gives one extra satisfied constraint only
for the designated choice.

Represent each graph vertex $v$ by a variable $x_v$. For each edge
$uv$, use two copies of the gadget with switches $x_u,x_v$,
shared private variables, and separate pin groups.
The designated choice then gives one extra point for each selected
endpoint. Selecting both $x_u$ and $x_v$ gives two extra points using
the same $h$ private variables, so the construction rewards edges
whose endpoints are both selected.

Additional constraints ensure that every pin is set to $1$ in any
assignment reaching the target score. We choose a budget $k'$ and
a target $t$, both depending only on $k$ and $d$, so that the optimum
is at most $t$ and equals $t$ exactly when the graph contains a
$k$-clique. The instance is unweighted, and no two constraints use
the same set of variables.

Finally, an \epas{} with $\varepsilon=1/(2t)$ must return score $t$
when a clique exists and at most $t-1$ otherwise, since scores are integers.
For fixed $\Gamma$, this decides \clique{} in \fpt{} time.
The full construction and proof are in
Section~\ref{arityd-dist-hard-subsec-necessity}.

\section{Classification and common reductions}
\label{sec:arityd-complete-distinct}

The following sections prove Theorem~\ref{thm:arityd-complete-dichotomy}: a
classification of the fixed finite Boolean constraint languages $\Gamma$ for
which \maxwtcsp{}$(\Gamma)$ admits an \epas{} parameterized by the budget $k$.
We prove the approximation guarantee for each of the two repair rules,
\Direct and \Rooted, and prove hardness when neither rule holds for all
the nonempty zero-free relations. As in Section~\ref{sec:preliminaries},
each constraint uses distinct variables, listed in coordinate order.
Section~\ref{ss:arityd-binary-specialization} proves
Corollary~\ref{cor:arity-two} for relations of arity at most two.

\subsection{Classification of zero-free relations}

A relation is \emph{zero-valid} if it contains the all-zero tuple and
\emph{zero-free} otherwise.  Empty relations can be discarded because their
constraints are never satisfied.  Define
\[
 \Gamma_+
   :=\{R\in\Gamma:R\ne\varnothing,\ \mathbf0\notin R\},
\]
the nonempty zero-free part of the original language.

For a nonempty zero-free relation $R\subseteq\{0,1\}^r$, let $e_i$ be the
tuple whose unique $1$ is in coordinate $i$. Coordinate $i$ is a
\emph{unit-witness coordinate} if $e_i\in R$ and a \emph{control}
otherwise. For a tuple
$\alpha=(\alpha_1,\ldots,\alpha_r)$, its support is
$\supp(\alpha):=\{i:\alpha_i=1\}$.

\begin{definition}
\label{def:arityd-two-rules}
Let $R\subseteq\{0,1\}^r$ be a nonempty zero-free relation.
\begin{enumerate}
\item $R$ is \Direct if, for every $\alpha\in R$ and every
      control coordinate $c$ with $\alpha_c=1$,
      \[
        \alpha-e_c\in R.
      \]
      Here $\alpha-e_c$ changes coordinate $c$ from $1$ to $0$
      and leaves every other coordinate unchanged.

\item $R$ is \Rooted if, for every $\alpha\in R$, there is a
      unit-witness coordinate $u$ with $\alpha_u=1$ such that,
      for every $B\subseteq\supp(\alpha)$ consisting only of
      control coordinates, $R$ contains a tuple $\beta$ with
      \[
        \supp(\beta)=\{u\}\cup B.
      \]
\end{enumerate}
\end{definition}

\begin{lemma}
\label{lem:arityd-testing-rules}
For a nonempty zero-free relation $R$ of arity $r$, the
\Direct and \Rooted rules can be checked using
$O(r2^r)$ and $O(r4^r)$ membership queries, respectively.
For a finite language $\Gamma$ of arity at most $d$, checking
both rules for every relation in $\Gamma_+$ uses
$O(|\Gamma|\,d\,4^d)$ queries.
\end{lemma}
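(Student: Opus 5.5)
The proof is a direct count of membership queries in a brute-force check of each definition. As a preliminary step, we query all $2^r$ tuples of $\{0,1\}^r$ once and store the answers in a table indexed by the tuple, or equivalently by its support. Every later test then reads this table instead of issuing a new query. This alone gives $2^r$ queries, which is within both bounds. To keep the argument robust we nevertheless count the queries each rule would issue without the table, since that is what the stated bounds $O(r2^r)$ and $O(r4^r)$ reflect. From the entries $e_1,\ldots,e_r$ we read off the unit-witness coordinates and the controls.

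\textbf{Direct.} For each $\alpha\in\{0,1\}^r$ with $\alpha\in R$, and each control coordinate $c$ with $\alpha_c=1$, we test whether $\alpha-e_c\in R$. There are at most $2^r$ choices of $\alpha$ and at most $r$ choices of $c$. Each check uses $O(1)$ queries, so the total is $O(r2^r)$. This is exactly Definition~\ref{def:arityd-two-rules}(1).

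\textbf{Rooted.} For each $\alpha\in R$, we try each unit-witness coordinate $u$ with $\alpha_u=1$; there are at most $r$ choices. For each $u$, we enumerate every subset $B$ of the $1$-valued controls of $\alpha$, of which there are at most $2^r$. For each $B$ we test whether the tuple with support $\{u\}\cup B$ lies in $R$. Since a tuple is determined by its support, this is one query. The pair $(\alpha,u)$ is accepted if every such query succeeds. The tuple $\alpha$ passes if some $u$ is accepted, and $R$ is \Rooted if every $\alpha\in R$ passes. The total is at most $2^r\cdot r\cdot 2^r=O(r4^r)$ queries, and correctness is immediate from Definition~\ref{def:arityd-two-rules}(2).

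\textbf{Whole language.} Each relation in $\Gamma$ has arity $r\le d$. Deciding whether it belongs to $\Gamma_+$ takes $O(2^r)$ queries: nonemptiness comes from the full enumeration, and zero-freeness from a query on $\mathbf 0$. Adding the two tests above gives $O(r2^r+r4^r)=O(d4^d)$ queries per relation. Summing over $\Gamma$ yields $O(|\Gamma|\,d\,4^d)$.

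There is no real obstacle here. The only points needing care are that a tuple is determined by its support, so each existence requirement in the \Rooted rule is a single query, and that the quantifier order (for all $\alpha$, exists $u$, for all $B$) is respected by the nesting of the loops.
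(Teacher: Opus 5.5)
Your proposal is correct and matches the paper's proof essentially step for step. Both tabulate membership after checking $\mathbf 0$, read off the unit witnesses from the $e_i$, check \Direct{} via the queries $\alpha-e_c$ and \Rooted{} via the queries with support $\{u\}\cup B$ (respecting the for-all/exists/for-all nesting), and sum over $\Gamma$. Your side remark that the initial $2^r$-query table already suffices for both tests is a valid, slightly sharper observation that the paper's procedure also implicitly permits.
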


\begin{proof}
Each membership query asks whether a specified tuple belongs
to a relation. For each relation $R\in\Gamma$ of arity $r$,
first query the all-zero tuple. If it belongs to $R$, skip
this relation: zero-valid relations need not satisfy either
repair rule.

Otherwise, enumerate all $2^r$ Boolean tuples and record
which belong to $R$. Discard $R$ if none belongs to it.
The remaining relations are exactly those in $\Gamma_+$.
For each such relation, the answers for the unit tuples $e_i$
identify which coordinates are unit witnesses and which are controls.

For \Direct, for each $\alpha\in R$ and each control coordinate
$c$ with $\alpha_c=1$, query whether
$\alpha-e_c\in R$. There are at most $r$ such tests per tuple,
giving $O(r2^r)$ queries.

For \Rooted, for each $\alpha\in R$, try every unit-witness
coordinate $u$ with $\alpha_u=1$. For each such $u$ and every
$B\subseteq\supp(\alpha)$ consisting only of control coordinates,
query the
tuple whose support is $\{u\}\cup B$.
The rule holds exactly when every $\alpha\in R$ has a choice
of $u$ for which all these queries succeed.
There are at most $r$ choices of $u$ and $2^r$ subsets $B$
per tuple, giving $O(r4^r)$ queries.

Summing over $\Gamma$ gives $O(|\Gamma|\,d\,4^d)$ queries.
For fixed $\Gamma$ and $d$, this bound is constant.
\end{proof}

\begin{definition}
\label{def:arityd-accessibility}
A nonempty zero-free relation $R\subseteq\{0,1\}^r$ is
\emph{accessible} if every $\alpha\in R$ satisfies
\[
 \text{there is }u\in[r]\text{ with }\alpha_u=1\text{ and }e_u\in R.
 \tag{\(*\)}\label{eq:arityd-accessibility}
\]
It is \emph{inaccessible} otherwise.
\end{definition}

\begin{lemma}
\label{lem:arityd-repair-accessibility}
Every nonempty zero-free relation that is \Direct or \Rooted
is accessible.
\end{lemma}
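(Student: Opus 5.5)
We handle the two rules separately; both reduce to the observation that a satisfying tuple with no $1$ in a unit-witness coordinate cannot survive the repair operation without contradicting zero-freeness. Fix a nonempty zero-free relation $R\subseteq\{0,1\}^r$ and a tuple $\alpha\in R$. We must exhibit a coordinate $u$ with $\alpha_u=1$ and $e_u\in R$, that is, a unit witness in $\supp(\alpha)$.

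\emph{The \Rooted case.} This is immediate from Definition~\ref{def:arityd-two-rules}. The rule applied to $\alpha$ provides a unit-witness coordinate $u$ with $\alpha_u=1$. By the definition of unit witness, $e_u\in R$, so condition~\eqref{eq:arityd-accessibility} holds for $\alpha$. As a consistency check, taking $B=\varnothing$ in the rule gives a tuple with support $\{u\}$, namely $e_u$.

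\emph{The \Direct case.} Suppose, for contradiction, that $\supp(\alpha)$ contains only control coordinates. Write $\supp(\alpha)=\{c_1,\ldots,c_p\}$ and delete these $1$s one at a time. Let $\alpha^{(0)}:=\alpha$ and $\alpha^{(j)}:=\alpha^{(j-1)}-e_{c_j}$. By induction on $j$, each $\alpha^{(j)}$ lies in $R$: the tuple $\alpha^{(j-1)}\in R$ has a $1$ in the control coordinate $c_j$, so the \Direct rule gives $\alpha^{(j)}\in R$. After $p$ steps we obtain $\alpha^{(p)}=\mathbf0\in R$, contradicting zero-freeness. (If $p=0$, then $\alpha=\mathbf0$ itself, which is excluded directly.) Hence $\supp(\alpha)$ contains a unit witness, and $R$ is accessible.

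No step is a real obstacle. The only care needed is in the \Direct induction: each deletion applies the rule to the current tuple, not to the original $\alpha$, and the rule permits this because the current tuple is already known to lie in $R$.
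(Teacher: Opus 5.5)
Your proof is correct and follows the paper's argument essentially verbatim: the \Rooted case is read off the definition, and the \Direct case deletes the $1$-valued controls one at a time until reaching $\mathbf 0\in R$, contradicting zero-freeness. Your added care that each deletion applies the rule to the current tuple (and that control status depends only on $R$) is sound.
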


\begin{proof}
For \Rooted, the definition requires every $\alpha\in R$ to have
$\alpha_u=1$ at some unit-witness coordinate $u$.

For \Direct, suppose that some $\alpha\in R$ has no $1$
in a unit-witness coordinate. Every coordinate set to $1$
is then a control. Set these coordinates to $0$ one at a time.
The \Direct rule preserves membership in $R$ at every step,
so the all-zero tuple belongs to $R$. This contradicts
zero-freeness.
\end{proof}

Both sufficient conditions may hold. A language that satisfies neither
condition belongs to the hard case.

\paragraph*{Order of the proofs.}
Section~\ref{ss:arityd-zero-valid-lifting-distinct} reduces arbitrary
instances to zero-free instances in a form that both algorithms can use.
Section~\ref{ss:arityd-random-isolation} bounds the sum of single-variable
scores, which both algorithms use to control their approximation error.
Section~\ref{ss:arityd-rooted} gives the algorithm for \Rooted relations,
and Section~\ref{arityd-direct-sec-main} gives the algorithm for \Direct
relations. Section~\ref{arityd-dist-hard-subsec-necessity} proves hardness
when neither rule holds for all relations in $\Gamma_+$.
Section~\ref{ss:arityd-proof} combines these results to prove the
classification, and Section~\ref{ss:arityd-binary-specialization} handles
relations of arity at most two.
\subsection{Reduction to zero-free constraints}
\label{ss:arityd-zero-valid-lifting-distinct}

Fix a Boolean constraint language $\Gamma$ of arity at most $d$.
We reduce approximation over $\Gamma$ to approximation over its
zero-free part $\Gamma_+$, while respecting prescribed variable values.
The same reduction applies to both the \Direct and \Rooted algorithms.

Suppose $\mathcal A$ is an algorithm with the following guarantee.
Given an instance $\mathcal J$ over $\Gamma_+$ with budget $k$,
a partial assignment $\rho$ with $|\rho^{-1}(1)|\le k$,
and a rational $\delta\in(0,1)$, it returns a feasible assignment
$\tau$ extending $\rho$ such that
\[
 F^{\mathcal J}(\tau)
 \ge F^{\mathcal J}(\tau')-\delta\OPT(\mathcal J)
\]
for every feasible assignment $\tau'$ extending $\rho$.
Here $\OPT(\mathcal J)$ is measured before imposing $\rho$.
The running time of $\mathcal A$ is
$g(k,d,\delta)\ell_{\mathrm{in}}^{O(1)}$.

\begin{lemma}
\label{lem:arityd-zero-valid-lifting}
Using $\mathcal A$, for every rational $\varepsilon\in(0,1)$,
we can compute a $(1-\varepsilon)$-approximate solution
to any instance over $\Gamma$ in time
\[
 2^{O(k\log(1+dk/\varepsilon))}
 g(k,d,\varepsilon/2)\ell_{\mathrm{in}}^{O(1)}.
\]
\end{lemma}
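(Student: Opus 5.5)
The plan is to follow the zero-valid lifting sketched in Section~\ref{sec:overview}. Given an instance $\mathcal I=(X,\mathcal C,w,k)$ over $\Gamma$, first discard constraints with empty relations. If $k=0$, output the all-zero assignment, which is the unique feasible one. Otherwise split $\mathcal C$ into the zero-valid part $\mathcal C_0$ and the zero-free part $\mathcal C_+$, so that $\mathcal J:=(X,\mathcal C_+,w|_{\mathcal C_+},k)$ is an instance over $\Gamma_+$. Put $\xi:=\varepsilon/2$ and $w_0:=w(\mathcal C_0)$, and let $D_0(v)$ be the total weight of constraints in $\mathcal C_0$ containing $v$. Define the heavy set $H:=\{v:D_0(v)>\xi w_0/(2k)\}$. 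Since $\sum_v D_0(v)\le dw_0$, double counting gives $|H|<2dk/\xi=4dk/\varepsilon$; when $w_0=0$ we have $H=\varnothing$.

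Next, enumerate every $\rho\colon H\to\{0,1\}$ with $|\rho^{-1}(1)|\le k$. There are at most $\sum_{j\le k}\binom{|H|}{j}\le(|H|+1)^k=2^{O(k\log(1+dk/\varepsilon))}$ such guesses. For each guess, call $\mathcal A$ on $(\mathcal J,\rho,\delta=\varepsilon/2)$ to obtain $\tau_\rho$, and evaluate $F^{\mathcal I}(\tau_\rho)$. Output the best $\tau_\rho$. All bookkeeping (degrees, heavy set, evaluations) is polynomial in $\ell_{\mathrm{in}}$, which gives the stated time bound.

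For correctness, fix an optimal $\tau^*$ for $\mathcal I$ and let $\rho:=\tau^*|_H$, which is a valid guess. We need two facts.
\begin{enumerate}
\item By nonnegativity, $\OPT(\mathcal J)\le F^{\mathcal J}(\tau^*)\le\OPT(\mathcal I)$, and $w_0=F^{\mathcal I}(\mathbf 0)\ge w_0$ gives $w_0\le\OPT(\mathcal I)$.
\item Since $\tau^*$ extends $\rho$, the guarantee of $\mathcal A$ gives $F^{\mathcal J}(\tau_\rho)\ge F^{\mathcal J}(\tau^*)-\tfrac{\varepsilon}{2}\OPT(\mathcal J)$.
\end{enumerate}
For the zero-valid part, compare $\tau_\rho$ with $\tau^*$. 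They agree on $H$ and each has weight at most $k$, so they differ on at most $2k$ variables, all nonheavy. A constraint of $\mathcal C_0$ satisfied by $\tau^*$ but not by $\tau_\rho$ contains such a variable. Hence the loss on $\mathcal C_0$ is at most $2k\cdot\xi w_0/(2k)=\xi w_0\le\tfrac{\varepsilon}{2}\OPT(\mathcal I)$. Adding the two parts gives $F^{\mathcal I}(\tau_\rho)\ge(1-\varepsilon)\OPT(\mathcal I)$.

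The main obstacle is being careful about what $\mathcal A$ compares against. It must be run on the original zero-free constraints with $\rho$ imposed as a requirement, not substituted into the relations, because substitution can destroy zero-freeness and the repair rules. Its error is measured against $\OPT(\mathcal J)$ computed before imposing $\rho$. The comparison assignment $\tau^*$ must be a feasible extension of $\rho$, which holds by the choice of guess.
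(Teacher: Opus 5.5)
Your proposal is correct and follows the paper's proof essentially step for step: the same heavy threshold $\varepsilon w_0/(4k)$, the same $(|H|+1)^k$ enumeration of partial assignments on $H$, calls to $\mathcal A$ with $\delta=\varepsilon/2$, and the same split of the loss into $\frac{\varepsilon}{2}\OPT(\mathcal J)$ on the zero-free constraints and $\frac{\varepsilon}{2}w_0$ on the zero-valid ones. One small repair is needed. Your chain $\OPT(\mathcal J)\le F^{\mathcal J}(\tau^*)$ is backwards in general, since an $\mathcal I$-optimal $\tau^*$ need not be optimal for $\mathcal J$; instead use that every feasible $\tau$ has $F^{\mathcal J}(\tau)\le F^{\mathcal I}(\tau)\le\OPT(\mathcal I)$ by nonnegativity, and likewise $w_0=F^{\mathcal I}(\mathbf 0)\le\OPT(\mathcal I)$.
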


We prove the lemma by constructing a small family of partial assignments.
Consider an instance $\mathcal I=(X,\mathcal C,w,k)$ over $\Gamma$.
Discard empty relations, and split the remaining constraints into
the zero-valid constraints $\mathcal C_0$ and the zero-free constraints
$\mathcal C_+$. Let
$\mathcal I_+:=(X,\mathcal C_+,w|_{\mathcal C_+},k)$.
Write $F_0$ and $F_+$ for the two contributions to the objective, so
$F=F_0+F_+$, and put $\OPT_+:=\OPT(\mathcal I_+)$.
Define
\[
 w_0:=\sum_{C\in\mathcal C_0}w(C),
 \qquad
 D_0(v):=\sum_{\substack{C\in\mathcal C_0:\ v\in\scope(C)}}w(C).
\]
Thus $w_0$ is the value of the all-zero assignment, and $D_0(v)$ bounds
the weight of zero-valid constraints that changing $v$ can affect.
Nonnegative weights give $\OPT_+\le\OPT$, and the all-zero assignment
gives $w_0\le\OPT$.

For $k\ge1$ and a rational $\varepsilon\in(0,1)$, define
\[
 H:=\left\{v\in X:D_0(v)>\frac{\varepsilon w_0}{4k}\right\}.
\]
For each $T\subseteq H$ with $|T|\le k$, let
$\rho_T:H\to\{0,1\}$ be the restriction of $\tau_T$ to $H$.
Thus $\rho_T$ assigns $1$ to $T$ and $0$ to $H\setminus T$.
The choice of $H$ adapts Manurangsi's degree-threshold argument for
clauses with negative literals~\cite{DBLP:journals/tcs/Manurangsi25}.
The next lemma bounds the loss on zero-valid constraints once these
values are fixed.

\begin{lemma}
\label{lem:arityd-zero-valid-family}
The family $\{\rho_T:T\subseteq H,\ |T|\le k\}$ has at most
$2^{O(k\log(1+dk/\varepsilon))}$ members.
For every $S\subseteq X$ with $|S|\le k$, the assignment $\tau_S$
extends $\rho_T$ for $T=S\cap H$, and every feasible assignment
$\tau$ extending this $\rho_T$ satisfies
\[
 |F_0(\tau)-F_0(S)|\le\frac{\varepsilon}{2}w_0.
\]
\end{lemma}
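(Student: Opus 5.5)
The plan has two parts: first bound $|H|$ by a double-counting argument, which gives the size of the family; then compare $\tau$ and $\tau_S$ on the zero-valid constraints variable by variable.

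\emph{Size of $H$.} Each constraint in $\mathcal C_0$ has at most $d$ variables, so
\[
\sum_{v\in X}D_0(v)\le d\,w_0 .
\]
If $w_0=0$, then every $D_0(v)=0$ and $H=\varnothing$. Otherwise, summing the strict defining inequality over $v\in H$ gives
\[
|H|\cdot\frac{\varepsilon w_0}{4k}<d\,w_0,
\qquad\text{so}\qquad |H|<\frac{4dk}{\varepsilon}.
\]
Each $\rho_T$ is determined by a subset $T\subseteq H$ with $|T|\le k$. The number of such subsets is at most $\sum_{j\le k}\binom{|H|}{j}\le(|H|+1)^k$, which is at most
\[
(1+4dk/\varepsilon)^k=2^{O(k\log(1+dk/\varepsilon))}.
\]
The last equality uses $\log(1+4x)\le 2+\log(1+x)$, and the additive constant is absorbed into the $O(\cdot)$ since $k\ge1$.

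\emph{Extension.} The assignment $\tau_S$ restricted to $H$ sets exactly $S\cap H$ to $1$. This is precisely $\rho_T$ for $T=S\cap H$, and $|T|\le|S|\le k$.

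\emph{Loss bound.} Let $\tau$ be feasible and extend $\rho_T$. Set
\[
\Delta:=\{v:\tau(v)\ne\tau_S(v)\}.
\]
Since both assignments agree on $H$, we have $\Delta\cap H=\varnothing$. Moreover $\Delta\subseteq\tau^{-1}(1)\cup S$, so $|\Delta|\le 2k$.

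A constraint $C\in\mathcal C_0$ can change satisfaction status between $\tau$ and $\tau_S$ only if $\scope(C)$ meets $\Delta$. Hence
\[
|F_0(\tau)-F_0(S)|\le\sum_{C\in\mathcal C_0:\ \scope(C)\cap\Delta\ne\varnothing}w(C)\le\sum_{v\in\Delta}D_0(v).
\]
Each $v\in\Delta$ is nonheavy, so $D_0(v)\le\varepsilon w_0/(4k)$. This gives
\[
|F_0(\tau)-F_0(S)|\le 2k\cdot\frac{\varepsilon w_0}{4k}=\frac{\varepsilon}{2}w_0 .
\]

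The argument is routine throughout. The only step needing care is the counting estimate, namely converting $(|H|+1)^k$ into the stated exponential form and handling the degenerate case $w_0=0$.
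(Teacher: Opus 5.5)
Your proof is correct and matches the paper's argument step for step: the double-counting bound $|H|\le 4dk/\varepsilon$ (with $w_0=0$ treated separately), the count $(|H|+1)^k\le(1+4dk/\varepsilon)^k$, and the observation that every variable on which $\tau$ and $\tau_S$ disagree lies in $(\tau^{-1}(1)\cup S)\setminus H$, a set of at most $2k$ nonheavy variables, each contributing at most $\varepsilon w_0/(4k)$. (The additive constant in the exponent is absorbed because $dk/\varepsilon>1$, so $\log(1+dk/\varepsilon)\ge 1$.)
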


\begin{proof}
If $w_0=0$, then $D_0(v)=0$ for every $v$, so $H=\varnothing$ and the
family has one member. Otherwise, since each constraint has arity at
most $d$,
\[
 |H|\frac{\varepsilon w_0}{4k}
 \le\sum_{v\in H}D_0(v)
 \le\sum_{v\in X}D_0(v)
 \le dw_0.
\]
Hence $|H|\le4dk/\varepsilon$ in either case. The number of members
is at most
\[
 \sum_{i=0}^{\min\{k,|H|\}}\binom{|H|}{i}
 \le(|H|+1)^k
 \le(1+4dk/\varepsilon)^k.
\]

Now fix $S$ and put $T=S\cap H$. By construction, $\tau_S$ extends
$\rho_T$. Let $\tau$ be any feasible assignment extending $\rho_T$,
and put $A:=\tau^{-1}(1)$. Then $|A|\le k$ and
$A\cap H=S\cap H$. Every zero-valid constraint whose satisfaction
changes contains a variable on which $A$ and $S$ differ. Such a variable
lies in $(A\cup S)\setminus H$. This set has at most $2k$ variables,
and each has $D_0(v)\le\varepsilon w_0/(4k)$. Therefore
\begin{equation}
 \label{eq:arityd-zero-compare}
 |F_0(\tau)-F_0(S)|
 \le\sum_{v\in(A\cup S)\setminus H}D_0(v)
 \le\frac{\varepsilon}{2}w_0.
\end{equation}
\end{proof}

\begin{proof}[Proof of Lemma~\ref{lem:arityd-zero-valid-lifting}]
If $k=0$, return the all-zero assignment. Otherwise construct
$\mathcal I_+$ and the family of partial assignments above.
For each $T\subseteq H$ with $|T|\le k$, run
$\mathcal A(\mathcal I_+,\rho_T,\varepsilon/2)$.
Evaluate each returned assignment on the original instance and return
one with the largest value. Every call uses the original zero-free
relations; $\rho_T$ only restricts the permitted assignments.

Let $O$ be an optimal selected set for the original instance, and let
$\tau$ be the assignment returned by
$\mathcal A(\mathcal I_+,\rho_{O\cap H},\varepsilon/2)$.
Since $\tau_O$ is a feasible assignment extending $\rho_{O\cap H}$,
the guarantee of $\mathcal A$ gives
\begin{equation}
 \label{eq:arityd-core-compare}
 F_+(\tau)\ge F_+(O)-\frac{\varepsilon}{2}\OPT_+.
\end{equation}
Lemma~\ref{lem:arityd-zero-valid-family} gives
$F_0(\tau)\ge F_0(O)-\frac{\varepsilon}{2}w_0$.
Adding the two bounds yields
\[
 F(\tau)\ge F(O)-\frac{\varepsilon}{2}(\OPT_++w_0)
          \ge(1-\varepsilon)\OPT.
\]
The best returned assignment has at least this value.

By Lemma~\ref{lem:arityd-zero-valid-family}, there are at most
$2^{O(k\log(1+dk/\varepsilon))}$ calls to $\mathcal A$.
Each call takes time $g(k,d,\varepsilon/2)\ell_{\mathrm{in}}^{O(1)}$.
Constructing $\mathcal I_+$, preparing each $\rho_T$, and evaluating
each returned assignment take polynomial time in $\ell_{\mathrm{in}}$.
For a fixed language, membership tests take constant time; for input
relation descriptions, we use the assumption in
Remark~\ref{rem:arityd-uniform-algorithms}.
This proves the running-time bound.
\end{proof}

\subsection{Bounding single-variable scores}
\label{ss:arityd-random-isolation}

For any instance with zero-free relations, the next lemma bounds
the sum of single-variable scores over a set of at most $k$ variables
by $c_d$ times the optimum. The factor $c_d$ depends only on the
arity bound $d$. Both algorithms use this bound to control their
approximation loss.
Put $c_1:=1$ and, for $d\ge2$,
\begin{equation}
\label{eq:arityd-cd}
   c_d:=d\left(1-\tfrac1d\right)^{-(d-1)}
       =\frac{d^{\,d}}{(d-1)^{d-1}}\le ed .
\end{equation}

\begin{lemma}
\label{lem:arityd-random-isolation}
Let $\mathcal J=(X,\mathcal C,w,k)$ be an instance of arity at most
$d$, with $k\ge1$ and nonempty zero-free relations.
For $v\in X$, let $M(v):=F^{\mathcal J}(\tau_{\{v\}})$.
Then every $Q\subseteq X$ with $|Q|\le k$ satisfies
\begin{equation}
\label{eq:arityd-random-isolation}
 \sum_{v\in Q}M(v)\le c_d\OPT(\mathcal J).
\end{equation}
\end{lemma}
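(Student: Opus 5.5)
The plan is a probabilistic-method argument, following the sketch in the technical overview. Treat $d=1$ separately. There every constraint has a single variable $v$ and a zero-free relation, so it is satisfied exactly when $v=1$. Hence distinct variables of $Q$ satisfy disjoint sets of constraints, and $\sum_{v\in Q}M(v)=F(\tau_Q)\le\OPT(\mathcal J)$ since $|Q|\le k$. Strictly, constraints of arity $1$ may also occur when $d\ge2$; the general argument below covers them.

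For $d\ge2$, let $Q'\subseteq Q$ be random, keeping each element independently with probability $1/d$. Then $|Q'|\le k$, so $\tau_{Q'}$ is feasible and $\mathbb E[F(\tau_{Q'})]\le\OPT(\mathcal J)$. Expand $\sum_{v\in Q}M(v)=\sum_{C}w(C)\,|\{v\in Q:\tau_{\{v\}}\models C\}|$.

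For a fixed constraint $C$, let $U=\{v\in Q:\tau_{\{v\}}\models C\}$. For each $v\in U$, consider the event $E_v$ that $Q'\cap\scope(C)=\{v\}$. On $E_v$ the assignment $\tau_{Q'}$ agrees with $\tau_{\{v\}}$ on $\scope(C)$, since variables outside $Q$ are $0$ in both. So $\tau_{Q'}\models C$. The events $E_v$ for different $v$ are pairwise disjoint, because they prescribe different intersections with $\scope(C)$. Moreover,
\[
\Pr[E_v]=\tfrac1d\bigl(1-\tfrac1d\bigr)^{|Q\cap\scope(C)|-1}\ge\tfrac1d\bigl(1-\tfrac1d\bigr)^{d-1}=\tfrac1{c_d},
\]
using $|\scope(C)|\le d$. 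Hence $\Pr[\tau_{Q'}\models C]\ge|U|/c_d$. Summing with weights gives $\mathbb E[F(\tau_{Q'})]\ge\frac1{c_d}\sum_{v\in Q}M(v)$, which proves \eqref{eq:arityd-random-isolation}.

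The only delicate point is this disjointness and locality step. We need that satisfaction of $C$ depends only on the values on $\scope(C)$, and that the values there match $\tau_{\{v\}}$ exactly. Both follow because all variables outside $Q'$ are set to $0$. Zero-freeness is used only in the $d=1$ case. The identity $c_d=d^d/(d-1)^{d-1}\le ed$ in \eqref{eq:arityd-cd} follows from $(1+\frac1{d-1})^{d-1}\le e$.
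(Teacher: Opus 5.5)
Your argument follows the paper's proof: you retain each variable of $Q$ with probability $1/d$, use the disjoint events $Q'\cap\scope(C)=\{v\}$, and compare $\mathbb E[F(\tau_{Q'})]$ with $\OPT(\mathcal J)$. However, one step is unjustified, and your closing claim that zero-freeness is used only for $d=1$ is false. The formula $\Pr[E_v]=\frac1d(1-\frac1d)^{|Q\cap\scope(C)|-1}$ holds only if $v\in\scope(C)$. When $v\notin\scope(C)$, the event $E_v$ is impossible and $\Pr[E_v]=0$. Nothing in your $d\ge2$ argument ensures $v\in\scope(C)$ for $v\in U$. This is where zero-freeness enters. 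If $v\notin\scope(C)$, then $\tau_{\{v\}}[C]=\mathbf 0$, so $\tau_{\{v\}}\models C$ would give $\mathbf 0\in R_C$. The paper makes exactly this observation before computing the probability.

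The hypothesis cannot be dropped for $d\ge2$, so your argument as written cannot be complete. Take one unit-weight constraint with relation $\{\mathbf 0_2\}$ on $(y_1,y_2)$, and let $Q$ consist of $k$ further variables. Every $v\in Q$ has $M(v)=1$, so $\sum_{v\in Q}M(v)=k$, whereas $\OPT=1$ and $c_2=4$. For $k>4$ the bound fails. Adding the one-line justification that $v\in\scope(C)$ closes the gap, and your proof then coincides with the paper's.
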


\begin{proof}
For $d=1$, every nonempty zero-free relation accepts only the
value $1$. Each constraint contributes to the score of its
unique variable. Hence
\[
 \sum_{v\in Q}M(v)=F^{\mathcal J}(\tau_Q)
 \le\OPT(\mathcal J).
\]
Since $c_1=1$, this proves the bound.

Now suppose $d\ge2$. Retain each variable of $Q$ independently
with probability $p:=1/d$, obtaining $Q'\subseteq Q$.
The assignment $\tau_{Q'}$ selects exactly the retained variables.

For each $v\in Q$, consider every constraint $C$ satisfied by
$\tau_{\{v\}}$. If there is no such constraint, then $M(v)=0$
and $v$ contributes nothing to the sum.
For any such $C$, we have $v\in\scope(C)$: otherwise
$\tau_{\{v\}}[C]$ would be the all-zero tuple, contradicting
zero-freeness.

Consider the event that $v$ is the only retained variable
in $\scope(C)$:
\[
 Q'\cap\scope(C)=\{v\}.
\]
This requires retaining $v$ and omitting the other
$|Q\cap\scope(C)|-1$ variables of $Q$ in the scope.
By independence, its probability is
\[
 p(1-p)^{|Q\cap\scope(C)|-1}
 \ge p(1-p)^{d-1}
 =\frac1{c_d},
\]
where the inequality uses $|Q\cap\scope(C)|\le d$.
On this event,
$\tau_{Q'}[C]=\tau_{\{v\}}[C]$, so $\tau_{Q'}$ satisfies $C$.

For a fixed constraint $C$, these events are disjoint for
different choices of $v$, since $C$ cannot have two different
variables as its only retained variable. Thus the probability
that $\tau_{Q'}$ satisfies $C$ is at least the sum of these
event probabilities. Multiplying by the nonnegative weights
and summing over constraints gives
\begin{align*}
 \mathbb E[F^{\mathcal J}(\tau_{Q'})]
 &=\sum_{C\in\mathcal C}w(C)\Pr[\tau_{Q'}\models C]\\
 &\ge\frac1{c_d}
   \sum_{C\in\mathcal C}
   \sum_{\substack{v\in Q:\\ \tau_{\{v\}}\models C}}w(C)\\
 &=\frac1{c_d}\sum_{v\in Q}M(v).
\end{align*}
The last equality follows by changing the order of summation:
for each $v$, the inner sum is precisely $M(v)$.

Every outcome has $|Q'|\le|Q|\le k$, so $\tau_{Q'}$ is feasible
and its value is at most $\OPT(\mathcal J)$. Therefore
\[
 \frac1{c_d}\sum_{v\in Q}M(v)
 \le\mathbb E[F^{\mathcal J}(\tau_{Q'})]
 \le\OPT(\mathcal J),
\]
which proves \eqref{eq:arityd-random-isolation}.
\end{proof}

\section{Algorithm for \Rooted relations}
\label{ss:arityd-rooted}

In this section, we provide an \epas{} for \maxwtcsp{}$(\Gamma)$,
parameterized by $k$, when every relation in $\Gamma_+$ is \Rooted.
The algorithm builds on the branching and exchange ideas of
Jain et al.~\cite{jain_parameterized_2023}.

For the rest of the zero-free analysis, fix an instance
$\mathcal I=(X,\mathcal C,w,k)$ of arity at most $d$ whose relations are
nonempty, zero-free and \Rooted. Empty relations may be discarded, since
their constraints are never satisfied. We assume $k\ge1$, since for
$k=0$ the all-zero assignment is the unique feasible assignment.

For each constraint $C\in\mathcal C$, define
\[
 U_C:=\{x\in\scope(C):\tau_{\{x\}}\models C\},
 \qquad
 N_C:=\scope(C)\setminus U_C.
\]
Thus $U_C$ contains the variables that satisfy $C$ when selected
alone, and $N_C$ contains the remaining variables. We call these
unit-witness variables and controls, respectively. Abbreviate
\[
   F(S):=F^{\mathcal I}(\tau_S)
   \qquad(S\subseteq X).
\]

\subsection{Single-variable scores and candidate scores}

For $v\in X$, the \emph{single-variable score} $M(v)$ is the
total weight of constraints satisfied by selecting $v$ alone.
For distinct $v,x\in X$, let $I(v\to x)$ be the total weight
of these constraints whose scope also contains $x$. Define
\begin{align}
\label{eq:arityd-rooted-global-mass}
   M(v)&:=F(\{v\}),\\
\label{eq:arityd-rooted-global-interaction}
   I(v\to x)&:=
   \sum_{\substack{C\in\mathcal C:\{v,x\}\subseteq\scope(C)\\
                    \tau_{\{v\}}\models C}}w(C)
   \qquad(v\ne x).
\end{align}
Both quantities depend only on the instance.

Let $K\subseteq X$ be the set of variables required to be
selected. For $v\in X\setminus K$, define
\[
  \mathcal A_K(v):=
  \left\{C\in\mathcal C:
  \begin{array}{l}
       v\in U_C,\quad K\cap U_C=\varnothing,\\
       \tau_{K\cup\{v\}}\models C
  \end{array}\right\}.
\]
Thus $\mathcal A_K(v)$ consists of the constraints satisfied
by selecting $K\cup\{v\}$ for which $v$ is the only selected
unit-witness variable. The \emph{candidate score at $K$} is
\begin{equation}
\label{eq:arityd-rooted-context-mass}
   m_K(v):=\sum_{C\in\mathcal A_K(v)}w(C).
\end{equation}
Every constraint in $\mathcal A_K(v)$ is also satisfied by
selecting $v$ alone, since $v\in U_C$. Therefore,
\begin{equation}
\label{eq:arityd-rooted-context-below-global}
   0\le m_K(v)\le M(v).
\end{equation}
The candidate score $m_K(v)$ depends on $K$. It need not equal
$F(K\cup\{v\})-F(K)$: it counts only constraints in
$\mathcal A_K(v)$ and does not subtract the weight of constraints
that become unsatisfied when $v$ is added.

\begin{lemma}
\label{lem:arityd-rooted-outgoing}
For every $v\in X$,
\[
   \sum_{x\ne v}I(v\to x)\le(d-1)M(v).
\]
\end{lemma}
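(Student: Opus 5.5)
The plan is a double-counting argument: expand the left-hand side as a sum over constraints and bound each constraint's contribution. By definition~\eqref{eq:arityd-rooted-global-interaction},
\[
 \sum_{x\ne v}I(v\to x)
 =\sum_{x\ne v}\ \sum_{\substack{C:\{v,x\}\subseteq\scope(C)\\ \tau_{\{v\}}\models C}}w(C)
 =\sum_{\substack{C\in\mathcal C:\ \tau_{\{v\}}\models C,\\ v\in\scope(C)}} w(C)\,\bigl|\scope(C)\setminus\{v\}\bigr|,
\]
after exchanging the order of summation, which is valid because all terms are nonnegative. Each constraint $C$ with $v\in\scope(C)$ and $\tau_{\{v\}}\models C$ is counted once for every other variable $x$ in its scope.

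Next, bound $|\scope(C)\setminus\{v\}|=r_C-1\le d-1$. This uses the standing assumption that the scope variables are pairwise distinct, so $|\scope(C)|=r_C\le d$. The sum is then at most $(d-1)\sum_{C:\ \tau_{\{v\}}\models C,\ v\in\scope(C)}w(C)$.

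It remains to identify this last sum with $M(v)=F(\{v\})$. The only point needing care is that $M(v)$ counts all constraints satisfied by $\tau_{\{v\}}$, including possibly ones not containing $v$. Since every relation in $\mathcal I$ is zero-free, a constraint with $v\notin\scope(C)$ sees the all-zero tuple under $\tau_{\{v\}}$ and is unsatisfied, as in the proof of Lemma~\ref{lem:arityd-random-isolation}. Hence the constraints satisfied by $\tau_{\{v\}}$ are exactly those containing $v$ and satisfied by it, and the sum equals $M(v)$. Since weights are nonnegative, the inequality $\sum_{x\ne v}I(v\to x)\le(d-1)M(v)$ follows.

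There is no real obstacle here; the only subtle points are the zero-freeness step and the distinctness of scope variables.
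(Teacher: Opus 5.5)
Your proposal is correct and is the same double-counting argument as the paper's: each constraint counted by $M(v)$ contributes its weight to at most $|\scope(C)|-1\le d-1$ of the sums $I(v\to x)$. The zero-freeness step is not needed for the inequality: the constraints with $v\in\scope(C)$ and $\tau_{\{v\}}\models C$ are a subset of those counted by $M(v)$, so nonnegativity of the weights already gives the bound.
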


\begin{proof}
Each constraint counted by $M(v)$ contains at most $d-1$ other variables.
Its weight therefore appears in at most $d-1$ of the sums $I(v\to x)$.
Summing over these constraints proves the bound.
\end{proof}

The next two lemmas bound the loss from removing variables and
the gain from adding variables, in terms of the candidate scores
$m_K$ and the sums $I$.

\begin{lemma}
\label{lem:arityd-rooted-projection}
For all $K\subseteq S\subseteq X$,
\begin{equation}
\label{eq:arityd-rooted-projection}
 F(S)-F(K)\le
   \sum_{v\in S\setminus K}m_K(v)
   +\sum_{\substack{u\in K\\x\in S\setminus K}}I(u\to x).
\end{equation}
\end{lemma}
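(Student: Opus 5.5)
The plan is a charging argument over the constraints counted in $F(S)-F(K)$. Since weights are nonnegative,
\[
 F(S)-F(K)\le\sum_{C:\ \tau_S\models C,\ \tau_K\not\models C}w(C),
\]
so it suffices to charge the weight of every constraint $C$ with $\tau_S\models C$ and $\tau_K\not\models C$ either to some $m_K(v)$ with $v\in S\setminus K$, or to some $I(u\to x)$ with $u\in K$ and $x\in S\setminus K$. Overcounting only helps, because all terms on the right-hand side of \eqref{eq:arityd-rooted-projection} are nonnegative sums of weights. We split such constraints according to whether $K\cap U_C$ is empty.

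\emph{Case $K\cap U_C=\varnothing$.} Let $\alpha=\tau_S[C]\in R_C$. Since $R_C$ is \Rooted (Definition~\ref{def:arityd-two-rules}), there is a unit-witness coordinate $u$ with $\alpha_u=1$. For every set $B$ of $1$-valued control coordinates of $\alpha$, the tuple with support $\{u\}\cup B$ lies in $R_C$. Let $v$ be the variable at coordinate $u$.
\begin{itemize}
\item Because $\alpha_u=1$, we have $v\in S$. Coordinate $u$ is a unit witness of $R_C$, so $v\in U_C$.
\item Since $K\cap U_C=\varnothing$, we get $v\notin K$, hence $v\in S\setminus K$.
\item Take $B$ to be the set of control coordinates whose variables lie in $K$. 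These coordinates are $1$ in $\alpha$ because $K\subseteq S$.
\item Then $\tau_{K\cup\{v\}}[C]$ has support exactly $\{u\}\cup B$. Here we use that $K$ meets $\scope(C)$ only in controls, and that the unit-witness/control roles of variables agree with those of coordinates, since the scope variables are distinct.
\end{itemize}
Therefore $\tau_{K\cup\{v\}}\models C$, and $C\in\mathcal A_K(v)$, so $w(C)$ is counted in $m_K(v)$.

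\emph{Case $K\cap U_C\ne\varnothing$.} Pick $u\in K\cap U_C$. Then $\tau_{\{u\}}\models C$. Since $\tau_S\models C$ but $\tau_K\not\models C$, the two assignments differ on $\scope(C)$. They can differ only on $S\setminus K$, so some $x\in(S\setminus K)\cap\scope(C)$ exists, and $x\ne u$. By \eqref{eq:arityd-rooted-global-interaction}, $w(C)$ is counted in $I(u\to x)$.

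Summing the charges over all such constraints gives \eqref{eq:arityd-rooted-projection}.

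The only delicate step is the first case. One must check that $\tau_{K\cup\{v\}}$ restricted to $C$ really equals the Rooted-repaired tuple. This is where the hypothesis $K\cap U_C=\varnothing$ is needed: it guarantees that no other unit witness of $C$ is selected by $K$, and it also forces the chosen root $v$ to lie outside $K$.
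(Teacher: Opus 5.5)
Your proposal is correct and follows the paper's proof essentially step for step. Both arguments reduce to the constraints satisfied by $\tau_S$ but not by $\tau_K$ and split on whether $K\cap U_C$ is empty: the \Rooted{} repair charges $w(C)$ to $m_K(v)$ in the first case, and a variable $x\in S\setminus K$ in $\scope(C)$ charges it to $I(u\to x)$ in the second. Your explicit check that $\tau_{K\cup\{v\}}[C]$ has support exactly $\{u\}\cup B$, using that scope variables are distinct, just spells out a step the paper states in one line.
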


\begin{proof}
The difference $F(S)-F(K)$ is at most the total weight of
constraints satisfied by $\tau_S$ but not by $\tau_K$.
We show that each such constraint contributes its weight
to at least one sum on the right.

Consider such a constraint $C$. First suppose that
$K\cap U_C=\varnothing$. Since
$K\cap N_C\subseteq S\cap N_C$, the \Rooted property gives
a variable $v\in S\cap U_C$ such that $C$ remains satisfied
when we keep only $v$ and the controls in $K\cap N_C$.
These are precisely the variables of $\scope(C)$ selected
by $\tau_{K\cup\{v\}}$. Hence
\[
   \tau_{K\cup\{v\}}\models C.
\]
Moreover, $v\notin K$. Thus $C\in\mathcal A_K(v)$, and
$w(C)$ is included in $m_K(v)$.

Now suppose that $K\cap U_C\ne\varnothing$, and choose
$u\in K\cap U_C$. By the definition of $U_C$,
$\tau_{\{u\}}\models C$. Since $C$ is satisfied by $\tau_S$
but not by $\tau_K$, its scope must contain a variable
$x\in S\setminus K$. Therefore, $w(C)$ is included in
$I(u\to x)$.

These cases cover every lost constraint. Since weights
are nonnegative, counting a constraint more than once
only increases the bound. This proves
\eqref{eq:arityd-rooted-projection}.
\end{proof}

\begin{lemma}
\label{lem:arityd-rooted-insertion}
Let $K\subseteq X$ and let
$P=\{p_1,\ldots,p_s\}\subseteq X\setminus K$, with its
distinct elements listed in a fixed order. Put
$P_{<i}:=\{p_1,\ldots,p_{i-1}\}$. Then
\begin{align}
\label{eq:arityd-rooted-insertion}
 F(K\cup P)-F(K)\ge{}&
 \sum_{i=1}^s m_K(p_i)
 -\sum_{i=1}^s\sum_{j<i}I(p_i\to p_j)\notag\\
 &-\sum_{i=1}^s\sum_{u\in K\cup P_{<i}}I(u\to p_i).
\end{align}
\end{lemma}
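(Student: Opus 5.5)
We will telescope along the insertion order. Put $K_i:=K\cup P_{<i}\cup\{p_i\}$ for $i=1,\ldots,s$ and $K_0:=K$, so that
\[
 F(K\cup P)-F(K)=\sum_{i=1}^s\bigl(F(K\cup P_{<i}\cup\{p_i\})-F(K\cup P_{<i})\bigr).
\]
Each increment equals the weight of constraints \emph{gained} when adding $p_i$ minus the weight \emph{lost}. We bound the gain from below by $m_K(p_i)-\sum_{j<i}I(p_i\to p_j)$. We bound the loss from above by $\sum_{u\in K\cup P_{<i}}I(u\to p_i)$. Summing over $i$ then gives \eqref{eq:arityd-rooted-insertion}.

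For the gain, consider $C\in\mathcal A_K(p_i)$. There are two cases.
\begin{itemize}
\item If $\scope(C)\cap P_{<i}=\varnothing$, then $\tau_{K\cup P_{<i}}[C]=\tau_K[C]$ and $\tau_{K\cup P_{<i}\cup\{p_i\}}[C]=\tau_{K\cup\{p_i\}}[C]$. We need $C$ to be unsatisfied by $\tau_K$: this holds because $K\cap U_C=\varnothing$ and $\mathcal A_K$ is meant to count newly satisfied constraints. Since $C$ is satisfied by $\tau_{K\cup\{p_i\}}$ by definition, $C$ is newly satisfied at step $i$ and contributes $w(C)$ to the gain.
\item If $\scope(C)$ meets $P_{<i}$ at some $p_j$, we discard $C$. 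Because $p_i\in U_C$ means $\tau_{\{p_i\}}\models C$, its weight lies in $I(p_i\to p_j)$.
\end{itemize}
Hence the gain is at least $m_K(p_i)-\sum_{j<i}I(p_i\to p_j)$, using nonnegativity of the weights. Constraints newly satisfied outside $\mathcal A_K(p_i)$ only help.

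The step I expect to need care is the first case above: checking that $C\in\mathcal A_K(p_i)$ is not already satisfied by $\tau_K$. This does not follow from the definition of $\mathcal A_K$ alone. If $C$ were satisfied by $\tau_K$ with no unit witness in $K$, the \Rooted property would force some selected unit witness to exist. This is impossible, because $K\cap U_C=\varnothing$ and no other scope variable is selected. Concretely, \Rooted implies every satisfying tuple has a $1$ at a unit-witness coordinate, and the selected variables of $\tau_K$ in $\scope(C)$ lie in $N_C$. So $\tau_K\not\models C$ and the gain count is valid.

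For the loss, let $C'$ be satisfied by $\tau_{K\cup P_{<i}}$ but not by $\tau_{K\cup P_{<i}\cup\{p_i\}}$. Only $p_i$ changed, so $p_i\in\scope(C')$. By the \Rooted property (via Lemma~\ref{lem:arityd-repair-accessibility}, accessibility), the satisfying tuple has a $1$ at some unit-witness coordinate. Thus there is $u\in(K\cup P_{<i})\cap U_{C'}$, and $u\ne p_i$. Then $\tau_{\{u\}}\models C'$ and $\{u,p_i\}\subseteq\scope(C')$, so $w(C')$ is counted in $I(u\to p_i)$. Charging each lost constraint to one such $u$ bounds the loss at step $i$ by $\sum_{u\in K\cup P_{<i}}I(u\to p_i)$. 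Combining this with the gain bound and summing over $i$ completes the proof.
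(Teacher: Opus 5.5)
Your proof is correct and takes essentially the same approach as the paper. It uses the same telescoping over $K_i=K\cup\{p_1,\ldots,p_i\}$, the same accessibility argument showing $\tau_K\not\models C$ for $C\in\mathcal A_K(p_i)$ and supplying a selected unit witness $u\in K\cup P_{<i}$ for each lost constraint, and the same charging of the exceptional weight to $I(p_i\to p_j)$ and $I(u\to p_i)$.
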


\begin{proof}
Put $K_i:=K\cup\{p_1,\ldots,p_i\}$ for $i=0,\ldots,s$,
so $K_0=K$ and $K_{i-1}=K\cup P_{<i}$.
We bound the gain and loss when $p_i$ is added.
By Lemma~\ref{lem:arityd-repair-accessibility}, every
assignment satisfying a constraint $C$ selects at least
one variable in $U_C$.

First consider $C\in\mathcal A_K(p_i)$ whose scope contains
no variable from $P_{<i}$. Before adding $p_i$, the assignment
on $C$ agrees with $\tau_K[C]$. Since
$K\cap U_C=\varnothing$, this assignment does not satisfy $C$.
After adding $p_i$, the assignment on $C$ agrees with
$\tau_{K\cup\{p_i\}}[C]$, which satisfies $C$ by the definition
of $\mathcal A_K(p_i)$. Thus $C$ gives a new gain of $w(C)$.

A constraint in $\mathcal A_K(p_i)$ containing an earlier
variable $p_j$ may fail to give this gain. Its weight is
included in $I(p_i\to p_j)$, since selecting $p_i$ alone
satisfies $C$. Therefore, the total weight of candidate
constraints that may fail to give a new gain is at most
\[
   \sum_{j<i}I(p_i\to p_j).
\]
Counting a constraint more than once only increases this
upper bound.

Now consider a constraint $C$ satisfied by $\tau_{K_{i-1}}$
but not by $\tau_{K_i}$. Since only $p_i$ changes value,
$p_i\in\scope(C)$. Accessibility gives a variable
$u\in K_{i-1}\cap U_C$. Selecting $u$ alone satisfies $C$,
so $w(C)$ is included in $I(u\to p_i)$.
Thus the total weight lost is at most
$\sum_{u\in K_{i-1}}I(u\to p_i)$.

Combining the gain and loss bounds gives
\begin{align*}
 F(K_i)-F(K_{i-1})\ge{}&
 m_K(p_i)-\sum_{j<i}I(p_i\to p_j)\\
 &-\sum_{u\in K_{i-1}}I(u\to p_i).
\end{align*}
Since $K_0=K$, $K_s=K\cup P$, and
$K_{i-1}=K\cup P_{<i}$, summing the preceding inequalities gives
\begin{align*}
 F(K\cup P)-F(K)
 &=\sum_{i=1}^s\bigl(F(K_i)-F(K_{i-1})\bigr)\\
 &\ge \sum_{i=1}^s m_K(p_i)
   -\sum_{i=1}^s\sum_{j<i}I(p_i\to p_j)\\
 &\qquad
   -\sum_{i=1}^s\sum_{u\in K\cup P_{<i}}I(u\to p_i).
\end{align*}
This proves \eqref{eq:arityd-rooted-insertion}.
\end{proof}

\subsection{Choosing candidates with small losses}

Fix a rational $\delta\in(0,1)$ and a partial assignment
$\rho:H\to\{0,1\}$, where $H\subseteq X$ and
$|\rho^{-1}(1)|\le k$. Put
\[
   H^1:=\rho^{-1}(1),\qquad H^0:=\rho^{-1}(0),
   \qquad Y:=X\setminus H^0.
\]
The search starts at $K=H^1$ and maintains
$H^1\subseteq K\subseteq Y$ and $|K|\le k$.
At a call with required selected set $K$, the available variables
are those in $Y\setminus K$.
Fix an order on $X$ for breaking ties.

\paragraph*{Parameters and the graph.}
By Lemma~\ref{lem:arityd-random-isolation},
\begin{equation}
\label{eq:arityd-rooted-random-isolation}
   \sum_{v\in Q}M(v)\le c_d\OPT(\mathcal I)
   \qquad(Q\subseteq X,\ |Q|\le k),
\end{equation}
where $c_d$ is defined in \eqref{eq:arityd-cd}.
Set
\begin{equation}
\label{eq:arityd-rooted-parameters}
   \theta:=\frac{\delta}{2kc_d},
   \qquad
   d_{\max}:=\left\lceil\frac{d-1}{\theta}\right\rceil,
   \qquad
   \ell:=k(3d_{\max}+1).
\end{equation}
Construct a directed graph on $X$. For distinct $v,x\in X$,
draw an arrow $v\to x$ exactly when
$I(v\to x)>\theta M(v)$, and give it weight $I(v\to x)$.
The graph is computed once before the search. Write
\[
   E(v):=\{x\in X\setminus\{v\}:I(v\to x)>\theta M(v)\}
\]
for the outgoing neighbors of $v$.
If $M(v)=0$, Lemma~\ref{lem:arityd-rooted-outgoing} gives
$I(v\to x)=0$ for every $x\ne v$, so $E(v)=\varnothing$.
If $M(v)>0$, each outgoing arrow has weight greater than
$\theta M(v)$, and their total weight is at most $(d-1)M(v)$.
Consequently,
\begin{equation}
\label{eq:arityd-rooted-heavy-degree}
   |E(v)|\le d_{\max}\qquad(v\in X).
\end{equation}

\paragraph*{Candidate sets.}
At a call with required set $K$, define
\begin{equation}
\label{eq:arityd-rooted-ek}
   E^{\mathrm{out}}(K):=
   \left(\bigcup_{u\in K}E(u)\right)\cap(Y\setminus K),
   \qquad
   |E^{\mathrm{out}}(K)|\le |K|d_{\max}\le kd_{\max}.
\end{equation}
These are the available variables receiving an arrow from $K$.
If $x\in Y\setminus K$ lies outside $E^{\mathrm{out}}(K)$, then
\[
   I(u\to x)\le\theta M(u)\qquad(u\in K).
\]
Let $T_K$ consist of the $\ell$ available variables with largest
positive scores $m_K(v)$. If fewer than $\ell$ available variables
have positive scores, take all of them. Break ties by the fixed
order on $X$. The search branches on
\begin{equation}
\label{eq:arityd-rooted-branch-set}
   B_K:=T_K\cup E^{\mathrm{out}}(K)\subseteq Y\setminus K.
\end{equation}

\paragraph*{Replacement sets.}
When $|T_K|=\ell$, form an undirected graph $G_K$ on
$T_K\setminus E^{\mathrm{out}}(K)$. Distinct vertices $p,q$
are adjacent exactly when
\begin{equation}
\label{eq:arityd-rooted-candidate-edges}
   I(p\to q)>\theta M(p)
   \quad\text{or}\quad
   I(q\to p)>\theta M(q).
\end{equation}

\begin{lemma}
\label{lem:arityd-rooted-independent-set}
If $|T_K|=\ell$, we can deterministically find, in polynomial time,
a set of $k$ variables
\begin{equation}
\label{eq:arityd-rooted-independent-set}
   Q_K=\{q_1,\ldots,q_k\}\subseteq
   T_K\setminus E^{\mathrm{out}}(K)
\end{equation}
such that
\[
   I(u\to v)\le\theta M(u)
   \qquad
   (v\in Q_K,\ u\in K\cup Q_K,\ u\ne v).
\]
\end{lemma}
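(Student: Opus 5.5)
The plan is to extract $Q_K$ as an independent set of the undirected graph $G_K$ via a degeneracy-based greedy coloring, and then read off the required inequalities from the definitions of the arrows and of $E^{\mathrm{out}}(K)$.

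\emph{Step 1: size of the vertex set.} Since $|T_K|=\ell=k(3d_{\max}+1)$ and $|E^{\mathrm{out}}(K)|\le kd_{\max}$ by \eqref{eq:arityd-rooted-ek}, the vertex set $V_K:=T_K\setminus E^{\mathrm{out}}(K)$ has at least $k(2d_{\max}+1)$ vertices.

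\emph{Step 2: degeneracy of $G_K$.} Every edge $\{p,q\}$ of $G_K$ comes from an arrow $p\to q$ or $q\to p$ of the directed graph, by \eqref{eq:arityd-rooted-candidate-edges}. Charge each edge to one such arrow. Within any nonempty $W\subseteq V_K$, each vertex has at most $d_{\max}$ outgoing arrows by \eqref{eq:arityd-rooted-heavy-degree}, so $G_K[W]$ has at most $d_{\max}|W|$ edges. Hence its average degree is at most $2d_{\max}$, and it contains a vertex of degree at most $2d_{\max}$. So $G_K$ is $2d_{\max}$-degenerate.

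\emph{Step 3: coloring and choice of $Q_K$.} Repeatedly remove a minimum-degree vertex to obtain an ordering, then color greedily in reverse order. This uses at most $2d_{\max}+1$ colors and runs in deterministic polynomial time. A largest color class has size at least $|V_K|/(2d_{\max}+1)\ge k$. Let $Q_K$ be any $k$ of its vertices, listed as $q_1,\ldots,q_k$. Note that $E(v)$ and the sums $I$ are precomputed, so everything here is polynomial in $\ell_{\mathrm{in}}$.

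\emph{Step 4: the inequalities.} Take $v\in Q_K$ and $u\in K\cup Q_K$ with $u\ne v$.
\begin{itemize}
\item If $u\in Q_K$, then $u$ and $v$ are nonadjacent in $G_K$. Hence $I(u\to v)\le\theta M(u)$ by \eqref{eq:arityd-rooted-candidate-edges}.
\item If $u\in K$, then $v\in Y\setminus K$ and $v\notin E^{\mathrm{out}}(K)$, so $v\notin E(u)$. By the definition of $E(u)$, this gives $I(u\to v)\le\theta M(u)$.
\end{itemize}
Since $Q_K\subseteq T_K\subseteq Y\setminus K$, the sets $K$ and $Q_K$ are disjoint and the two cases are exhaustive.

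The only step requiring care is Step 2: the bound must count arrows rather than undirected neighbors. A vertex may receive many arrows, so the maximum degree of $G_K$ can be unbounded, and only the degeneracy bound is available. This is why the argument goes through coloring rather than a max-degree greedy independent set.
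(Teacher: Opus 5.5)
Your proposal is correct and follows the paper's proof essentially step for step. It uses the same count $|T_K\setminus E^{\mathrm{out}}(K)|\ge \ell-kd_{\max}=k(2d_{\max}+1)$, the same arrow-charging bound of $d_{\max}|W|$ edges on induced subgraphs, the same min-degree removal with greedy coloring in reverse order using $2d_{\max}+1$ colors, and the same derivation of the inequalities from nonadjacency in $G_K$ and from $Q_K\cap E^{\mathrm{out}}(K)=\varnothing$.
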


\begin{proof}
Since $|T_K|=\ell$ and
$|E^{\mathrm{out}}(K)|\le kd_{\max}$,
\[
   |T_K\setminus E^{\mathrm{out}}(K)|
   \ge \ell-kd_{\max}=k(2d_{\max}+1).
\]

Let $W$ be a nonempty subset of the vertices of $G_K$.
Every edge in the subgraph induced by $W$ comes from at
least one directed arrow with both endpoints in $W$.
There are at most $d_{\max}|W|$ such arrows, since each
vertex has at most $d_{\max}$ outgoing arrows.
The induced subgraph therefore has at most
$d_{\max}|W|$ edges and average degree at most $2d_{\max}$.
Hence it contains a vertex with at most $2d_{\max}$ neighbors.

Repeatedly remove a vertex of minimum degree, breaking
ties by the fixed order on $X$. Each vertex has at most
$2d_{\max}$ neighbors still present when it is removed.
Now color the vertices in reverse removal order.
The already colored neighbors of a vertex are exactly
those that were still present when it was removed.
Thus at most $2d_{\max}$ colors are forbidden.
Assigning the first available color uses at most
$2d_{\max}+1$ colors. This procedure is deterministic
and takes polynomial time~\cite{Diestel2000,MatulaBeck1983}.

A largest color class has at least
\[
   \frac{|T_K\setminus E^{\mathrm{out}}(K)|}
        {2d_{\max}+1}\ge k
\]
vertices. Choose such a class, breaking ties by color
number, and let $q_1,\ldots,q_k$ be its first $k$ vertices
in the fixed order on $X$.
These vertices are pairwise nonadjacent in $G_K$,
so no arrow joins two distinct variables in $Q_K$.
Moreover, no arrow goes from $K$ to $Q_K$, because
$Q_K\subseteq T_K\setminus E^{\mathrm{out}}(K)$.
The definition of the arrows now gives the required
inequality for every $v\in Q_K$ and
$u\in K\cup Q_K$ with $u\ne v$.
\end{proof}

\subsection{The search}

\paragraph*{Algorithm.}
Algorithm~\ref{alg:arityd-rooted} takes a partial assignment
$\rho$, which fixes some variables to $0$ or $1$. It returns
a selected set $A$ with $|A|\le k$ such that $\tau_A$ agrees
with $\rho$ on every fixed variable.
The scores $M$, the sums $I$, and the directed graph are
computed once. At each call with $|K|<k$, the algorithm
computes $m_K(v)$ for every available variable
$v\in Y\setminus K$.

At a call with required set $K$, the algorithm records $K$.
If $|K|<k$ and $|T_K|=\ell$, it also records
\begin{equation}
\label{eq:arityd-rooted-recorded-prefixes}
   K\cup\{q_1,\ldots,q_j\}
   \qquad(0\le j\le k-|K|),
\end{equation}
where $Q_K=\{q_1,\ldots,q_k\}$ is constructed by
Lemma~\ref{lem:arityd-rooted-independent-set}.
When $|K|<k$, it calls the search on $K\cup\{v\}$ for every
$v\in B_K$.
The call returns a set with largest $F$-value among its recorded
sets and the sets returned by its children.

\begin{algorithm}[H]
\small
\DontPrintSemicolon
\SetAlgoNoEnd
\SetKwFunction{RootSearch}{RootedSearch}
\SetKwProg{Fn}{Procedure}{:}{}
\caption{\Rooted algorithm}
\label{alg:arityd-rooted}
\KwIn{Zero-free \Rooted instance $\mathcal I$ with budget $k$;
rational $\delta\in(0,1)$; partial assignment
$\rho:H\to\{0,1\}$ with $|\rho^{-1}(1)|\le k$.}
\KwOut{A selected set $A$ with $|A|\le k$ such that $\tau_A$ extends $\rho$.}
\If{$k=0$}{\Return $\varnothing$\;}
Set $d$ to the largest constraint arity, taking $d=1$
if $\mathcal C=\varnothing$\;
Fix an order on $X$ for enumerating branches and breaking ties\;
Set $H^1\gets\rho^{-1}(1)$, $H^0\gets\rho^{-1}(0)$, and
$Y\gets X\setminus H^0$\;
Set $\theta,d_{\max},\ell$ as in \eqref{eq:arityd-rooted-parameters}\;
Compute $M$, $I$, and the directed graph once\;
\Return \RootSearch{$H^1$}\;
\BlankLine
\Fn{\RootSearch{$K$}}{
  $\mathcal L\gets\{K\}$\;
  \If{$|K|=k$}{\Return $K$\;}
  Recompute $m_K(v)$ for all $v\in Y\setminus K$\;
  Form $T_K$, $E^{\mathrm{out}}(K)$, and $B_K$ as defined above\;
  \If{$B_K=\varnothing$}{\Return $K$\;}
  \If{$|T_K|=\ell$}{
    Form $G_K$ on $T_K\setminus E^{\mathrm{out}}(K)$\;
    Construct $Q_K=\{q_1,\ldots,q_k\}$ by
    Lemma~\ref{lem:arityd-rooted-independent-set}\;
    \For{$j=0,\ldots,k-|K|$}{
      $\mathcal L\gets\mathcal L\cup\{K\cup\{q_1,\ldots,q_j\}\}$\;
    }
  }
  \ForEach{$v\in B_K$}{
    $A_v\gets$\RootSearch{$K\cup\{v\}$}\;
    $\mathcal L\gets\mathcal L\cup\{A_v\}$\;
  }
  \Return a set in $\mathcal L$ with largest $F$-value\;
}
\end{algorithm}

\paragraph*{Replacement at a call.}
The next lemma bounds the loss from replacing a feasible
selected set $S\supseteq K$ by a set recorded at the current
call when $(S\setminus K)\cap B_K=\varnothing$.

\begin{lemma}
\label{lem:arityd-rooted-terminal}
Let $K$ be the required selected set at a search call, and let
$K\subseteq S\subseteq Y$ with $|S|\le k$.
If $(S\setminus K)\cap B_K=\varnothing$, then the call records
a set $A$ with $K\subseteq A\subseteq Y$ and $|A|\le k$ such that
\begin{equation}
\label{eq:arityd-rooted-terminal}
   F(A)\ge F(S)-\delta\OPT(\mathcal I).
\end{equation}
\end{lemma}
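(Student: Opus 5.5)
Throughout, write $s:=|S\setminus K|$ and $R:=S\setminus K$. The plan is to start from Lemma~\ref{lem:arityd-rooted-projection} applied to $K\subseteq S$. Since $R\cap B_K=\varnothing$, no variable of $R$ lies in $E^{\mathrm{out}}(K)$, so $I(u\to x)\le\theta M(u)$ for every $u\in K$ and $x\in R$. Summing over the $s$ choices of $x$ gives
\[
 F(S)-F(K)\le\sum_{v\in R}m_K(v)+\theta s\sum_{u\in K}M(u).
\]
Because $|K|\le k$, the bound \eqref{eq:arityd-rooted-random-isolation} gives $\sum_{u\in K}M(u)\le c_d\OPT$. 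Together with $s\le k$, the error term is therefore at most $\theta kc_d\OPT=\tfrac{\delta}{2}\OPT$.

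If $s=0$, take $A=K$; this set is always recorded and the claim is trivial. Otherwise $|K|<k$, and the proof splits into two cases.

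\textbf{Case 1: $|T_K|<\ell$.} Then $T_K$ contains every available variable with positive score. Every $v\in R$ is available, since $R\subseteq Y\setminus K$, and $v\notin T_K$. Hence $m_K(v)=0$ for all $v\in R$. Taking $A=K$ gives $F(A)\ge F(S)-\tfrac{\delta}{2}\OPT$.

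\textbf{Case 2: $|T_K|=\ell$.} Take $A=K\cup Q_s$, where $Q_s=\{q_1,\ldots,q_s\}$ is built from $Q_K$ as in Lemma~\ref{lem:arityd-rooted-independent-set}. Since $s\le k-|K|$, this set is recorded at the call. We also have $A\subseteq Y$, because $Q_K\subseteq T_K\subseteq Y\setminus K$, and $|A|=|K|+s\le|S|\le k$.

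Each $q_j$ lies in $T_K$, while each $v\in R$ lies outside $T_K$ and is available. So $m_K(q_j)\ge m_K(v)$ for all such pairs, the elements of $Q_s$ are $s$ distinct members of $T_K$, and hence
\[
 \sum_{j\le s}m_K(q_j)\ge\sum_{v\in R}m_K(v).
\]
This covers the subtle case where some $v\in R$ has $m_K(v)=0$, since the inequality is immediate then.

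Next apply Lemma~\ref{lem:arityd-rooted-insertion} with $P=Q_s$ in the order $q_1,\ldots,q_s$. Lemma~\ref{lem:arityd-rooted-independent-set} gives $I(q_i\to q_j)\le\theta M(q_i)$ and $I(u\to q_i)\le\theta M(u)$ for $u\in K\cup Q_K$ with $u\neq q_i$. The first interaction sum in Lemma~\ref{lem:arityd-rooted-insertion} is therefore at most $\theta s\sum_{v\in Q_s}M(v)$. In the second, each $u\in A$ is charged at most $s$ times, so it is at most $\theta s\sum_{u\in A}M(u)$.

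Subtracting the projection bound from this insertion bound gives
\[
 F(A)\ge F(S)-\theta s\Bigl(\sum_{u\in K}M(u)+\sum_{v\in Q_s}M(v)+\sum_{z\in A}M(z)\Bigr)=F(S)-2\theta s\sum_{z\in A}M(z).
\]
The equality holds because $K$ and $Q_s$ are disjoint with union $A$. Since $|A|\le k$, \eqref{eq:arityd-rooted-random-isolation} bounds the error by $2\theta kc_d\OPT=\delta\OPT$, using $\theta=\delta/(2kc_d)$.

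The main point needing care is the score comparison in Case 2. We must check that variables of $R$ are genuinely eligible for $T_K$, which holds because they are available, and that the tie-breaking in the definition of $T_K$ does not spoil the inequality; it cannot, because any excluded variable has score at most the minimum score in a full $T_K$. The remaining steps are bookkeeping of the interaction sums, all covered by the inequalities of Lemma~\ref{lem:arityd-rooted-independent-set}, including the arrows from $K$ to $Q_K$.
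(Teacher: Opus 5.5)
Your proposal is correct and follows the paper's proof essentially step for step. You use the projection bound with no arrows from $K$ into $S\setminus K$, split on whether $|T_K|<\ell$ (take $A=K$) or $|T_K|=\ell$ (take $A=K\cup\{q_1,\ldots,q_s\}$), and finish with top-score domination, the insertion bound under the guarantees of Lemma~\ref{lem:arityd-rooted-independent-set}, and the bound \eqref{eq:arityd-rooted-random-isolation}.
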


\begin{proof}
Put $s:=|S\setminus K|$. If $s=0$, take the recorded set
$A:=K=S$. Assume henceforth that $s\ge1$, so $|K|<k$.

Since $E^{\mathrm{out}}(K)\subseteq B_K$, the assumption gives
$I(u\to x)\le\theta M(u)$ for every $u\in K$ and
$x\in S\setminus K$. Lemma~\ref{lem:arityd-rooted-projection}
therefore implies
\begin{equation}
\label{eq:arityd-rooted-terminal-projection}
 F(K)\ge F(S)-\sum_{v\in S\setminus K}m_K(v)
       -\theta s\sum_{u\in K}M(u).
\end{equation}

Suppose first that $|T_K|<\ell$. Then $T_K$ contains every
available variable with positive candidate score.
Since $(S\setminus K)\cap T_K=\varnothing$, we have
$m_K(v)=0$ for every $v\in S\setminus K$.
Taking the recorded set $A:=K$ and applying
\eqref{eq:arityd-rooted-random-isolation}, we obtain
\[
 F(A)\ge F(S)-\theta s c_d\OPT(\mathcal I)
       \ge F(S)-\frac{\delta}{2}\OPT(\mathcal I),
\]
where the last inequality uses $s\le k$ and
$\theta kc_d=\delta/2$. Thus $A$ satisfies
\eqref{eq:arityd-rooted-terminal}.

Now suppose that $|T_K|=\ell$. Let
$Q_K=\{q_1,\ldots,q_k\}$ be the replacement set constructed
at this call, and put
\[
   P:=\{q_1,\ldots,q_s\},\qquad A:=K\cup P.
\]
Since $s\le k-|K|$, the algorithm records $A$.
Moreover, $P\subseteq Y\setminus K$, so
$K\subseteq A\subseteq Y$ and $|A|=|S|\le k$.

The sets $P$ and $S\setminus K$ both have $s$ variables.
Every variable in $P$ belongs to $T_K$, whereas
$S\setminus K$ is disjoint from $T_K$.
By the choice of $T_K$, every candidate score in $P$ is
at least every candidate score in $S\setminus K$. Hence
\begin{equation}
\label{eq:arityd-rooted-top-domination}
   \sum_{p\in P}m_K(p)
   \ge\sum_{v\in S\setminus K}m_K(v).
\end{equation}

Lemma~\ref{lem:arityd-rooted-independent-set} gives
$I(u\to v)\le\theta M(u)$ for every $v\in P$ and
$u\in K\cup P$ with $u\ne v$. Thus the two sums subtracted
in Lemma~\ref{lem:arityd-rooted-insertion} satisfy
\begin{equation}
\label{eq:arityd-rooted-overlap-error}
 \sum_{i=1}^s\sum_{j<i}I(q_i\to q_j)
 \le\theta\sum_{i=1}^s(i-1)M(q_i)
 \le\theta s\sum_{p\in P}M(p)
\end{equation}
and
\begin{equation}
\label{eq:arityd-rooted-global-error}
 \sum_{i=1}^s
 \sum_{u\in K\cup\{q_1,\ldots,q_{i-1}\}}I(u\to q_i)
 \le\theta s\sum_{z\in A}M(z).
\end{equation}

Using these bounds in Lemma~\ref{lem:arityd-rooted-insertion},
then applying \eqref{eq:arityd-rooted-terminal-projection}
and \eqref{eq:arityd-rooted-top-domination}, gives
\begin{align*}
 F(A)
 &\ge F(K)+\sum_{p\in P}m_K(p)
       -\theta s\left(
          \sum_{p\in P}M(p)+\sum_{z\in A}M(z)\right)\\
 &\ge F(S)-\theta s\left(
          \sum_{u\in K}M(u)+\sum_{p\in P}M(p)
          +\sum_{z\in A}M(z)\right)\\
 &=F(S)-2\theta s\sum_{z\in A}M(z)\\
 &\ge F(S)-2\theta kc_d\OPT(\mathcal I)\\
 &=F(S)-\delta\OPT(\mathcal I).
\end{align*}
The first equality uses $A=K\cup P$ and $K\cap P=\varnothing$.
The last two steps use $|A|\le k$, $s\le k$,
\eqref{eq:arityd-rooted-random-isolation}, and
$2\theta kc_d=\delta$.
\end{proof}

\paragraph*{Correctness.}
We prove feasibility and the approximation guarantee for
every search call by induction on the remaining budget
$k-|K|$.

\begin{lemma}
\label{lem:arityd-rooted-search}
Let $H^1\subseteq K\subseteq Y$ with $|K|\le k$.
The search call in Algorithm~\ref{alg:arityd-rooted}
with required selected set $K$ returns a set $A$ with
\[
   K\subseteq A\subseteq Y,\qquad |A|\le k,
\]
such that
\[
   F(A)\ge F(S)-\delta\OPT(\mathcal I)
\]
for every $S$ with $K\subseteq S\subseteq Y$ and $|S|\le k$.
\end{lemma}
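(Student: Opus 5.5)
We argue by induction on the remaining budget $k-|K|$, following the recursion of \RootSearch.

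\emph{Feasibility.} Every set recorded at a call with required set $K$ satisfies the claimed containments. The set $K$ itself does, and each prefix $K\cup\{q_1,\ldots,q_j\}$ with $j\le k-|K|$ does because $Q_K\subseteq Y\setminus K$. Every child is called on $K\cup\{v\}$ with $v\in B_K\subseteq Y\setminus K$ and $|K|<k$. By induction, the child returns a set containing $K\cup\{v\}\supseteq K$, contained in $Y$, of size at most $k$. The returned set is one of these candidates, so it is feasible and lies between $K$ and $Y$.

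\emph{Approximation.} Fix $S$ with $K\subseteq S\subseteq Y$ and $|S|\le k$. We split into three cases.

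If $|K|=k$, then $S=K$ and the call returns $K$, so the bound holds with no loss.

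Otherwise, suppose $(S\setminus K)\cap B_K=\varnothing$. This includes the early return when $B_K=\varnothing$, and the case $S=K$. By Lemma~\ref{lem:arityd-rooted-terminal}, the call records some $A'$ with $F(A')\ge F(S)-\delta\OPT(\mathcal I)$. In the case $|T_K|<\ell$, that $A'$ is $K$, which is always recorded. In the case $|T_K|=\ell$, it is a prefix of $Q_K$, which is recorded whenever $|K|<k$. One point needs checking here: when $B_K=\varnothing$, the algorithm returns $K$ before forming $Q_K$. But $B_K=\varnothing$ forces $T_K=\varnothing$, hence $|T_K|<\ell$ since $\ell\ge k\ge1$. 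So only $A'=K$ is needed, and it is recorded. The returned set has the largest $F$-value among the recorded sets, so it satisfies the bound.

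In the remaining case, some $v\in(S\setminus K)\cap B_K$ exists and $|K|<k$. The child call on $K'=K\cup\{v\}$ satisfies $H^1\subseteq K'\subseteq S\subseteq Y$ and has strictly smaller remaining budget. The induction hypothesis applied to this same $S$ gives that the child returns $A_v$ with $F(A_v)\ge F(S)-\delta\OPT(\mathcal I)$. Since $A_v$ is among the candidates, the returned set is at least as good.

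The only subtle point is that the loss $\delta\OPT$ is incurred once rather than accumulated along the recursion path. This works because the inductive step passes the same comparison set $S$ unchanged down to a child and invokes Lemma~\ref{lem:arityd-rooted-terminal} only at the terminal call. The main care needed is checking that every early return in the pseudocode still records the set that lemma produces, as verified above.
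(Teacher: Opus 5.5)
Your proof is correct and follows the paper's argument: induction on $k-|K|$, with the case split on whether $S\setminus K$ meets $B_K$. In that case split, you either pass the same $S$ down to a child or invoke Lemma~\ref{lem:arityd-rooted-terminal} at the current call. Your check that the early return when $B_K=\varnothing$ still records the needed set $K$ (since then $|T_K|=0<\ell$) is a detail the paper leaves implicit.
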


\begin{proof}
We prove feasibility and the value bound by induction on
the remaining budget $k-|K|$.

If $|K|=k$, the call returns $A=K$. Any set $S$ containing
$K$ and having size at most $k$ must equal $K$.
Thus $A$ is feasible and $F(A)=F(S)$, which proves
the value bound.

Suppose that $|K|<k$, and assume that the lemma holds
for every required set with smaller remaining budget.
Each recursive call has required set $K\cup\{v\}$ for
some $v\in B_K\subseteq Y\setminus K$. Hence
\[
   H^1\subseteq K\cup\{v\}\subseteq Y,
   \qquad |K\cup\{v\}|=|K|+1\le k.
\]
Its remaining budget is $k-|K|-1$, so the induction
hypothesis applies to every recursive call.

Every set recorded directly at the current call is either
$K$ or a set $K\cup\{q_1,\ldots,q_j\}$ with
$0\le j\le k-|K|$. Whenever such a prefix is recorded,
its variables belong to the replacement set
$Q_K\subseteq Y\setminus K$. Thus every directly recorded
set contains $K$, lies in $Y$, and has size at most $k$.
For every child call, the induction hypothesis gives
\[
   K\cup\{v\}\subseteq A_v\subseteq Y,
   \qquad |A_v|\le k.
\]
Thus every directly recorded set and every set returned
by a child is feasible for the current call.
The returned set $A$ is one of these sets, so
$K\subseteq A\subseteq Y$ and $|A|\le k$.

It remains to prove the value bound.
Fix any $S$ with $K\subseteq S\subseteq Y$ and $|S|\le k$.
Suppose first that
$(S\setminus K)\cap B_K\ne\varnothing$.
Choose $v$ in this intersection.
The algorithm makes a recursive call with required set
$K\cup\{v\}$, and $S$ satisfies
\[
   K\cup\{v\}\subseteq S\subseteq Y,\qquad |S|\le k.
\]
The induction hypothesis therefore gives
\[
   F(A_v)\ge F(S)-\delta\OPT(\mathcal I).
\]
The current call returns a set of value at least that of
every set returned by a child. Consequently,
\[
   F(A)\ge F(A_v)\ge F(S)-\delta\OPT(\mathcal I).
\]

Now suppose that
$(S\setminus K)\cap B_K=\varnothing$.
Lemma~\ref{lem:arityd-rooted-terminal} gives a set $A'$
recorded directly at the current call with
\[
   F(A')\ge F(S)-\delta\OPT(\mathcal I).
\]
The returned set has value at least that of every directly
recorded set, so
\[
   F(A)\ge F(A')\ge F(S)-\delta\OPT(\mathcal I).
\]
Since $S$ was arbitrary, the value bound holds for every
permitted set $S$. This completes the induction.
\end{proof}

\paragraph*{Running time.}
\begin{theorem}
\label{thm:arityd-rooted-zero-free}
Let $\mathcal I=(X,\mathcal C,w,k)$ be an instance of arity
at most $d$ whose relations are nonempty, zero-free and
\Rooted. For every rational $\delta\in(0,1)$ and partial
assignment $\rho$ fixing at most $k$ variables to $1$,
Algorithm~\ref{alg:arityd-rooted} returns a selected set $A$
with $|A|\le k$ such that $\tau_A$ extends $\rho$ and
\[
   F(A)\ge F(S)-\delta\OPT(\mathcal I)
\]
for every selected set $S$ with $|S|\le k$ such that
$\tau_S$ extends $\rho$.

With membership tests as specified in
Remark~\ref{rem:arityd-uniform-algorithms}, the algorithm
runs in time
\[
   2^{O(k\log(1+dk/\delta))}\ell_{\mathrm{in}}^{O(1)}.
\]
\end{theorem}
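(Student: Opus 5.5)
The correctness part of the theorem follows almost directly from Lemma~\ref{lem:arityd-rooted-search}, applied at the root call $K=H^1$. If $k=0$, the algorithm returns $\varnothing$, which is the only feasible set, and it extends $\rho$ because $|\rho^{-1}(1)|\le 0$. For $k\ge1$, a selected set $S$ with $|S|\le k$ such that $\tau_S$ extends $\rho$ is exactly a set with $H^1\subseteq S\subseteq Y$. So the root call returns $A$ with $H^1\subseteq A\subseteq Y$ and $|A|\le k$. This means $\tau_A$ extends $\rho$, and the lemma gives $F(A)\ge F(S)-\delta\OPT(\mathcal I)$ for every such $S$. The only remaining work is the running-time bound.

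For the running time, I would count the nodes of the recursion tree. Each call with $|K|<k$ has at most $|B_K|\le \ell+kd_{\max}$ children. Each child adds one variable to $K$, and the root has $|K|\ge0$, so the depth is at most $k$. From \eqref{eq:arityd-rooted-parameters} we have $\theta=\delta/(2kc_d)$, and since $c_d\le ed$ this gives $d_{\max}\le (d-1)/\theta+1=O(dk\cdot d/\delta)$, hence $d_{\max}=O(d^2k/\delta)$. Therefore $\ell+kd_{\max}=O(d^2k^2/\delta)$. The number of nodes is at most $(\ell+kd_{\max}+1)^{k+1}$. Because $d^2k^2/\delta\le(1+dk/\delta)^2$, this is $2^{O(k\log(1+dk/\delta))}$, where the extra factor from exponent $k+1$ is absorbed using $k\ge1$.

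Next I would bound the per-node work and the preprocessing by $\ell_{\mathrm{in}}^{O(1)}$, with the caveat that $\ell$ and $d_{\max}$ may exceed the input size.
\begin{itemize}
\item Preprocessing computes $M(v)$ and $I(v\to x)$ for all pairs with $v\ne x$ sharing a constraint. This needs one membership test per pair $(C,v)$ plus $O(d^2m)$ additions of rationals, so it is polynomial.
\item At each node, computing $m_K(v)$ for every available $v$ needs, for each constraint and each $v$ in its scope, membership tests of $\tau_{K\cup\{v\}}[C]$ and $\tau_{\{x\}}[C]$ to determine $U_C$.
\item Sorting produces $T_K$, and $E^{\mathrm{out}}(K)$ is read off the precomputed graph.
\item The construction of $Q_K$ is polynomial by Lemma~\ref{lem:arityd-rooted-independent-set}.
\item Evaluating $F$ on at most $k+1$ recorded sets plus the children's returns is polynomial.
\end{itemize}
The sizes of $T_K$ and $B_K$ are at most $n$, so everything except the count of nodes is polynomial in $\ell_{\mathrm{in}}$. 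Arithmetic on rational weights stays polynomial in bit length because we only add and compare input weights, and comparisons with $\theta M(v)$ involve the rational $\delta$, which is part of $\ell_{\mathrm{in}}$. Membership tests are polynomial by Remark~\ref{rem:arityd-uniform-algorithms}, or constant for fixed $\Gamma$.

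The main obstacle, mild as it is, is keeping the exponent of the polynomial independent of $k,d,\delta$ while making sure that the node count, and not the per-node work, absorbs every dependence on the parameters. In particular, when $\ell$ or $d_{\max}$ exceeds $n$, I must note that $|T_K|=\ell$ simply fails, so only the actual sets of size at most $n$ are ever handled. Multiplying the node count by the per-node cost then gives $2^{O(k\log(1+dk/\delta))}\ell_{\mathrm{in}}^{O(1)}$.
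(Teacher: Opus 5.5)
Your proposal is correct and follows the paper's proof. Correctness comes from Lemma~\ref{lem:arityd-rooted-search} applied at the root call $K=H^1$. The running time comes from the branching factor $|B_K|\le k(4d_{\max}+1)$, depth at most $k$, the estimate $d_{\max}+1=O(d^2k/\delta)$, and polynomial per-call work; your observation that prefixes are recorded only when $|T_K|=\ell\le|X|$ matches the paper's remark that $k\le|X|$ whenever $Q_K$ is built.

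The one detail you skip is the bit length of $\theta$. It involves $c_d=d^d/(d-1)^{d-1}$, which has $O(d\log(d+1))$ bits, and this is polynomial only because the actual arity is at most $\ell_{\mathrm{in}}$; the paper checks this explicitly.
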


\begin{proof}
If $k=0$, the assumption on $\rho$ implies that every
fixed variable is assigned $0$. The algorithm returns
the empty set, which is the unique feasible selected set
and respects $\rho$. Both the value bound and the
running-time bound follow. Assume henceforth that $k\ge1$.

The initial search call has required set
$K=H^1=\rho^{-1}(1)$, and
$Y=X\setminus\rho^{-1}(0)$. A selected set $S$ satisfies
$H^1\subseteq S\subseteq Y$ exactly when $\tau_S$ extends
$\rho$. Lemma~\ref{lem:arityd-rooted-search} therefore
gives the stated feasibility and value guarantees.
For the empty partial assignment, applying the lemma
to an optimal selected set gives
$F(A)\ge(1-\delta)\OPT(\mathcal I)$.

For the running-time analysis, it is enough to use the
largest constraint arity as $d$, taking $d=1$ if
$\mathcal C=\varnothing$. Using a larger arity bound only
enlarges the stated running-time bound.

Each call has at most
\[
   |B_K|\le\ell+kd_{\max}=k(4d_{\max}+1)
\]
children. Every child adds one variable to $K$, so the
recursion depth is at most $k-|H^1|\le k$.
The number of calls is therefore at most
\[
   \bigl(1+k(4d_{\max}+1)\bigr)^k.
\]
By \eqref{eq:arityd-cd}, we have $c_d\le ed$. Hence
\[
   d_{\max}
   \le 1+\frac{d-1}{\theta}
   =1+\frac{2kc_d(d-1)}{\delta},
\]
and thus $d_{\max}+1=O(d^2k/\delta)$.
Since $\delta<1$,
\[
   1+\frac{d^2k^2}{\delta}
   \le \left(1+\frac{dk}{\delta}\right)^2.
\]
It follows that the number of calls is
$2^{O(k\log(1+dk/\delta))}$.

We next bound the work done before the search and at
each call. For every constraint $C$, we compute $U_C$
by testing $\tau_{\{x\}}\models C$ for each
$x\in\scope(C)$, using at most $d$ membership tests.
For each $v\in U_C$, the weight $w(C)$ contributes to
$M(v)$ and to $I(v\to x)$ for every
$x\in\scope(C)\setminus\{v\}$. Thus all scores $M$ and
sums $I$ can be computed using at most $d^2$ additions
per constraint. The directed graph is then computed
once from these quantities.

At a call with required set $K$, a constraint can
contribute to a candidate score only if
$K\cap U_C=\varnothing$. For each available
$v\in U_C$, we test whether $\tau_{K\cup\{v\}}$ satisfies
$C$ and, if so, add $w(C)$ to $m_K(v)$.
This uses at most $d$ membership tests per constraint.
Sorting the scores, forming the candidate sets, and
constructing $Q_K$ by
Lemma~\ref{lem:arityd-rooted-independent-set}
take polynomial time in the input size.

A replacement set has $k$ distinct variables, so whenever
it is constructed, $k\le |X|$. Consequently, each call
records at most $|X|+1$ sets directly and receives at
most $|X|$ sets from its children. Each set can be
evaluated by scanning the original constraints.
All value evaluations at a call therefore take
polynomial time in the input size.

Finally, all scores and interaction sums are sums of
input weights, so their numerators and denominators
have polynomial bit length in $\ell_{\mathrm{in}}$.
For $d\ge2$, the numerator and denominator in
\[
   c_d=\frac{d^d}{(d-1)^{d-1}}
\]
have $O(d\log(d+1))$ bits; for $d=1$, we have $c_1=1$.
Since constraint scopes are explicitly listed, the
maximum arity used by the algorithm is at most
$\ell_{\mathrm{in}}$. Thus $c_d$, $\theta$,
$d_{\max}$ and $\ell$ also have polynomial bit length.
Exact arithmetic and comparisons therefore take
polynomial time in $\ell_{\mathrm{in}}$.

For fixed $\Gamma$, membership tests take constant time.
When relation descriptions are supplied as part of the input,
we assume that membership tests take time polynomial in
$\ell_{\mathrm{in}}$, with a fixed exponent across inputs.
Preprocessing and the work at each call consequently
take $\ell_{\mathrm{in}}^{O(1)}$ bit operations.
Combining this with the bound on the number of calls
proves the stated running time.
\end{proof}

\paragraph*{Adding zero-valid constraints.}
\begin{theorem}
\label{thm:arityd-rooted-epas}
Let $\Gamma$ be a fixed Boolean constraint language of arity
at most $d$. If every relation in $\Gamma_+$ is \Rooted,
then \maxwtcsp{}$(\Gamma)$ admits an \epas{} parameterized by
$k$, with running time
\[
   2^{O(k\log(1+dk/\varepsilon))}\ell_{\mathrm{in}}^{O(1)}.
\]
\end{theorem}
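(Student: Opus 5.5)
The plan is to combine the zero-valid lifting of Lemma~\ref{lem:arityd-zero-valid-lifting} with the zero-free \Rooted algorithm of Theorem~\ref{thm:arityd-rooted-zero-free}, which serves as the subroutine $\mathcal A$. First I check that Algorithm~\ref{alg:arityd-rooted} meets the interface required of $\mathcal A$ in Section~\ref{ss:arityd-zero-valid-lifting-distinct}. That interface takes an instance $\mathcal J$ over $\Gamma_+$, a partial assignment $\rho$ fixing at most $k$ variables to $1$, and a rational $\delta\in(0,1)$. It must return a feasible $\tau$ extending $\rho$ with $F^{\mathcal J}(\tau)\ge F^{\mathcal J}(\tau')-\delta\OPT(\mathcal J)$ for every feasible $\tau'$ extending $\rho$. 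An instance $\mathcal J=\mathcal I_+$ over $\Gamma_+$ has only nonempty, zero-free, \Rooted relations, because every relation of $\Gamma_+$ is \Rooted by hypothesis. Theorem~\ref{thm:arityd-rooted-zero-free} therefore gives exactly this guarantee. Feasible assignments correspond to selected sets $S$ with $|S|\le k$, and $\OPT(\mathcal J)$ is measured without imposing $\rho$, which matches the convention in the lemma.

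The second step records the running time of the subroutine. By Theorem~\ref{thm:arityd-rooted-zero-free}, it is $g(k,d,\delta)\ell_{\mathrm{in}}^{O(1)}$ with $g(k,d,\delta)=2^{O(k\log(1+dk/\delta))}$. The input length of $\mathcal I_+$ is at most that of $\mathcal I$, and membership tests take constant time because $\Gamma$ is fixed. Lemma~\ref{lem:arityd-zero-valid-lifting} then yields a $(1-\varepsilon)$-approximation for every instance over $\Gamma$. If $k=0$, that lemma returns the all-zero assignment, which is the only feasible one.

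The last step is the running-time arithmetic. The lemma gives
\[
 2^{O(k\log(1+dk/\varepsilon))}\cdot 2^{O(k\log(1+2dk/\varepsilon))}\,\ell_{\mathrm{in}}^{O(1)}.
\]
Since $\log(1+2dk/\varepsilon)\le 1+\log(1+dk/\varepsilon)$ and $\log(1+dk/\varepsilon)\ge\log 2$ for $k,d\ge1$ and $\varepsilon<1$, the extra term $O(k)$ is absorbed. The total is $2^{O(k\log(1+dk/\varepsilon))}\ell_{\mathrm{in}}^{O(1)}$. For fixed $\Gamma$, $d$ is a constant, so this is $f(k,\varepsilon)\ell_{\mathrm{in}}^{O(1)}$ with a polynomial exponent independent of $k$ and $\varepsilon$, which is an \epas{} parameterized by $k$.

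I do not expect a real obstacle. The one point needing care is that the lifting imposes $\rho$ as a restriction on the output and does not substitute values into the relations. Substituting could destroy zero-freeness or the \Rooted property. Algorithm~\ref{alg:arityd-rooted} is stated for exactly this restriction form, through $H^1$ and $Y$, so its hypotheses hold for every call made by the lifting.
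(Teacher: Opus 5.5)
Your proposal is correct and matches the paper's proof: it plugs Algorithm~\ref{alg:arityd-rooted}, via Theorem~\ref{thm:arityd-rooted-zero-free}, into Lemma~\ref{lem:arityd-zero-valid-lifting} as the subroutine $\mathcal A$ with $\delta=\varepsilon/2$. It then absorbs the product of running times using $1+2dk/\varepsilon\le 2(1+dk/\varepsilon)$, which is equivalent to your logarithmic bound.
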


\begin{proof}
Fix a rational $\varepsilon\in(0,1)$.
We apply Lemma~\ref{lem:arityd-zero-valid-lifting}.
It requires an algorithm $\mathcal A$ with the following
guarantee on instances over $\Gamma_+$. Given such an
instance $\mathcal J$ with budget $k$, a partial assignment
$\rho$ fixing at most $k$ variables to $1$, and a rational
$\delta\in(0,1)$, the algorithm must return a feasible
assignment $\tau$ extending $\rho$ such that
\[
 F^{\mathcal J}(\tau)
 \ge F^{\mathcal J}(\tau')-\delta\OPT(\mathcal J)
\]
for every feasible assignment $\tau'$ extending $\rho$.
Here $\OPT(\mathcal J)$ is measured before imposing $\rho$.

Define $\mathcal A(\mathcal J,\rho,\delta)$ to run
Algorithm~\ref{alg:arityd-rooted} with these inputs and
return the assignment $\tau_A$ corresponding to its
selected set $A$.
Every relation in $\Gamma_+$ is nonempty, zero-free and
\Rooted. Thus Theorem~\ref{thm:arityd-rooted-zero-free}
applies. It shows that $\tau_A$ is feasible, extends
$\rho$, and satisfies the required value bound.
The same theorem gives running time
\[
   2^{O(k\log(1+dk/\delta))}
   \ell_{\mathrm{in}}^{O(1)}.
\]
Thus $\mathcal A$ meets the assumptions of
Lemma~\ref{lem:arityd-zero-valid-lifting}.

Applying that lemma with this algorithm gives a
$(1-\varepsilon)$-approximate solution for every instance
over $\Gamma$. The reduction calls $\mathcal A$ with
$\delta=\varepsilon/2$, so its total running time is
\begin{align*}
 &2^{O(k\log(1+dk/\varepsilon))}
  \cdot 2^{O(k\log(1+2dk/\varepsilon))}
  \ell_{\mathrm{in}}^{O(1)}\\
 &\qquad=
  2^{O(k\log(1+dk/\varepsilon))}
  \ell_{\mathrm{in}}^{O(1)}.
\end{align*}
Here we use
$1+2dk/\varepsilon\le2(1+dk/\varepsilon)$.
Since $\Gamma$ is fixed, $d$ is constant, so this
is an \epas{} parameterized by $k$.
\end{proof}

\section{Algorithm for \Direct relations}
\label{arityd-direct-sec-main}

In this section, we provide an \epas{} for languages whose nonempty
zero-free relations are \Direct. We formalize the component construction
and item search described in Section~\ref{sec:overview}.

\subsection{Separating small components}
\label{arityd-direct-sec-color-coding}

\paragraph*{Boolean instance.}
For the zero-free analysis, fix a Boolean instance
$\mathcal I=(X,\mathcal C,w,k)$ of arity at most $d$ whose
relations are nonempty, zero-free and \Direct. Write
\[
 F(S):=F^{\mathcal I}(\tau_S)
 \qquad(S\subseteq X),
 \qquad
 M(x):=F(\{x\})
 \qquad(x\in X).
\]
We discard constraints of weight zero and assume $k\ge1$.
For $k=0$, the empty selected set is the only feasible solution.
For each constraint $C\in\mathcal C$, define
\[
 U_C:=\{x\in\scope(C):\tau_{\{x\}}\models C\},
 \qquad
 N_C:=\scope(C)\setminus U_C.
\]
As in the \Rooted analysis, $U_C$ consists of the unit-witness
variables of $C$, and $N_C$ consists of its controls.

\paragraph*{Shared-witness graph.}
For distinct variables $x,y\in X$, define
\[
 J(x,y):=
 \sum_{\substack{C\in\mathcal C:\ \{x,y\}\subseteq U_C}}w(C).
\]
Thus $J(x,y)=J(y,x)$ is the total weight of constraints for
which both $x$ and $y$ are unit witnesses. These weights and
the scores $M$ are computed once from the original instance.

\begin{lemma}
\label{arityd-direct-lem-pair-degree}
Every $x\in X$ satisfies
\[
 \sum_{y\in X\setminus\{x\}}J(x,y)\le(d-1)M(x).
\]
\end{lemma}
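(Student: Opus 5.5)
The plan is to argue exactly as in Lemma~\ref{lem:arityd-rooted-outgoing}: expand the left-hand side as a double sum over constraints and exchange the order of summation. By definition,
\[
 \sum_{y\ne x}J(x,y)
 =\sum_{y\ne x}\ \sum_{\substack{C\in\mathcal C:\\ \{x,y\}\subseteq U_C}}w(C)
 =\sum_{\substack{C\in\mathcal C:\\ x\in U_C}}w(C)\,\bigl|U_C\setminus\{x\}\bigr|,
\]
since a constraint $C$ contributes to $J(x,y)$ exactly when both $x$ and $y$ lie in $U_C$. Constraints with $x\notin U_C$ contribute nothing to any term.

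Next I bound the multiplicity. For every $C$ with $x\in U_C$, we have $U_C\subseteq\scope(C)$ and $|\scope(C)|\le d$. Hence $|U_C\setminus\{x\}|\le d-1$. The weights are nonnegative, so the right-hand side is at most $(d-1)\sum_{C:\,x\in U_C}w(C)$.

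Finally, I identify this last sum with $M(x)$. By the definition of $U_C$, we have $x\in U_C$ exactly when $\tau_{\{x\}}\models C$. The constraints with $x\in U_C$ are therefore precisely those satisfied by selecting $x$ alone, and their total weight is $F(\{x\})=M(x)$. One check is needed here: a constraint satisfied by $\tau_{\{x\}}$ must have $x\in\scope(C)$. This holds because otherwise it would accept the all-zero tuple, contradicting zero-freeness; in any case it is built into the definition of $U_C$ as a subset of $\scope(C)$. The resulting bound is $\sum_{y\ne x}J(x,y)\le(d-1)M(x)$.

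No step is a real obstacle. The only care needed is in the summation exchange: each constraint must be counted once per partner $y\in U_C\setminus\{x\}$ rather than once per variable of its scope. Even counting per scope variable would still give the bound $d-1$.
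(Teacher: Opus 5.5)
Your proposal is correct and matches the paper's proof: both rewrite the sum as $\sum_{C:\,x\in U_C}(|U_C|-1)w(C)$, bound $|U_C|-1\le d-1$, and identify the remaining sum with $M(x)$. Your explicit check that a constraint satisfied by $\tau_{\{x\}}$ must contain $x$ (by zero-freeness) is a detail the paper leaves implicit.
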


\begin{proof}
A constraint contributes to $M(x)$ exactly when $x\in U_C$.
For such a constraint, its weight appears in $J(x,y)$ for
each $y\in U_C\setminus\{x\}$. There are at most $d-1$
such variables. Therefore
\[
 \sum_{y\ne x}J(x,y)
 =\sum_{\substack{C\in\mathcal C:\ x\in U_C}}
     (|U_C|-1)w(C)
 \le(d-1)\sum_{\substack{C\in\mathcal C:\ x\in U_C}}w(C)
 =(d-1)M(x).
\]
\end{proof}

Fix a rational threshold $\theta\in(0,1)$.
The graph $G_\theta$ has vertex set $X$. Distinct vertices
$x,y$ are adjacent exactly when
\[
 J(x,y)>0
 \qquad\text{and}\qquad
 J(x,y)\ge\theta\max\{M(x),M(y)\}.
\]
This is the graph used in the \Direct part of the technical overview.

\begin{corollary}
\label{arityd-direct-cor-heavy-degree}
The maximum degree of $G_\theta$ is at most
$d_{\max}:=\lceil(d-1)/\theta\rceil$.
\end{corollary}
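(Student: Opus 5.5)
The plan is the standard degree-counting argument, using Lemma~\ref{arityd-direct-lem-pair-degree}. Fix a vertex $x$ of $G_\theta$.

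First, if $M(x)=0$, then Lemma~\ref{arityd-direct-lem-pair-degree} gives $\sum_{y\ne x}J(x,y)\le 0$. Since weights are nonnegative, $J(x,y)=0$ for every $y$. The adjacency rule requires $J(x,y)>0$, so $x$ is isolated and its degree is $0\le d_{\max}$.

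Now assume $M(x)>0$, and let $N(x)$ be the set of neighbors of $x$. For each $y\in N(x)$, adjacency gives
\[
 J(x,y)\ge\theta\max\{M(x),M(y)\}\ge\theta M(x).
\]
Summing over $N(x)$ and applying Lemma~\ref{arityd-direct-lem-pair-degree} gives
\[
 |N(x)|\,\theta M(x)\le\sum_{y\in N(x)}J(x,y)\le\sum_{y\ne x}J(x,y)\le(d-1)M(x).
\]
Dividing by $\theta M(x)>0$ yields $|N(x)|\le(d-1)/\theta$. Since $|N(x)|$ is an integer, this gives $|N(x)|\le\lceil(d-1)/\theta\rceil=d_{\max}$.

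No step here is a real obstacle. The only points needing care are the case $M(x)=0$, where one cannot divide by $M(x)$ and must instead use the requirement $J(x,y)>0$ in the adjacency rule, and the use of the non-strict inequality $\ge\theta\max\{M(x),M(y)\}$ rather than a strict one. That non-strictness is why the bound comes out as $(d-1)/\theta$ itself instead of something strictly below it; the ceiling absorbs this either way.
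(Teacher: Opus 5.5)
Your proposal is correct and follows essentially the same argument as the paper. The paper also treats $M(x)=0$ via Lemma~\ref{arityd-direct-lem-pair-degree} and the requirement $J(x,y)>0$ to show $x$ is isolated. When $M(x)>0$, it likewise sums the bound $J(x,y)\ge\theta M(x)$ over the neighbors and divides by $\theta M(x)$.
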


\begin{proof}
If $M(x)=0$, Lemma~\ref{arityd-direct-lem-pair-degree} and
nonnegativity imply that $J(x,y)=0$ for every $y\ne x$.
Thus $x$ is isolated.
Otherwise, every neighbor $y$ contributes at least
$\theta M(x)$ to $\sum_{y\ne x}J(x,y)$. Hence
\[
 \deg_{G_\theta}(x)\theta M(x)
 \le\sum_{y\ne x}J(x,y)
 \le(d-1)M(x).
\]
Dividing by $\theta M(x)>0$ proves the degree bound.
\end{proof}

\paragraph*{Coloring the variables.}
Let $\rho$ be a partial assignment fixing at most $k$ variables
to $1$, and put
\[
 H^1:=\rho^{-1}(1),\qquad H^0:=\rho^{-1}(0),
 \qquad Y:=X\setminus H^0.
\]
A selected set $S$ respects $\rho$ exactly when
$H^1\subseteq S\subseteq Y$.
For a graph $G$ and a set $S$ of its vertices, write $N_G(S)$
for the vertices adjacent to at least one member of $S$.
A coloring \emph{isolates} $S$ if every vertex of $S$ is green
and every vertex of $N_G(S)\setminus S$ is red.
In such a coloring, every green connected component meeting $S$
is contained in $S$.

Algorithm~\ref{alg:arityd-direct-color-coding} samples colorings
of $G_\theta$. For each coloring, it retains the small green
components that avoid the variables fixed to $0$, and forces
components meeting $H^1$ to be selected in full.
It records the retained components $\mathcal K$, the variables
$P$ forced to be selected, and the remaining budget $b$.
The procedure chooses no further components at this stage.

\begin{algorithm}[H]
\small
\DontPrintSemicolon
\SetAlgoNoEnd
\caption{Random color coding}
\label{alg:arityd-direct-color-coding}
\KwIn{Graph $G_\theta$ on $X$ of maximum degree at most
$d_{\max}$; budget $k\ge1$; partial assignment $\rho$
fixing at most $k$ variables to $1$.}
\KwOut{A list $\mathcal D$ of triples $(\mathcal K,P,b)$.}
Set $H^1\gets\rho^{-1}(1)$, $H^0\gets\rho^{-1}(0)$,
and $Y\gets X\setminus H^0$\;
Set $N\gets[3(d_{\max}+1)]^k$ and
$\mathcal D\gets\varnothing$\;
\For{$j=1,\ldots,N$}{
 Independently color each vertex green with probability
 $1/(d_{\max}+1)$ and red otherwise\;
 Let $\mathcal K$ contain the green connected components
 of size at most $k$ that lie in $Y$\;
 \If{$H^1\not\subseteq\bigcup\mathcal K$}{Skip this coloring\;}
 $P\gets\bigcup\{K\in\mathcal K:K\cap H^1\ne\varnothing\}$\;
 \If{$|P|>k$}{Skip this coloring\;}
 Append $(\mathcal K,P,k-|P|)$ to $\mathcal D$\;
}
\Return $\mathcal D$\;
\end{algorithm}

\begin{lemma}
\label{arityd-direct-lem-random-coloring}
Fix $S$ with $H^1\subseteq S\subseteq Y$ and $|S|\le k$.
A single sampled coloring isolates $S$ with probability at least
$[e(d_{\max}+1)]^{-k}$.
With probability at least $1-e^{-1}$,
Algorithm~\ref{alg:arityd-direct-color-coding} records a triple
$(\mathcal K,P,b)$ such that $P\subseteq S$ and
$S\setminus P$ is a union of components in $\mathcal K$
of total size at most $b$.
\end{lemma}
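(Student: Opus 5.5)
The plan has two parts: a per-coloring probability bound, and a deterministic check that an isolating coloring makes the algorithm record a triple with the required properties; amplification over the $N$ independent trials then finishes.

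\emph{Probability of isolation.} The set $Z:=N_{G_\theta}(S)\setminus S$ is disjoint from $S$. By Corollary~\ref{arityd-direct-cor-heavy-degree}, $|Z|\le d_{\max}|S|\le kd_{\max}$. A coloring isolates $S$ exactly when all of $S$ is green and all of $Z$ is red. These events concern disjoint vertex sets, so by independence the probability is
\[
 p^{|S|}(1-p)^{|Z|}\ge p^{k}(1-p)^{kd_{\max}}, \qquad p:=1/(d_{\max}+1).
\]
Here I use $p\le1$ and $|S|\le k$. Next, $(1-1/(d_{\max}+1))^{d_{\max}}\ge e^{-1}$, which is the standard bound $(1-1/t)^{t-1}\ge e^{-1}$ with $t=d_{\max}+1$. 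This gives the bound $[e(d_{\max}+1)]^{-k}$. For $d_{\max}=0$ we have $p=1$ and the bound is trivially fine.

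\emph{Isolating colorings are recorded correctly.} Suppose the coloring isolates $S$. Then every green component meeting $S$ is contained in $S$. Such a component therefore has size at most $k$, and it lies in $Y$ because $S\subseteq Y$. So every green component meeting $S$ belongs to $\mathcal K$. Since $S$ is entirely green, $S$ is the union of these components. In particular $H^1\subseteq S\subseteq\bigcup\mathcal K$, so the first skip test does not trigger.

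Now consider $P$, the union of the components of $\mathcal K$ meeting $H^1$. Each such component meets $S$, hence is contained in $S$, so $P\subseteq S$. This gives $|P|\le|S|\le k$, so the second skip test does not trigger either, and the triple $(\mathcal K,P,k-|P|)$ is appended.

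It remains to describe $S\setminus P$. The components of $\mathcal K$ are pairwise disjoint, and $S$ is the disjoint union of the components of $\mathcal K$ that meet $S$. Removing those contained in $P$ leaves $S\setminus P$ as a union of components in $\mathcal K$. Its total size is $|S|-|P|\le k-|P|=b$.

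\emph{Amplification.} The $N=[3(d_{\max}+1)]^k$ colorings are independent, and each isolates $S$ with probability $q\ge[e(d_{\max}+1)]^{-k}$. The probability that none isolates $S$ is at most
\[
(1-q)^N\le e^{-qN}\le \exp\!\bigl(-(3/e)^k\bigr)\le e^{-1},
\]
using $k\ge1$ and $3/e>1$. By the previous part, any isolating coloring yields the required triple.

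The only delicate points are the inequality $(1-1/t)^{t-1}\ge e^{-1}$ and checking that the skip conditions never fire for an isolating coloring; both are routine.
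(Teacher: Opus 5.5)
Your proof is correct and follows the paper's argument essentially step for step. You use the same product bound for isolation via $|N_{G_\theta}(S)\setminus S|\le d_{\max}|S|$ and $(1+1/d_{\max})^{-d_{\max}}\ge e^{-1}$, the same check that neither skip test fires for an isolating coloring, and the same amplification; your $e^{-qN}$ with $qN\ge(3/e)^k\ge1$ is just a rephrasing of the paper's $(1-1/N)^N\le e^{-1}$.
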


\begin{proof}
If $d_{\max}=0$, every vertex is green and $S$ has no
external neighbors. Thus every sampled coloring isolates $S$.

For $d_{\max}\ge1$, put
\[
 s:=|S|,
 \qquad h:=|N_{G_\theta}(S)\setminus S|.
\]
The degree bound gives $h\le d_{\max}s$.
The coloring isolates $S$ exactly when all its $s$ vertices
are green and all its $h$ external neighbors are red.
Since vertices are colored independently,
\begin{align*}
 \Pr[\text{$S$ is isolated}]
 &=\left(\frac1{d_{\max}+1}\right)^s
   \left(\frac{d_{\max}}{d_{\max}+1}\right)^h\\
 &\ge\left[
   \frac1{d_{\max}+1}
   \left(\frac{d_{\max}}{d_{\max}+1}\right)^{d_{\max}}
   \right]^s\\
 &\ge[e(d_{\max}+1)]^{-s}\\
 &\ge[e(d_{\max}+1)]^{-k}.
\end{align*}
Here the first inequality uses $h\le d_{\max}s$, and
\[
 \left(\frac{d_{\max}}{d_{\max}+1}\right)^{d_{\max}}
 =\left(1+\frac1{d_{\max}}\right)^{-d_{\max}}
 \ge e^{-1}.
\]
The final inequality uses $s\le k$.

The algorithm samples $N=[3(d_{\max}+1)]^k$ colorings.
Since $e<3$, each trial isolates $S$ with probability at
least $1/N$. This also holds when $d_{\max}=0$.
Independence of the trials gives
\[
 \Pr[\text{no trial isolates $S$}]
 \le(1-1/N)^N
 \le e^{-1}.
\]

Consider a trial that isolates $S$.
A green connected component meeting $S$ cannot contain a
vertex outside $S$: a green path leaving $S$ would have
to pass through an external neighbor of $S$, which is red.
Thus every such component lies in $S$, has size at most
$k$, and lies in $Y$. The algorithm retains all these
components, and they partition $S$.

Since $H^1\subseteq S$, the retained components cover $H^1$.
Every component meeting $H^1$ lies in $S$, so their union
$P$ satisfies $P\subseteq S$ and $|P|\le k$.
Neither condition for skipping the coloring holds, and
the algorithm records $(\mathcal K,P,b)$ with $b=k-|P|$.

The retained components contained in $S$ and disjoint
from $P$ partition $S\setminus P$. Their total size is
\[
 |S\setminus P|=|S|-|P|\le k-|P|=b.
\]
This proves the recorded triple's guarantee.
\end{proof}

\paragraph*{Derandomization.}
We replace the random trials by a deterministic family of
green and red colorings.

\begin{lemma}
\label{arityd-direct-lem-universal-isolation}
Let $k\ge1$, and let $G$ have $n$ vertices and maximum degree
at most $d_{\max}$. Put $q:=k(d_{\max}+1)$.
One can deterministically construct a family of at most
\[
 (q+1)^{O(k)}\log(n+2)
\]
green and red colorings such that every set $S$ of size at
most $k$ is isolated by some coloring in the family.
The construction takes
$(q+1)^{O(k)}\ell_{\mathrm{in}}^{O(1)}$ time.
In a coloring isolating $S$, every green connected component
meeting $S$ is contained in $S$ and has size at most $k$.
\end{lemma}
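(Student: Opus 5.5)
The construction follows the overview: first a perfect hash family, then green subsets of labels. Put $p:=q^2$, or $p:=\max\{q^2,1\}$ to be safe when $q$ is small. Using a standard deterministic construction of a $(n,q,p)$-perfect hash family (for instance Fredman--Koml\'os--Szemer\'edi style two-level hashing, or the splitter constructions of Naor, Schulman and Srinivasan), we obtain maps $h:V(G)\to[p]$. The family has at most $q^{O(1)}\log(n+2)$ members, and for every set $Z$ of at most $q$ vertices some member is injective on $Z$. It is built in time $q^{O(1)}\ell_{\mathrm{in}}^{O(1)}$. For each map $h$ and each set $L\subseteq[p]$ with $|L|\le k$, we form the coloring that makes a vertex green exactly when $h(v)\in L$ and red otherwise. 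The number of colorings is at most
\[
 q^{O(1)}\log(n+2)\cdot\sum_{i\le k}\binom{p}{i}\le q^{O(1)}(p+1)^k\log(n+2)=(q+1)^{O(k)}\log(n+2).
\]
Writing out each coloring costs $O(n)$ per coloring, so the total time is $(q+1)^{O(k)}\ell_{\mathrm{in}}^{O(1)}$.

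For correctness, fix $S$ with $|S|\le k$ and let $Z:=S\cup(N_G(S)\setminus S)$. The degree bound gives $|Z|\le |S|+d_{\max}|S|\le k(d_{\max}+1)=q$, so some map $h$ in the family is injective on $Z$. Take $L:=h(S)$, which satisfies $|L|\le k$. Every vertex of $S$ is then green. Every external neighbor $y\in N_G(S)\setminus S$ lies in $Z$, so $h(y)\notin h(S)$ by injectivity, and $y$ is red. Hence this coloring isolates $S$. Vertices outside $Z$ may be colored arbitrarily; isolation does not care about them.

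The last sentence of the lemma follows as in the proof of Lemma~\ref{arityd-direct-lem-random-coloring}. A green component meeting $S$ that contained a vertex outside $S$ would contain a green path leaving $S$. The first vertex on that path outside $S$ lies in $N_G(S)\setminus S$, which is red, a contradiction. So every such component lies inside $S$ and has size at most $k$.

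The main obstacle is citing or stating the perfect hash family with the exact size and time bounds: $O(1)$-polynomial dependence on $q$, a $\log n$ factor, and range $q^2$. The two-level FKS-type construction gives range $O(q^2)$ and size $q^{O(1)}\log n$, which is enough here because any range polynomial in $q$ only changes the constant in the exponent $(q+1)^{O(k)}$. I would simply invoke the known construction and check the arithmetic, including the degenerate cases $d_{\max}=0$ and $n\le q$, where even the identity map works.
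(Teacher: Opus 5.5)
Your proposal is correct and follows the paper's proof essentially step for step: an $(n,q,q^2)$-splitter, green label sets $L\subseteq[q^2]$ with $|L|\le k$, and the choice $L=h(S)$ for a map injective on $S\cup(N_G(S)\setminus S)$. The count $|\mathcal H|(q^2+1)^k$ and the green-path argument for components are also the same. One small caution: cite the Naor--Schulman--Srinivasan splitter, as the paper does, rather than a plain FKS two-level scheme, since the simple form of the latter does not by itself give the single $\log(n+2)$ factor claimed in the lemma.
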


\begin{proof}
The $(n,q,q^2)$-splitter construction of Naor, Schulman, and
Srinivasan~\cite[Lemma~2 and Theorem~3(iv)]{DBLP:conf/focs/NaorSS95}
lets us
deterministically construct a family $\mathcal H$ of maps
$V(G)\to[q^2]$ such that every set of at most $q$ vertices
receives distinct labels under some map. The family has size
$(q+1)^{O(1)}\log(n+2)$, and its construction takes time
polynomial in $q+n$. For $q=1$, the constant map suffices;
if $q>n$, one map assigning distinct labels to all vertices suffices.

For each $h\in\mathcal H$ and each $B\subseteq[q^2]$ with
$|B|\le k$, color $x$ green if $h(x)\in B$, and red otherwise.
Fix $S$ with $|S|\le k$. The set
\[
 T:=S\cup(N_G(S)\setminus S)
\]
has size at most $k(d_{\max}+1)=q$.
Choose a map assigning distinct labels to $T$ and take
$B=h(S)$. This makes every vertex of $S$ green and every
external neighbor red, so the coloring isolates $S$.

The number of colorings is at most
\[
 |\mathcal H|\sum_{i=0}^{k}\binom{q^2}{i}
 \le |\mathcal H|(q^2+1)^k
 =(q+1)^{O(k)}\log(n+2).
\]
Enumerating and evaluating these colorings gives the stated
construction time. Every green connected component meeting
$S$ lies in $S$, since all external neighbors are red.
Its size is therefore at most $k$.
\end{proof}

Enumerating this family in
Algorithm~\ref{alg:arityd-direct-color-coding} produces a
deterministic list of triples. For every feasible $S$ with
$\tau_S$ extending $\rho$, this list contains a triple
$(\mathcal K,P,b)$ such that $P\subseteq S$ and
$S\setminus P$ is a union of components in $\mathcal K$
with total size at most $b$.

\subsection{Search on item constraints}
\label{arityd-direct-sec-item-hus}
\label{arityd-direct-sec-contraction}

Fix a triple $(\mathcal K,P,b_0)$ recorded by
Algorithm~\ref{alg:arityd-direct-color-coding}.
The items are the components
$\mathcal K_{\mathrm{free}}:=\{K\in\mathcal K:K\cap P=\varnothing\}$,
with initial costs $\kappa(y):=|y|$ and budget $b_0=k-|P|$.
Fix an order on the items.
Write $\mathcal J=(\mathcal K_{\mathrm{free}},
\mathcal C_{\mathrm{item}},w,\kappa,b)$ for the current item instance,
where $w$ gives nonnegative rational weights, $\kappa$ gives
positive integer costs, and $b\in\mathbb Z_{\ge0}$ is the
remaining budget.
An \emph{item constraint} is a Boolean constraint whose variables
are items.
Constraints have arity at most $d\ge1$ and may have an empty scope;
their relations are given by their satisfying tuples.
We use the constraint and objective notation from
Section~\ref{sec:preliminaries}, with items in place of variables.
For a complete or partial item assignment $\sigma$, put
\[
 \supp(\sigma):=\sigma^{-1}(1),
 \qquad
 \kappa(\sigma):=\sum_{y\in\supp(\sigma)}\kappa(y).
\]
For a complete assignment $\sigma$, $F^{\mathcal J}(\sigma)$
is the total weight of its satisfied constraints.
The assignment is feasible when $\kappa(\sigma)\le b$,
and $\OPT(\mathcal J)$ is the largest value of a feasible assignment.
A complete assignment $\sigma$ to the initial instance
selects the variable set
\[
 \begin{aligned}
 S(\sigma)&:=P\cup\bigcup_{K\in\supp(\sigma)}K,\\
 |S(\sigma)|&=|P|+\kappa(\sigma).
 \end{aligned}
\]
For each original constraint $C$, define an item constraint
$\widehat C$ of weight $w(C)$ on the items intersecting
$\scope(C)$, listed in the fixed item order.
Define its relation by
\begin{equation}
\label{arityd-direct-eq-mask}
\begin{aligned}
 \sigma\models\widehat C
 \quad\Longleftrightarrow\quad&
 \tau_{S(\sigma)}\models C
 \ \text{and}\\
 &\varnothing\ne S(\sigma)\cap U_C\subseteq K
 \quad\text{for some }K\in\mathcal K.
\end{aligned}
\end{equation}
Thus $C$ is counted when its selected unit witnesses,
including those in $P$, lie in one component.
Since the components are disjoint, each item constraint
has arity at most $d$.
Call this item instance $\mathcal J_{\rm col}$.
The search requires every nonempty zero-free relation obtained
by fixing items in an input constraint to be \Direct.
Lemma~\ref{arityd-direct-lem-owner-hus} proves that the
constructed instance has this property.

\paragraph*{Recursive calls.}
Discard constraints of weight zero.
Let $\delta\in(0,1)$ be the rational error parameter for the
item search. We seek a feasible assignment of value at least
$(1-\delta)\OPT(\mathcal J)$ for the input item instance
$\mathcal J$. Set
\[
 \eta_1:=\frac{\delta}{2b_0+1},
 \qquad
 \eta_2:=\frac{\eta_1}{2}.
\]
These thresholds remain fixed throughout the search.
At each call, $\mathcal J$ contains the unfixed items and
the constraints after substituting all earlier fixed values.
The assignment $\mathbf0$ sets every unfixed item to $0$.
A constraint is zero-free if this assignment does not satisfy it.
If $b=0$ or no constraints remain, return $\mathbf0$.
Otherwise, define
\[
 \mathcal Z:=\{C\in\mathcal C_{\mathrm{item}}:\mathbf0\models C\},
 \qquad
 W:=\sum_{C\in\mathcal Z}w(C)=F^{\mathcal J}(\mathbf0).
\]
For a partial assignment $\pi$ with $\kappa(\pi)\le b$,
form the child instance $\mathcal J_\pi$ by substituting
$\pi$, removing its fixed items, and reducing the budget
to $b-\kappa(\pi)$.
First remove constraints that become always satisfied or
always unsatisfied.
Then, among the remaining restrictions of constraints in
the parent's $\mathcal Z$, remove those satisfied when all
unfixed items are $0$; the values prescribed by $\pi$ stay fixed.
The remaining constraints and items keep their weights and costs.
\paragraph*{The branching rule when $W>0$.}
We fix every item whose total weight of constraints
in $\mathcal Z$ reaches $\eta_1W/b$.
Define
\[
 \begin{aligned}
 \deg_0(y)&:=
 \sum_{\substack{C\in\mathcal Z\\y\in\scope(C)}}w(C),\\
 \mathcal K_{\mathrm{hi}}&:=
 \left\{y\in\mathcal K_{\mathrm{free}}:
 \deg_0(y)\ge\frac{\eta_1W}{b}\right\}.
 \end{aligned}
\]
Create one branch for every assignment
$\pi:\mathcal K_{\mathrm{hi}}\to\{0,1\}$ with
$\kappa(\pi)\le b$, including the all-zero assignment.
We call the items in $\mathcal K_{\mathrm{hi}}$ marked.
Lemma~\ref{arityd-direct-lem-item-credit} shows that the
additionally removed constraints left unsatisfied by a feasible
completion have total weight at most $\eta_1\OPT(\mathcal J)$.
It also shows that the all-zero branch gives a zero-free child.

\paragraph*{The branching rule when $W=0$.}
Since $W=0$, the current instance is zero-free.
For a set $Q$ of unfixed items, let $\tau_Q$ select exactly
the items in $Q$.
We use two quantities to form the candidate set.
For each item $y$, $\mu(y)$ is the value of selecting $y$ alone.
For distinct items $y,z$, $I_{\mathrm{item}}(y\to z)$ is
the part of that value from constraints whose scopes contain $z$:
\[
 \begin{aligned}
 \mu(y)&:=F^{\mathcal J}(\tau_{\{y\}}),\\
 I_{\mathrm{item}}(y\to z)&:=
 \sum_{\substack{C\in\mathcal C_{\mathrm{item}}:\\
                 y,z\in\scope(C),\ \tau_{\{y\}}\models C}}w(C)
 \qquad(y\ne z).
 \end{aligned}
\]
For rational $\eta\in(0,1)$, put
\[
 \mathsf{Bad}(y):=
 \left\{z\in\mathcal K_{\mathrm{free}}\setminus\{y\}:
 I_{\mathrm{item}}(y\to z)\ge\frac{\eta}{b}\mu(y)\right\}
 \quad\text{if }\mu(y)>0,
\]
and set $\mathsf{Bad}(y)=\varnothing$ when $\mu(y)=0$.
Set
\[
 \ell:=\left\lfloor\frac{(d-1)b^2}{\eta}\right\rfloor+1.
\]
For each $c\in[b]$, let $L_c$ contain the $\ell$ items
of largest positive score $\mu(y)$ among those with
$\kappa(y)\le c$, or all such items if fewer exist,
using the fixed item order for ties.
Define
\[
 \mathcal B:=
 \bigcup_{c=1}^b
 \left(L_c\cup\bigcup_{y\in L_c}\mathsf{Bad}(y)\right).
\]
The search uses $\eta=\eta_2$.
Create one branch fixing $y=1$ for each
$y\in\mathcal B$ with $\kappa(y)\le b$.
Also keep $\mathbf0$ as a candidate, without a recursive call.
Lemma~\ref{arityd-direct-lem-item-swap} shows that every
positive-valued feasible assignment avoiding $\mathcal B$
can exchange one selected item for an item in a list,
losing at most $2\eta\OPT(\mathcal J)$ in value.

For each partial assignment $\pi$ specified by either
branching rule, construct $\mathcal J_\pi$ as above and
apply the search to it.
Combine its returned assignment $\sigma$ with $\pi$
to obtain the candidate $\pi\sqcup\sigma$.
After processing all branches, evaluate every candidate,
including $\mathbf0$ when $W=0$, on all constraints of
the current instance $\mathcal J$, including those removed
before recursion.
Return a candidate with largest $F^{\mathcal J}$-value.
Use the fixed item order to enumerate assignments and
break ties, with $0$ before $1$.

Algorithm~\ref{alg:arityd-direct-items} gives the search.

\begin{algorithm}[!htbp]
\small
\DontPrintSemicolon
\SetAlgoNoEnd
\SetKwFunction{ItemSearch}{SearchItems}
\SetKwFunction{ItemChild}{MakeChild}
\SetKwProg{Fn}{Procedure}{:}{}
\caption{Search on item constraints}
\label{alg:arityd-direct-items}
\KwIn{An item instance $\mathcal J$ with budget $b_0$ such that
every nonempty zero-free relation obtained by fixing items
in any constraint is \Direct; a rational $\delta\in(0,1)$.}
\KwOut{An assignment of cost at most $b_0$.}
Discard constraints of weight zero\;
\If{$b_0=0$}{\Return $\mathbf0$\;}
Set $d$ to the largest constraint arity, taking $d=1$
if all scopes are empty\;
Set $\eta_1\gets\delta/(2b_0+1)$ and $\eta_2\gets\eta_1/2$\;
Keep $d$, $\eta_1$, $\eta_2$, and the item order fixed
throughout all recursive calls\;
\Return \ItemSearch{$\mathcal J$}\;
\BlankLine
\Fn{\ItemSearch{$\mathcal J$}}{
  Let $b$ be the budget of $\mathcal J$\;
  \If{$b=0$ or no constraints remain}{\Return $\mathbf0$\;}
  $\mathcal L\gets\varnothing$\;
  Compute $W\gets F^{\mathcal J}(\mathbf0)$\;
  \eIf{$W>0$}{
    Form $\mathcal Z$, $\deg_0$, and $\mathcal K_{\mathrm{hi}}$
    using $\eta_1$\;
    Let $\mathcal P$ contain every assignment $\pi$ on
    $\mathcal K_{\mathrm{hi}}$ with $\kappa(\pi)\le b$,
    including the all-zero assignment\;
  }{
    Compute $\mu(y)$ and $I_{\mathrm{item}}(y\to z)$
    from the current constraints\;
    Form the lists $L_c$ and $\mathcal B$ using $\eta=\eta_2$\;
    Let $\mathcal P$ contain every fixing $y=1$ with
    $y\in\mathcal B$ and $\kappa(y)\le b$\;
    $\mathcal L\gets\{\mathbf0\}$\;
  }
  \ForEach{$\pi\in\mathcal P$ in the fixed order}{
    $\mathcal J_\pi\gets$\ItemChild{$\mathcal J,\pi$}\;
    $\sigma\gets$\ItemSearch{$\mathcal J_\pi$}\;
    $\mathcal L\gets\mathcal L\cup\{\pi\sqcup\sigma\}$\;
  }
  \Return an assignment in $\mathcal L$ with largest
  $F^{\mathcal J}$-value, evaluated on all constraints received by this
  call, breaking ties by the fixed order\;
}
\BlankLine
\Fn{\ItemChild{$\mathcal J,\pi$}}{
  Let $b$ be the budget of $\mathcal J$\;
  Let $\mathcal Z$ contain the constraints of $\mathcal J$
  satisfied by $\mathbf0$\;
  Form $\mathcal J_\pi$ by substituting $\pi$ into every
  constraint of $\mathcal J$ and removing the fixed items\;
  Set the budget of $\mathcal J_\pi$ to $b-\kappa(\pi)$\;
  Remove constraints that are always satisfied or always unsatisfied\;
  Among the remaining restrictions of constraints in $\mathcal Z$,
  remove those satisfied by $\mathbf0$ on the remaining items\;
  \Return $\mathcal J_\pi$\;
}
\end{algorithm}

\subsection{Correctness}

\paragraph*{The item construction.}
For a complete item assignment $\sigma$, write
$F_{\rm mask}(\sigma):=F^{\mathcal J_{\rm col}}(\sigma)$.
For a union $S$ of retained components containing $P$,
let $\sigma_S$ be the item assignment that selects exactly
the items contained in $S$. Put
$F_{\rm mask}(S):=F_{\rm mask}(\sigma_S)$.

If $\sigma$ is feasible for $\mathcal J_{\rm col}$, then
\[
 |S(\sigma)|=|P|+\kappa(\sigma)\le|P|+b_0=k.
\]
Thus $\tau_{S(\sigma)}$ is feasible for the original instance.
By \eqref{arityd-direct-eq-mask}, each satisfied item constraint
corresponds to a satisfied original constraint of the same weight.
Hence
\[
 F_{\rm mask}(\sigma)\le F(S(\sigma)),
 \qquad
 \OPT(\mathcal J_{\rm col})\le\OPT(\mathcal I).
\]

\begin{lemma}
\label{arityd-direct-lem-owner-hus}
Every nonempty zero-free relation obtained by fixing items in
a constraint of $\mathcal J_{\rm col}$ is \Direct.
\end{lemma}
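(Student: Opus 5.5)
Fix an original constraint $C$ with item constraint $\widehat C$, and a partial item assignment $\pi$ on some of the items of $\widehat C$. Let $R'$ be the relation of the restriction $\widehat C|_\pi$ on the remaining items, and assume $R'$ is nonempty and zero-free. We must show that $R'$ is \Direct. So take a satisfying assignment $\sigma$ of the remaining items (so $\pi\sqcup\sigma\models\widehat C$) and a control coordinate $z$ of $R'$ with $\sigma(z)=1$. We must show that $\sigma-e_z$ is still accepted.

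By \eqref{arityd-direct-eq-mask}, write $S:=S(\pi\sqcup\sigma)$. Then $\tau_S\models C$, and $\varnothing\ne S\cap U_C\subseteq K^\ast$ for a single component $K^\ast\in\mathcal K$. The plan is to show first that $K^\ast$ is a remaining item $u$ with $\sigma(u)=1$, and that $u$ is a unit witness of $R'$. Then we show that every other selected remaining item $z\ne u$ can be deleted using the \Direct property of $R_C$.

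\emph{Step 1: deleting items other than the witness component.} Let $z$ be a selected remaining item with $z\ne K^\ast$. Then $z\cap U_C\cap S=\varnothing$, so every variable of $z\cap\scope(C)$ set to $1$ is a control of $C$. Unselecting $z$ sets these controls to $0$ one at a time. By the \Direct property of $R_C$, each step preserves $\tau\models C$. The selected unit witnesses are unchanged, so the mask condition still holds with the same $K^\ast$. Hence $\sigma-e_z\in R'$.

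\emph{Step 2: locating the witness.} Suppose $K^\ast$ were not an unfixed selected item, that is, $K^\ast\subseteq P$ or $K^\ast$ is fixed to $1$ by $\pi$. Apply Step 1 to every selected remaining item in turn. This yields $\mathbf0\in R'$, contradicting zero-freeness. So $K^\ast=u$ for a remaining item with $\sigma(u)=1$. The same repeated deletion, applied to every selected item other than $u$, shows $e_u\in R'$. Thus $u$ is a unit witness of $R'$, and every control $z$ of $R'$ with $\sigma(z)=1$ satisfies $z\ne u$. Step 1 then gives $\sigma-e_z\in R'$.

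The main subtlety is the bookkeeping in Step 1. Unselecting an item removes only variables of $\scope(C)$ inside that component. These are all controls of the original constraint $C$, because the selected unit witnesses lie in $K^\ast$ and components are disjoint. We apply the original \Direct rule coordinate by coordinate, so the control status used is that of $C$, not of $R'$. We also need to note that "control of $R'$" is not used directly: we only need $z\ne u$, which holds because $u$ is a unit witness of $R'$. Fixed values in $P$ and $\pi$ never change, so the argument is uniform across all restrictions.
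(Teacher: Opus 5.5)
Your proposal is correct and follows essentially the same argument as the paper. You first reset every selected remaining item other than the owning component $K^\ast$, which changes only controls of $C$ and so preserves satisfaction by the original \Direct rule and the mask condition. You then use zero-freeness to force $K^\ast$ to be a selected remaining item, hence a unit witness of the restricted relation, so every selected control differs from it.
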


\begin{proof}
Fix any subset of the items of $\widehat C$ so that the
remaining relation is nonempty and zero-free.
Consider any satisfying assignment, keeping all fixed values
unchanged.
By \eqref{arityd-direct-eq-mask}, the selected unit witnesses
of $C$ lie in a unique component $K^*\in\mathcal K$.
Resetting any remaining item other than $K^*$ changes only
controls of $C$ from $1$ to $0$.
The original \Direct rule preserves satisfaction of $C$,
and its selected unit witnesses remain in $K^*$.
Thus $\widehat C$ remains satisfied.

If $K^*$ were already selected by the fixed choices,
including those in $P$, resetting all remaining items would
preserve satisfaction, contradicting zero-freeness.
Hence $K^*$ is a remaining item.
Selecting it alone among the remaining items satisfies the
restricted constraint, so it is a unit witness.
Every selected item that is not a unit witness differs from
$K^*$, and resetting it preserves satisfaction.
Thus the restricted relation is \Direct.
\end{proof}

The objective $F_{\rm mask}$ omits the weight of a satisfied
original constraint when its selected unit witnesses lie in
different green components. We bound this weight using $J$ and $M$.

\begin{lemma}
\label{arityd-direct-lem-mask-loss}
Let $S$ be feasible with $\tau_S$ extending $\rho$,
and let a coloring isolate $S$.
Then the component construction yields an item instance
$\mathcal J_{\rm col}$ for which $\sigma_S$ is feasible, and
\[
 F_{\rm mask}(S)\ge F(S)
       -\theta(k-1)\sum_{x\in S}M(x).
\]
Consequently,
\[
 F_{\rm mask}(S)\ge F(S)-\theta k c_d\OPT(\mathcal I),
\]
where $c_d$ is defined in \eqref{eq:arityd-cd}.
\end{lemma}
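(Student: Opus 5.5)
We use Lemma~\ref{arityd-direct-lem-random-coloring} (or its deterministic counterpart) for feasibility, and then bound the constraints that the mask discards.

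\emph{Feasibility.} Fix a coloring that isolates $S$. Lemma~\ref{arityd-direct-lem-random-coloring} gives a recorded triple $(\mathcal K,P,b_0)$ with $P\subseteq S$. It also shows that $S\setminus P$ is a union of items in $\mathcal K_{\mathrm{free}}$ of total cost at most $b_0$. Hence $\sigma_S$ is defined, $\kappa(\sigma_S)\le b_0$, and $S(\sigma_S)=S$.

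\emph{Which constraints are lost.} Take an original constraint $C$ with $\tau_S\models C$. By Lemma~\ref{lem:arityd-repair-accessibility}, $S\cap U_C\ne\varnothing$. Every variable of $S$ lies in a retained green component: the components meeting $P$ or the items of $\sigma_S$. If all of $S\cap U_C$ lies in one component, then \eqref{arityd-direct-eq-mask} holds and $\widehat C$ is satisfied. So a satisfied $C$ is omitted from $F_{\rm mask}(S)$ only if $S\cap U_C$ meets two distinct components. In that case choose $x,y\in S\cap U_C$ in different components, so $w(C)$ is counted in $J(x,y)$.

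\emph{Bounding the lost weight.} The vertices $x$ and $y$ are both green and lie in different green components, so they are not adjacent in $G_\theta$. Since $J(x,y)>0$, non-adjacency forces
\[
J(x,y)<\theta\max\{M(x),M(y)\}\le\theta\bigl(M(x)+M(y)\bigr).
\]
We charge each omitted constraint to one unordered pair $\{x,y\}\subseteq S$. The total omitted weight is then at most
\[
\sum_{\{x,y\}\subseteq S}J(x,y)\le\theta\sum_{\{x,y\}\subseteq S}\bigl(M(x)+M(y)\bigr)=\theta(|S|-1)\sum_{x\in S}M(x),
\]
because each $x\in S$ appears in exactly $|S|-1$ pairs. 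With $|S|\le k$ this gives the first inequality.

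\emph{The consequence.} Apply Lemma~\ref{lem:arityd-random-isolation} to $Q=S$, which is valid since $|S|\le k$ and the relations are nonempty and zero-free. This gives $\sum_{x\in S}M(x)\le c_d\OPT(\mathcal I)$, and $k-1\le k$ finishes the bound.

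The only delicate step is checking that every satisfied constraint falls into one of the two cases. Two facts make this work. First, the mask condition requires only $\varnothing\ne S\cap U_C\subseteq K$ for a single component $K$, and accessibility supplies the non-emptiness. Second, $S$ is covered by retained components, so $x$ and $y$ lie in well-defined distinct components and are therefore non-adjacent.
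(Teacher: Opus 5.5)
Your proposal is correct and follows essentially the same route as the paper: feasibility of $\sigma_S$ from isolation, accessibility to obtain a selected unit witness, charging each omitted constraint to a pair of selected unit witnesses in different green components, non-adjacency in $G_\theta$ to get $J(x,y)\le\theta(M(x)+M(y))$, summing over pairs, and finally Lemma~\ref{lem:arityd-random-isolation}. Two small differences: the paper re-derives feasibility inline, because the stated conclusion of Lemma~\ref{arityd-direct-lem-random-coloring} is probabilistic and you are really citing its proof; and the paper uses the non-strict bound $J(x,y)\le\theta\max\{M(x),M(y)\}$, which holds whether or not $J(x,y)>0$.
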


\begin{proof}
Every green component meeting $S$ lies in $S$.
It therefore has size at most $k$ and contains no variable
fixed to $0$ by $\rho$, so it is retained.
These components partition $S$.
The components forced to $1$ also lie in $S$,
so $P\subseteq S$ and the coloring is not skipped.
Thus $\sigma_S$ is well defined and feasible, since
\[
 \kappa(\sigma_S)=|S\setminus P|\le k-|P|=b.
\]
For $S=\varnothing$, both bounds follow from zero-freeness.
Otherwise, Lemma~\ref{lem:arityd-repair-accessibility} gives
a selected unit witness for every satisfied original constraint.

For each omitted constraint, choose two selected unit witnesses
in different components and assign its weight to their
unordered pair $\{x,y\}$.
The total weight assigned to this pair is at most $J(x,y)$.
Since $x,y$ lie in different green components, they are
not adjacent in $G_\theta$, so
$J(x,y)\le\theta\max\{M(x),M(y)\}\le\theta(M(x)+M(y))$.
Summing over all unordered pairs in $S$ gives
\begin{align*}
 F(S)-F_{\rm mask}(S)
 &\le\theta\sum_{\{x,y\}\subseteq S}(M(x)+M(y))\\
 &=\theta(|S|-1)\sum_{x\in S}M(x)\\
 &\le\theta(k-1)\sum_{x\in S}M(x).
\end{align*}
Finally, Lemma~\ref{lem:arityd-random-isolation} gives
$\sum_{x\in S}M(x)\le c_d\OPT(\mathcal I)$,
which proves the second bound.
\end{proof}

\paragraph*{The branching rule when $W>0$.}
Each marked item has $\deg_0(y)\ge\eta_1W/b$.
Each constraint in $\mathcal Z$ contributes its weight
to $\deg_0(y)$ for at most $d$ items. Together, these give
\[
 |\mathcal K_{\mathrm{hi}}|\frac{\eta_1W}{b}
 \le\sum_{y\in\mathcal K_{\mathrm{free}}}\deg_0(y)
 \le dW.
\]
Dividing by $\eta_1W/b>0$ gives
\begin{equation}
 |\mathcal K_{\mathrm{hi}}|\le\frac{db}{\eta_1}.
 \label{arityd-direct-eq-item-credit-high}
\end{equation}

For a partial assignment $\pi$ used by the search, let
$w_{\mathrm{acc}}(\pi)$ be the total weight of removed
constraints satisfied after substituting $\pi$ and setting
every unfixed item to $0$.
Removed constraints that are always satisfied after
substitution contribute their weights for every completion
of $\pi$.
Constraints removed in the second step may become unsatisfied.
For every feasible completion $\sigma$, the next lemma bounds
how much
$w_{\mathrm{acc}}(\pi)+F^{\mathcal J_\pi}(\sigma)$
can exceed $F^{\mathcal J}(\pi\sqcup\sigma)$.

\begin{lemma}
\label{arityd-direct-lem-item-credit}
Let $\mathcal J$ be an item instance with budget $b\ge1$
and $W:=F^{\mathcal J}(\mathbf0)>0$.
Fix a rational $\eta_1\in(0,1)$ and define
$\mathcal K_{\mathrm{hi}}$ as above.

Let $\pi:\mathcal K_{\mathrm{hi}}\to\{0,1\}$ be a partial
assignment with $\kappa(\pi)\le b$.
Use $\mathcal J_\pi$ and $w_{\mathrm{acc}}(\pi)$ defined above.
For every assignment $\sigma$ to the unmarked items with
$\kappa(\sigma)\le b-\kappa(\pi)$,
\begin{align*}
 F^{\mathcal J}(\pi\sqcup\sigma)
 &\le w_{\mathrm{acc}}(\pi)+F^{\mathcal J_\pi}(\sigma)\\
 &\le F^{\mathcal J}(\pi\sqcup\sigma)
       +\eta_1\OPT(\mathcal J).
\end{align*}
If $\pi$ sets every marked item to $0$, then
$\mathcal J_\pi$ is zero-free.
\end{lemma}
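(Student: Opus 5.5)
The plan is to classify each constraint $C$ of $\mathcal J$ by its fate in $\mathcal J_\pi$ and compare contributions constraint by constraint for a fixed completion $\sigma$. There are three kinds.
\begin{itemize}
\item[(a)] Constraints whose restriction under $\pi$ is always satisfied. These contribute $w(C)$ both to $F^{\mathcal J}(\pi\sqcup\sigma)$ and to $w_{\mathrm{acc}}(\pi)$.
\item[(b)] Constraints whose restriction is always unsatisfied. These contribute nothing to either side.
\item[(c)] Constraints that survive the first removal step.
\end{itemize}
A constraint of kind (c) is either kept in $\mathcal J_\pi$ or removed in the second step. If it is kept, it contributes identically to $F^{\mathcal J}(\pi\sqcup\sigma)$ and to $F^{\mathcal J_\pi}(\sigma)$, since its restriction is evaluated on $\sigma$. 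If it is removed in the second step, then it lies in $\mathcal Z$ and its restriction is satisfied when all unfixed items are $0$, so it is counted in $w_{\mathrm{acc}}(\pi)$. Summing, $w_{\mathrm{acc}}(\pi)+F^{\mathcal J_\pi}(\sigma)$ equals $F^{\mathcal J}(\pi\sqcup\sigma)$ plus the weight of second-step-removed constraints that $\sigma$ falsifies. This gives the lower inequality at once.

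For the upper inequality, bound the weight of second-step-removed constraints falsified by $\pi\sqcup\sigma$. Each such constraint is satisfied by $\pi\sqcup\mathbf0$ but not by $\pi\sqcup\sigma$. So its scope contains an unmarked item $y$ with $\sigma(y)=1$, and the constraint lies in $\mathcal Z$. Its weight is therefore counted in $\deg_0(y)$ for some selected unmarked item $y$. Since $y\notin\mathcal K_{\mathrm{hi}}$, we have $\deg_0(y)<\eta_1W/b$. Costs are positive integers, so $\sigma$ selects at most $b-\kappa(\pi)\le b$ items. The total loss is then less than $b\cdot\eta_1W/b=\eta_1W$. Finally $W=F^{\mathcal J}(\mathbf0)\le\OPT(\mathcal J)$, because $\mathbf0$ is feasible. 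The care needed here is only to verify that ``satisfied after substituting $\pi$ with unfixed items at $0$'' matches exactly the removal rule in \texttt{MakeChild}, including constraints already always satisfied.

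For zero-freeness when $\pi\equiv0$, note that $\pi\sqcup\mathbf0$ is the all-zero assignment of $\mathcal J$. A constraint satisfied by it lies in $\mathcal Z$. After restriction, it is either always satisfied, in which case it is removed in the first step, or it is satisfied by $\mathbf0$ on the remaining items, in which case it is removed in the second step. Hence no remaining constraint of $\mathcal J_\pi$ accepts $\mathbf0$. I expect no real obstacle. The main subtlety is the bookkeeping of the two removal steps, to make sure nothing is double counted and the inequality directions are right.
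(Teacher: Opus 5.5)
Your proposal is correct and follows the paper's proof essentially step for step. It uses the same classification of constraints by how the child construction treats them, and the same identification of the excess as the weight of second-step-removed constraints that $\pi\sqcup\sigma$ falsifies. It charges that excess to $\deg_0(y)$ of a selected unmarked item $y$ in the same way, and it shows zero-freeness for $\pi\equiv 0$ by the same argument that every constraint of $\mathcal Z$ is removed in one of the two steps.
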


\begin{proof}
Fix a feasible completion $\sigma$ of $\pi$.
The retained constraints contribute exactly
$F^{\mathcal J_\pi}(\sigma)$ to the value of $\pi\sqcup\sigma$.
Removed constraints that are always satisfied contribute
their weights to both this value and $w_{\mathrm{acc}}(\pi)$.
Those that are always unsatisfied contribute $0$ to both.
Every constraint removed in the second step has its weight
included in $w_{\mathrm{acc}}(\pi)$, whether or not
$\pi\sqcup\sigma$ satisfies it.
This proves the first inequality.
The excess counted weight is exactly the total weight of
constraints removed in the second step that
$\pi\sqcup\sigma$ fails to satisfy.

Consider any such constraint.
It was satisfied with $\pi$ fixed and all unmarked items
set to $0$.
Since the values fixed by $\pi$ do not change, its remaining
scope must contain an item $y$ selected by $\sigma$.
The constraint belonged to $\mathcal Z$ before substitution,
so its weight is included in $\deg_0(y)$.
Every item selected by $\sigma$ is unmarked, and hence
$\deg_0(y)<\eta_1W/b$ for each $y\in\supp(\sigma)$.

Since item costs are positive integers and $\sigma$ is feasible,
\[
 |\supp(\sigma)|\le\kappa(\sigma)
 \le b-\kappa(\pi)\le b.
\]
The excess counted weight is therefore at most
\[
 \sum_{y\in\supp(\sigma)}\deg_0(y)
 \le|\supp(\sigma)|\frac{\eta_1W}{b}
 \le\eta_1W.
\]
The all-zero assignment is feasible for $\mathcal J$, so
$W\le\OPT(\mathcal J)$.
This proves the second inequality.

Now suppose that $\pi$ sets every marked item to $0$.
Setting every unmarked item to $0$ then sets every item of
$\mathcal J$ to $0$.
Every constraint in $\mathcal Z$ is satisfied by this
assignment and is removed in the first or second step.
Every constraint outside $\mathcal Z$ rejects this assignment,
so every retained restriction rejects $\mathbf0$ on its
remaining items.
Thus $\mathcal J_\pi$ is zero-free.
\end{proof}

Fixing further item values in a retained constraint of
$\mathcal J_\pi$ gives a relation obtained by fixing items
in its corresponding constraint in $\mathcal J$.
Therefore, if every nonempty zero-free restriction of a
constraint of $\mathcal J$ is \Direct, the same holds for
$\mathcal J_\pi$.

\paragraph*{The branching rule when $W=0$.}
For the candidate set $\mathcal B$ formed using $\eta$,
we show that every feasible assignment of positive value
either selects an item in $\mathcal B$ or can be changed
to do so with value loss at most $2\eta\OPT(\mathcal J)$.
First, we show that every satisfied constraint has a selected
unit witness. We then use this fact to bound the change in
value when one item is set to $0$ or $1$.

For a current item constraint $C$, an item $y\in\scope(C)$
is a unit witness if $\tau_{\{y\}}\models C$.
The other items in $\scope(C)$ are controls.

\begin{lemma}
\label{arityd-direct-lem-item-descent}
Let $C$ be an item constraint with a nonempty zero-free
\Direct relation.
Every complete item assignment $\sigma$ satisfying $C$
selects an item $y\in\supp(\sigma)\cap\scope(C)$ such that
$\tau_{\{y\}}\models C$.
\end{lemma}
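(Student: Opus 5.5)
This is the item-level analogue of Lemma~\ref{lem:arityd-repair-accessibility}, and I would prove it the same way: reduce the support of $\sigma$ inside $\scope(C)$ one control at a time. Since only the values of $\sigma$ on $\scope(C)$ matter for $C$, I would work with the tuple $\alpha:=\sigma[C]$, which lies in the relation $R$ of $C$. Here $R$ is a relation on the items of $\scope(C)$. Under this identification, an item $y\in\scope(C)$ satisfies $\tau_{\{y\}}\models C$ exactly when its coordinate is a unit-witness coordinate of $R$ in the sense of Definition~\ref{def:arityd-two-rules}, since $\tau_{\{y\}}[C]=e_y$. Thus it suffices to show that $\alpha$ has a $1$ in some unit-witness coordinate.

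I would argue by contradiction. Suppose every coordinate $c$ with $\alpha_c=1$ is a control. List these coordinates as $c_1,\ldots,c_t$, and set them to $0$ one at a time. Suppose the current tuple $\alpha^{(i-1)}$ lies in $R$ and has a $1$ in control coordinate $c_i$. Then the \Direct rule gives $\alpha^{(i)}:=\alpha^{(i-1)}-e_{c_i}\in R$. After $t$ steps we reach $\mathbf0\in R$, which contradicts zero-freeness. Note that $t\ge1$, because $\alpha\ne\mathbf0$, since $\alpha\in R$ and $\mathbf0\notin R$. So $\alpha$ has some coordinate $y$ with $\alpha_y=1$ and $e_y\in R$. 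This gives an item $y\in\supp(\sigma)\cap\scope(C)$ with $\tau_{\{y\}}\models C$.

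The only point needing care is interpretation rather than combinatorics. The hypothesis should be read as a statement about the current item constraint's relation itself, which is nonempty, zero-free and \Direct. In the search, this is supplied by Lemma~\ref{arityd-direct-lem-owner-hus} together with the remark after Lemma~\ref{arityd-direct-lem-item-credit} that restrictions preserve the property. The unit-witness notion must also be taken relative to the current relation, not the original variable constraint. With that fixed, the argument is exactly Lemma~\ref{lem:arityd-repair-accessibility}, and I would simply cite it: $R$ is accessible, so $\alpha$ has a $1$ at a unit-witness coordinate.
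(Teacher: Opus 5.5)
Your proposal is correct and matches the paper's proof: assuming every selected item of $\scope(C)$ is a control, you reset these items to $0$ one at a time using the \Direct rule and reach the all-zero tuple, contradicting zero-freeness. You are also right that this is Lemma~\ref{lem:arityd-repair-accessibility} applied to the current item relation, with unit witnesses taken relative to that relation.
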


\begin{proof}
Suppose that no selected item in $\scope(C)$ is a unit
witness of $C$.
Every selected item in $\scope(C)$ is then a control.
Starting from $\sigma$, set these items to $0$ one at a time.
The \Direct rule preserves satisfaction of $C$ at each step.
The resulting assignment is $0$ on every item in $\scope(C)$
and still satisfies $C$, contradicting zero-freeness.
Thus $\sigma$ selects a unit witness $y$ of $C$, and
$\tau_{\{y\}}\models C$ by definition.
\end{proof}

For $a\in\{0,1\}$, we write $\sigma[y\leftarrow a]$ for
the assignment obtained by setting item $y$ to $a$ and
leaving all other values unchanged.
Updates to distinct items may be written together.

For the following lemma, write $F:=F^{\mathcal J}$.

\begin{lemma}
\label{arityd-direct-lem-item-calculus}
Let $\mathcal J$ be a zero-free item instance whose nonempty
relations are \Direct.

For every item $y\in\mathcal K_{\mathrm{free}}$,
\[
 \sum_{\substack{z\in\mathcal K_{\mathrm{free}}\\z\ne y}}
 I_{\mathrm{item}}(y\to z)
 \le(d-1)\mu(y).
\]

For every complete item assignment $\sigma$, its value satisfies
\begin{equation}
 F(\sigma)\le\sum_{y\in\supp(\sigma)}\mu(y).
 \label{arityd-direct-eq-item-subadd}
\end{equation}

For every selected item $y\in\supp(\sigma)$, setting it to $0$ gives
\begin{equation}
 F(\sigma[y\leftarrow0])\ge F(\sigma)-\mu(y).
 \label{arityd-direct-eq-item-remove}
\end{equation}

For every unselected item
$y\in\mathcal K_{\mathrm{free}}\setminus\supp(\sigma)$,
setting it to $1$ gives
\begin{align}
 F(\sigma[y\leftarrow1])
 &\ge F(\sigma)+\mu(y)\notag\\
 &\quad-\sum_{z\in\supp(\sigma)}
 \bigl(I_{\mathrm{item}}(y\to z)
       +I_{\mathrm{item}}(z\to y)\bigr).
 \label{arityd-direct-eq-item-insert}
\end{align}
\end{lemma}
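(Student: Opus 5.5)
The plan is to prove the four statements in order, relying mainly on Lemma~\ref{arityd-direct-lem-item-descent}: in a zero-free \Direct item instance, every satisfied constraint has a selected unit witness.

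\emph{The interaction bound.} This is the item analogue of Lemma~\ref{lem:arityd-rooted-outgoing}. A constraint counted by $\mu(y)$ has at most $d-1$ scope items other than $y$, so its weight enters at most $d-1$ of the sums $I_{\mathrm{item}}(y\to z)$. Summing over such constraints gives the bound.

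\emph{Subadditivity \eqref{arityd-direct-eq-item-subadd}.} Take a constraint $C$ satisfied by $\sigma$. By Lemma~\ref{arityd-direct-lem-item-descent} it has a selected unit witness $y\in\supp(\sigma)$, so $\tau_{\{y\}}\models C$ and $w(C)$ is counted in $\mu(y)$. Charging each satisfied constraint to one such $y$ and using nonnegative weights gives $F(\sigma)\le\sum_{y\in\supp(\sigma)}\mu(y)$.

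\emph{Removal \eqref{arityd-direct-eq-item-remove}.} Only constraints containing $y$ and satisfied by $\sigma$ can be lost. Let $C$ be one of these. If $y$ is a unit witness of $C$, then $w(C)$ is counted in $\mu(y)$. If $y$ is a control of $C$ (i.e.\ $\tau_{\{y\}}\not\models C$), the \Direct rule applies because $C$'s relation is nonempty, zero-free and \Direct: resetting the selected control $y$ preserves satisfaction, so $C$ is not lost. The lost weight is therefore at most $\mu(y)$.

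\emph{Insertion \eqref{arityd-direct-eq-item-insert}.} This is the main step, and I would split the change into gains and losses.

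For gains, consider constraints counted by $\mu(y)$ whose scope contains no item of $\supp(\sigma)$. On such a scope, $\sigma[y\leftarrow1]$ agrees with $\tau_{\{y\}}$, so the constraint is satisfied afterwards. Before the change, $\sigma$ restricted to the scope is all zero, and zero-freeness means it was unsatisfied. Each such constraint therefore yields a gain of $w(C)$. The remaining constraints counted by $\mu(y)$ contain some $z\in\supp(\sigma)$, and their total weight is at most $\sum_{z}I_{\mathrm{item}}(y\to z)$. Hence the gain is at least $\mu(y)-\sum_z I_{\mathrm{item}}(y\to z)$.

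For losses, take a constraint satisfied by $\sigma$ but not by $\sigma[y\leftarrow1]$. It must contain $y$. By Lemma~\ref{arityd-direct-lem-item-descent} it has a selected unit witness $z\in\supp(\sigma)$, and $z\ne y$ since $y$ is unselected. So $w(C)$ is counted in $I_{\mathrm{item}}(z\to y)$, and the total loss is at most $\sum_z I_{\mathrm{item}}(z\to y)$.

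A constraint cannot be both a gain and a loss, so the net change is bounded below by gains minus losses, which is \eqref{arityd-direct-eq-item-insert}.

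The delicate point I expect is the status of the instance after substitution. The removal and insertion steps need every current constraint to be zero-free and \Direct. Zero-freeness is the hypothesis of the lemma. That every nonempty relation is \Direct is also a hypothesis, and Lemma~\ref{arityd-direct-lem-owner-hus} justifies it for the instances where the lemma is used. Empty-scope constraints would be always satisfied or always unsatisfied; they have been removed and so do not affect the argument.
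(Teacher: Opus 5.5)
Your proposal is correct and follows the paper's proof essentially step for step. The interaction bound is the same $d-1$ count. Subadditivity and the losses in the insertion bound are charged through Lemma~\ref{arityd-direct-lem-item-descent}, and removal uses the \Direct rule on controls. The insertion change is split into gains, from constraints counted by $\mu(y)$ whose scope avoids $\supp(\sigma)$, and losses charged to $I_{\mathrm{item}}(z\to y)$. The one step you leave implicit is that every constraint counted by $\mu(y)$ contains $y$; this follows from zero-freeness, as the paper notes, and is harmless anyway, since $I_{\mathrm{item}}(y\to z)$ only counts constraints whose scope contains $y$.
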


\begin{proof}
Every constraint counted by $\mu(y)$ contains $y$, since it is
zero-free. Its weight therefore occurs in at most $d-1$ terms
$I_{\mathrm{item}}(y\to z)$, proving the first bound.

By Lemma~\ref{arityd-direct-lem-item-descent}, every constraint
satisfied by $\sigma$ has a unit witness $y\in\supp(\sigma)$.
Its weight is therefore included in $\mu(y)$.
Summing over the satisfied constraints proves
\eqref{arityd-direct-eq-item-subadd}.

If setting a selected item $y$ to $0$ makes a satisfied
constraint $C$ unsatisfied, then $y$ must be a unit witness
of $C$. Otherwise, the \Direct rule would preserve satisfaction.
Each such constraint has its weight included in $\mu(y)$.
Thus the total weight of constraints that become unsatisfied
is at most $\mu(y)$, proving
\eqref{arityd-direct-eq-item-remove}.

For \eqref{arityd-direct-eq-item-insert}, fix an unselected item $y$.
Every constraint counted by $\mu(y)$ whose scope is disjoint
from $\supp(\sigma)$ is unsatisfied under $\sigma$ and becomes
satisfied when $y$ is set to $1$.
Every other constraint counted by $\mu(y)$ contains a selected
item $z$, so its weight is included in
$I_{\mathrm{item}}(y\to z)$.
Therefore the total weight of constraints that become satisfied
is at least
\[
 \mu(y)-\sum_{z\in\supp(\sigma)}I_{\mathrm{item}}(y\to z).
\]

Every constraint that becomes unsatisfied contains $y$.
By Lemma~\ref{arityd-direct-lem-item-descent}, it has a unit
witness $z\in\supp(\sigma)$ before the change.
Its weight is therefore included in
$I_{\mathrm{item}}(z\to y)$.
Thus the total weight of constraints that become unsatisfied
is at most
\[
 \sum_{z\in\supp(\sigma)}I_{\mathrm{item}}(z\to y).
\]

The change in value is the weight that becomes satisfied
minus the weight that becomes unsatisfied.
Combining the two bounds proves
\eqref{arityd-direct-eq-item-insert}.
\end{proof}

We use Lemma~\ref{arityd-direct-lem-item-calculus} to bound
$\mathsf{Bad}(y)$ and $\mathcal B$.
If $\mu(y)=0$, the terms $I_{\mathrm{item}}(y\to z)$ with $z\ne y$
are nonnegative and have sum at most $0$, so each is $0$.
If $\mu(y)>0$, the definition of $\mathsf{Bad}(y)$ and
the lemma give
\[
 \begin{aligned}
 |\mathsf{Bad}(y)|\frac{\eta}{b}\mu(y)
 &\le\sum_{z\in\mathcal K_{\mathrm{free}}\setminus\{y\}}
       I_{\mathrm{item}}(y\to z)\\
 &\le(d-1)\mu(y).
 \end{aligned}
\]
For $\mu(y)>0$, divide by $(\eta/b)\mu(y)$.
When $\mu(y)=0$, $\mathsf{Bad}(y)=\varnothing$ by definition.
Thus, for every item $y$,
\begin{equation}
 |\mathsf{Bad}(y)|\le\frac{(d-1)b}{\eta}.
 \label{arityd-direct-eq-item-bad-size}
\end{equation}
For distinct items $y,z$ with $z\notin\mathsf{Bad}(y)$,
the definition and the zero-score case give
\[
 I_{\mathrm{item}}(y\to z)\le\frac{\eta}{b}\mu(y).
\]

Each list $L_c$ has at most $\ell$ items.
The definition of $\mathcal B$ and
\eqref{arityd-direct-eq-item-bad-size} therefore give
\begin{equation}
 \begin{aligned}
 |\mathcal B|
 &\le\sum_{c=1}^b\sum_{y\in L_c}
       \bigl(1+|\mathsf{Bad}(y)|\bigr)\\
 &\le b\ell\left(1+\frac{(d-1)b}{\eta}\right).
 \end{aligned}
 \label{arityd-direct-eq-item-branch-size}
\end{equation}

\begin{lemma}
\label{arityd-direct-lem-item-swap}
Let $\mathcal J$ be a zero-free item instance with budget
$b\ge1$ whose nonempty relations are \Direct.
Let $\eta\in(0,1)$ be rational, and construct the sets
$L_c$ and $\mathcal B$ from $\mathcal J$ as above.

Let $\sigma$ be a feasible item assignment with
$F(\sigma)>0$ and $\supp(\sigma)\cap\mathcal B=\varnothing$.
There are a selected item $u\in\supp(\sigma)$ and an
unselected item $v\in L_{\kappa(u)}$ such that
\[
 \sigma':=\sigma[u\leftarrow0,\ v\leftarrow1]
\]
satisfies
\begin{align*}
 \kappa(\sigma')&\le\kappa(\sigma),\\
 F(\sigma')&\ge F(\sigma)-2\eta\OPT(\mathcal J).
\end{align*}
\end{lemma}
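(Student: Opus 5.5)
The plan follows the overview's exchange argument. Since $F(\sigma)>0$, pick a constraint $C$ satisfied by $\sigma$. By Lemma~\ref{arityd-direct-lem-item-descent}, $\sigma$ selects an item $u\in\scope(C)$ with $\tau_{\{u\}}\models C$. Hence $\mu(u)\ge w(C)>0$ (weights are positive after discarding zeros).

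\emph{Finding the replacement $v$.} Put $c:=\kappa(u)$. Feasibility gives $c\le\kappa(\sigma)\le b$, so $u$ is eligible for $L_c$. We have $u\notin L_c$, because $L_c\subseteq\mathcal B$ and $\sigma$ avoids $\mathcal B$. Therefore $|L_c|=\ell$, and every item of $L_c$ has cost at most $c$ and score at least $\mu(u)$; this uses only the ordering, and ties do not matter.

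Costs are positive integers, so $|\supp(\sigma)|\le b$. By \eqref{arityd-direct-eq-item-bad-size},
\[
 \Bigl|\bigcup_{z\in\supp(\sigma)}\mathsf{Bad}(z)\Bigr|\le\frac{(d-1)b^2}{\eta}<\ell .
\]
Hence some $v\in L_c$ lies outside this union. This $v$ is unselected, since $v\in\mathcal B$. In particular $v\ne u$, and $\kappa(\sigma')=\kappa(\sigma)-\kappa(u)+\kappa(v)\le\kappa(\sigma)$.

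\emph{The value bound.} Apply Lemma~\ref{arityd-direct-lem-item-calculus} in two steps. First, \eqref{arityd-direct-eq-item-remove} gives $F(\sigma[u\leftarrow0])\ge F(\sigma)-\mu(u)$. Second, apply \eqref{arityd-direct-eq-item-insert} to $\sigma[u\leftarrow0]$ and $v$. Its support is contained in $\supp(\sigma)$, and the subtracted interaction terms are nonnegative, so we may sum over all $z\in\supp(\sigma)$.

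For each such $z$, two bounds hold. Since $v\notin\mathsf{Bad}(z)$, we get $I_{\mathrm{item}}(z\to v)\le(\eta/b)\mu(z)$. Since $v\in L_c$, we have $\mathsf{Bad}(v)\subseteq\mathcal B$, and $z\notin\mathcal B$, so $z\notin\mathsf{Bad}(v)$ and $I_{\mathrm{item}}(v\to z)\le(\eta/b)\mu(v)$. Both bounds also cover the case where $\mu$ is zero.

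Next, $\mu(y)\le\OPT(\mathcal J)$ for any item $y$ with $\kappa(y)\le b$, because selecting $y$ alone is feasible. This applies to $v$ and to every $z\in\supp(\sigma)$. Summing over at most $b$ selected items bounds each interaction sum by $\eta\OPT(\mathcal J)$. Combining with $\mu(v)\ge\mu(u)$ gives $F(\sigma')\ge F(\sigma)-2\eta\OPT(\mathcal J)$.

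\emph{Main obstacle.} The delicate step is the choice of $v$. It must avoid being a bad interaction candidate \emph{of} every selected item. It must also have no selected item among \emph{its own} bad set. The first is handled by counting against $\ell$. The second follows from the inclusion of $\mathsf{Bad}(v)$ in $\mathcal B$, together with $\sigma$ avoiding $\mathcal B$.
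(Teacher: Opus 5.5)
Your proof is correct and follows the paper's exchange argument. The list $L_{\kappa(u)}$ is full, you pick $v$ outside $\bigcup_{z\in\supp(\sigma)}\mathsf{Bad}(z)$ by counting against $\ell$, you use $\mathsf{Bad}(v)\subseteq\mathcal B$ for the reverse interactions, and you finish with the removal and insertion bounds of Lemma~\ref{arityd-direct-lem-item-calculus}.

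The differences are cosmetic. You get $\mu(u)>0$ directly from Lemma~\ref{arityd-direct-lem-item-descent} rather than from \eqref{arityd-direct-eq-item-subadd}. You also sum the interaction terms over all of $\supp(\sigma)$ rather than $\supp(\sigma)\setminus\{u\}$, which is harmless because the terms are nonnegative and both bounds also hold for $z=u$.
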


\begin{proof}
By \eqref{arityd-direct-eq-item-subadd} and $F(\sigma)>0$,
some selected item $u$ has $\mu(u)>0$.
Put $c:=\kappa(u)\in[b]$.
The item $u$ meets both conditions for inclusion in $L_c$,
but $u\notin L_c$ because $L_c\subseteq\mathcal B$.
Therefore $|L_c|=\ell$, and every item in $L_c$ has cost
at most $\kappa(u)$ and score at least $\mu(u)$.

Since $\sigma$ selects at most $b$ items,
\eqref{arityd-direct-eq-item-bad-size} gives
\[
 \left|\bigcup_{z\in\supp(\sigma)}\mathsf{Bad}(z)\right|
 \le\frac{(d-1)b^2}{\eta}<\ell.
\]
Choose $v\in L_c$ outside this union.
Thus $v\notin\mathsf{Bad}(z)$ for every selected item $z$.

Since $L_c\subseteq\mathcal B$, the item $v$ is unselected.
Also, $\mathsf{Bad}(v)\subseteq\mathcal B$, so no selected
item belongs to $\mathsf{Bad}(v)$.
Consequently, for every $z\in\supp(\sigma)\setminus\{u\}$,
\[
 I_{\mathrm{item}}(v\to z)\le\frac{\eta}{b}\mu(v),
 \qquad
 I_{\mathrm{item}}(z\to v)\le\frac{\eta}{b}\mu(z).
\]

Replacing $u$ by $v$ does not increase the cost:
\[
 \kappa(\sigma')
 =\kappa(\sigma)-\kappa(u)+\kappa(v)
 \le\kappa(\sigma)\le b.
\]
Thus $\sigma'$ is feasible.

The item $v$ and every item selected by $\sigma$ are
feasible when selected alone.
Their scores are therefore at most $\OPT(\mathcal J)$.
There are at most $b$ items in $\supp(\sigma)\setminus\{u\}$,
so the two bounds above give
\begin{align*}
 \sum_{z\in\supp(\sigma)\setminus\{u\}}
 I_{\mathrm{item}}(v\to z)
 &\le\eta\OPT(\mathcal J),\\
 \sum_{z\in\supp(\sigma)\setminus\{u\}}
 I_{\mathrm{item}}(z\to v)
 &\le\eta\OPT(\mathcal J).
\end{align*}

Finally, apply the removal bound to $u$ and then the
insertion bound to $v$:
\begin{align*}
 F(\sigma')
 &\ge F(\sigma)-\mu(u)+\mu(v)\\
 &\quad-\sum_{z\in\supp(\sigma)\setminus\{u\}}
 \bigl(I_{\mathrm{item}}(v\to z)
       +I_{\mathrm{item}}(z\to v)\bigr)\\
 &\ge F(\sigma)-2\eta\OPT(\mathcal J).
\end{align*}
The last inequality uses $\mu(v)\ge\mu(u)$ and the two
sum bounds above.
\end{proof}

For the following theorem, $\ell_{\mathrm{in}}$ is the total bit
length of the item descriptions, constraints, weights, costs, budget,
and accuracy parameter $\delta$.

\begin{theorem}
\label{arityd-direct-thm-item-hus}
Let $\mathcal J$ be an item instance with a nonnegative
integer budget $b_0$ and constraints of arity at most $d\ge1$.
Assume that every nonempty zero-free relation obtained by
fixing items in any input constraint is \Direct.
For every rational $\delta\in(0,1)$,
Algorithm~\ref{alg:arityd-direct-items} returns a feasible
assignment of value at least $(1-\delta)\OPT(\mathcal J)$ in time
\[
 2^{O(b_0\log(1+db_0/\delta)+d)}
 \ell_{\mathrm{in}}^{O(1)}.
\]
\end{theorem}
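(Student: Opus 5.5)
We prove Theorem~\ref{arityd-direct-thm-item-hus} by induction on the potential $\phi(\mathcal J):=2b+\mathbf 1[W>0]$, where $b$ is the current budget and $W=F^{\mathcal J}(\mathbf0)$. The invariant we carry is stronger than the final statement. For every call on an instance $\mathcal J$ in which every nonempty zero-free restriction of a constraint is \Direct, the returned assignment $\hat\sigma$ is feasible. Moreover, for every feasible $\sigma$ of $\mathcal J$ we want
\[
 F^{\mathcal J}(\hat\sigma)\ge F^{\mathcal J}(\sigma)-\phi(\mathcal J)\max\{\eta_1,2\eta_2\}\OPT(\mathcal J).
\]
Here $\eta_1,\eta_2$ are the global thresholds. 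Since $\eta_2=\eta_1/2$, the maximum equals $\eta_1$. At the root, $\phi\le 2b_0+1$ and $\eta_1=\delta/(2b_0+1)$, so applying the invariant to an optimal $\sigma$ gives the $(1-\delta)$ bound.

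Feasibility is immediate. The \Direct hypothesis is inherited by children by the remark after Lemma~\ref{arityd-direct-lem-item-credit}. We also need $\OPT(\mathcal J_\pi)$ and the child losses to be measured against the parent optimum. For this we note that $w_{\mathrm{acc}}(\pi)+\OPT(\mathcal J_\pi)\le \OPT(\mathcal J)+\eta_1\OPT(\mathcal J)$ is not quite what we want. Instead we use the fact that every feasible child assignment, combined with $\pi$, is feasible for $\mathcal J$ and has value at least its child value, up to constraints that are always satisfied. Hence $\OPT(\mathcal J_\pi)\le\OPT(\mathcal J)$. We will check this carefully, because removed constraints in the second step could break it. Using the first inequality of Lemma~\ref{arityd-direct-lem-item-credit} with $\sigma=\mathbf0$ on the remaining items should suffice.

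\textbf{Case $W>0$.} Take the branch $\pi=\sigma|_{\mathcal K_{\mathrm{hi}}}$. Its potential drops by at least one: either the budget falls, or $\pi$ is all-zero and Lemma~\ref{arityd-direct-lem-item-credit} makes the child zero-free. Lemma~\ref{arityd-direct-lem-item-credit} sandwiches $F^{\mathcal J}(\pi\sqcup\cdot)$ against $w_{\mathrm{acc}}(\pi)+F^{\mathcal J_\pi}(\cdot)$. The lower inequality is applied to the recursive output, and the upper one to $\sigma$ restricted to the unmarked items. Together with the induction hypothesis this gives total loss at most $(\phi-1)\eta_1\OPT+\eta_1\OPT$.

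\textbf{Case $W=0$.} If $F(\sigma)=0$, the candidate $\mathbf0$ suffices. If $\sigma$ selects some $y\in\mathcal B$, we follow that branch, and the budget drops by $\kappa(y)\ge1$, so $\phi$ drops by at least two. Otherwise, Lemma~\ref{arityd-direct-lem-item-swap} replaces $\sigma$ by $\sigma'$, which selects $v\in L_{\kappa(u)}\subseteq\mathcal B$, with loss at most $2\eta_2\OPT$. We then follow the branch for $v$. In every subcase the branch value equals $F^{\mathcal J}$ exactly, since the child only drops always-satisfied or always-unsatisfied constraints when $W=0$ (no $\mathcal Z$ removals). This closes the induction.

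\textbf{Running time.} The main obstacle is the count, since one child may keep $b$ fixed. We bound branching by $b$-decrease $r$. In the $W>0$ case, $|\mathcal K_{\mathrm{hi}}|\le db/\eta_1$, so there are at most $(db/\eta_1)^r$ branches of cost $r$. In the $W=0$ case, there are at most $|\mathcal B|\le b\ell(1+(d-1)b/\eta_2)=(1+db_0/\delta)^{O(1)}$ branches. This yields $N(b)\le N'(b)+\sum_{r\ge1}C^rN(b-r)$ with $C=(1+db_0/\delta)^{O(1)}$, and the zero-cost child $N'$ has no further zero-cost child. Hence $N(b)\le(4C)^b$. Per node, computing $\mu$, $I_{\mathrm{item}}$, and the restrictions costs $2^d\ell_{\mathrm{in}}^{O(1)}$, since each item constraint is tabulated over at most $2^d$ tuples. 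This gives the stated bound.
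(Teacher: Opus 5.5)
Your proposal is correct and is essentially the paper's proof: it uses the same potential $\phi(\mathcal J)=2b+\mathbf 1[W>0]$, the same applications of Lemma~\ref{arityd-direct-lem-item-credit} and Lemma~\ref{arityd-direct-lem-item-swap} in the two cases, and the same recurrence with a single zero-cost, zero-free child. The only difference is that you carry an additive invariant against every feasible $\sigma$ instead of the paper's bound $F^{\mathcal J}(\sigma_{\mathrm{out}})\ge(1-\phi(\mathcal J)\eta_1)\OPT(\mathcal J)$, which costs you the extra (true) fact $\OPT(\mathcal J_\pi)\le\OPT(\mathcal J)$ in place of the paper's factoring out of $\alpha$. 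There are two small slips. First, that fact holds simply because the child's constraints are restrictions of parent constraints with unchanged nonnegative weights, so $F^{\mathcal J_\pi}(\sigma')\le F^{\mathcal J}(\pi\sqcup\sigma')$; removals only lower the child's value, and the first inequality of Lemma~\ref{arityd-direct-lem-item-credit} points the wrong way for this. Second, in the $W=0$ branches $\phi$ drops by $2\kappa(y)-1$, which can be exactly one because the child may have $W>0$; a drop of one is still all your induction needs.
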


\begin{proof}
If $b_0=0$, positive item costs make $\mathbf0$ the only
feasible assignment, and the algorithm returns it in
polynomial time. Assume $b_0\ge1$ below.
The relation assumption holds at every call: any further
restriction of a child constraint is also a restriction of
its corresponding input constraint.

\paragraph*{Termination.}
For a call on $\mathcal J$ with remaining budget $b$, define
\[
 \phi(\mathcal J):=
 2b+\mathbf1[F^{\mathcal J}(\mathbf0)>0],
\]
where $\mathbf1[E]$ equals $1$ if $E$ holds and $0$ otherwise.

If a branch fixes $\pi$ at cost $r:=\kappa(\pi)\ge1$,
its child has budget $b-r$.
The term $2b$ decreases by $2r$, while the indicator
can increase by at most $1$. Hence
\[
 \phi(\mathcal J_\pi)
 \le\phi(\mathcal J)-2r+1
 \le\phi(\mathcal J)-1.
\]
The only recursive branch of cost zero fixes every marked
item to $0$ at a call with $W>0$.
Its child is zero-free by
Lemma~\ref{arityd-direct-lem-item-credit}.
The budget stays the same, and the indicator decreases
from $1$ to $0$, so
\[
 \phi(\mathcal J_\pi)=\phi(\mathcal J)-1.
\]
Thus every recursive step decreases the nonnegative integer
$\phi$ by at least one.
Its initial value is at most $2b_0+1$, so the search terminates
after at most $2b_0+1$ recursive steps along any path.

\paragraph*{Feasibility and approximation.}
Recall that $\eta_1=\delta/(2b_0+1)$ and $2\eta_2=\eta_1$.
We prove by induction on $\phi(\mathcal J)$ that every call
returns a feasible assignment $\sigma_{\mathrm{out}}$ with
\begin{equation}
 F^{\mathcal J}(\sigma_{\mathrm{out}})
 \ge\bigl(1-\phi(\mathcal J)\eta_1\bigr)\OPT(\mathcal J).
 \label{arityd-direct-eq-item-induction}
\end{equation}
If $b=0$, the all-zero assignment is the only feasible
assignment. If no constraints remain, every assignment has
value $0$. Thus terminal calls return an optimal assignment.

At a nonterminal call, every tried assignment $\pi$ has
$\kappa(\pi)\le b$. The induction hypothesis gives a child
assignment $\sigma$ with $\kappa(\sigma)\le b-\kappa(\pi)$.
Hence
\[
 \kappa(\pi\sqcup\sigma)
 =\kappa(\pi)+\kappa(\sigma)\le b.
\]
The additional candidate $\mathbf0$ when $W=0$ is also feasible.

Write $\omega:=\OPT(\mathcal J)$ and put
\[
 \alpha:=1-\bigl(\phi(\mathcal J)-1\bigr)\eta_1.
\]
At a nonterminal call,
$2\le\phi(\mathcal J)\le2b_0+1$, and hence $0<\alpha\le1$.
Every child has $\phi(\mathcal J_\pi)\le\phi(\mathcal J)-1$.
By induction, its returned assignment $\sigma$ satisfies
\[
 F^{\mathcal J_\pi}(\sigma)
 \ge\alpha\OPT(\mathcal J_\pi).
\]

Suppose $W>0$.
Choose an optimal assignment of $\mathcal J$, and let $\pi$
be its restriction to $\mathcal K_{\mathrm{hi}}$.
This $\pi$ is one of the tried assignments.
Write $w_{\mathrm{acc}}:=w_{\mathrm{acc}}(\pi)$ and
$\omega':=\OPT(\mathcal J_\pi)$.
The first inequality of
Lemma~\ref{arityd-direct-lem-item-credit}, applied to the
optimal completion, gives
\[
 w_{\mathrm{acc}}+\omega'\ge\omega.
\]
For the child's returned completion $\sigma$, the second
inequality of that lemma and the induction bound give
\begin{align*}
 F^{\mathcal J}(\pi\sqcup\sigma)
 &\ge w_{\mathrm{acc}}+F^{\mathcal J_\pi}(\sigma)-\eta_1\omega\\
 &\ge w_{\mathrm{acc}}+\alpha\omega'-\eta_1\omega\\
 &\ge\alpha(w_{\mathrm{acc}}+\omega')-\eta_1\omega\\
 &\ge(\alpha-\eta_1)\omega\\
 &=\bigl(1-\phi(\mathcal J)\eta_1\bigr)\omega.
\end{align*}
Here $w_{\mathrm{acc}}\ge0$ and $\alpha\le1$ justify
factoring out $\alpha$.

Suppose $W=0$.
The current instance is zero-free, since constraints of
weight zero were discarded.
If $\omega=0$, the candidate $\mathbf0$ already has optimal value.
Assume $\omega>0$.
If an optimal assignment selects an item in $\mathcal B$,
there is a branch fixing that item to $1$.
Otherwise, Lemma~\ref{arityd-direct-lem-item-swap}, with
$\eta=\eta_2$, gives a feasible assignment selecting an item
in $\mathcal B$ with value at least
\[
 (1-2\eta_2)\omega=(1-\eta_1)\omega.
\]
Thus some tried assignment $\pi$ extends to a feasible
assignment of $\mathcal J$ with value at least
$(1-\eta_1)\omega$.

Since $\mathcal Z=\varnothing$, the child construction removes
only constraints that become always satisfied or always
unsatisfied after substituting $\pi$.
Their contribution is therefore exact for every completion.
With $w_{\mathrm{acc}}:=w_{\mathrm{acc}}(\pi)$ and
$\omega':=\OPT(\mathcal J_\pi)$, this gives
\[
 w_{\mathrm{acc}}+\omega'\ge(1-\eta_1)\omega.
\]
For the child's returned completion $\sigma$,
\begin{align*}
 F^{\mathcal J}(\pi\sqcup\sigma)
 &=w_{\mathrm{acc}}+F^{\mathcal J_\pi}(\sigma)\\
 &\ge w_{\mathrm{acc}}+\alpha\omega'\\
 &\ge\alpha(w_{\mathrm{acc}}+\omega')\\
 &\ge\alpha(1-\eta_1)\omega\\
 &\ge(\alpha-\eta_1)\omega\\
 &=\bigl(1-\phi(\mathcal J)\eta_1\bigr)\omega.
\end{align*}
These inequalities again use $w_{\mathrm{acc}}\ge0$ and
$0<\alpha\le1$.

The search evaluates all candidates on the constraints of
the current instance and returns one of largest value.
Thus \eqref{arityd-direct-eq-item-induction} holds in both cases.
At the initial call,
$\phi(\mathcal J)\eta_1\le(2b_0+1)\eta_1=\delta$,
which proves the required approximation guarantee.

\paragraph*{Running time.}
The recursion depth is at most $2b_0+1$.
Since $b\le b_0$ and the thresholds remain fixed,
\eqref{arityd-direct-eq-item-credit-high} and
\eqref{arityd-direct-eq-item-branch-size} give size bounds
depending only on $d,b_0,\delta$.
Fix one integer $q\ge2$ such that, at every nonterminal call,
\[
 1+|\mathcal K_{\mathrm{hi}}|\le q\quad(W>0),
 \qquad
 |\mathcal B|\le q\quad(W=0).
\]
These bounds allow $q=(1+db_0/\delta)^{O(1)}$.
When $W>0$, a branch of cost $r\ge1$ selects at most $r$
marked items. Writing $h:=|\mathcal K_{\mathrm{hi}}|$, the
number of these branches is at most
$\sum_{i=0}^{\min\{r,h\}}\binom hi\le(1+h)^r\le q^r$.
When $W=0$, there are at most $q$ branches, all of positive cost.

Let $N(b)$ bound the number of recursive calls, including
the current call, starting with budget at most $b$.
Then $N(0)=1$.
Branches of cost $r\ge1$ contribute at most $q^rN(b-r)$ calls.
The only branch of cost zero has a zero-free child,
which contributes at most $1+qN(b-1)$ calls.
Including the current call therefore gives, for $b\ge1$,
\[
 N(b)\le2+qN(b-1)+\sum_{r=1}^b q^rN(b-r).
\]
We prove $N(b)\le(4q)^b$ by induction on $b$.
The case $b=0$ follows from $N(0)=1$.
For $b\ge1$, substitute the induction bounds into the recurrence:
\[
 N(b)\le(4q)^b
 \left(\frac{2}{(4q)^b}+\frac14+\sum_{r=1}^b4^{-r}\right).
\]
Since $q\ge2$, the expression in parentheses is at most
$1/4+1/4+1/3=5/6<1$.
This proves the induction bound.

To list the assignments $\pi$ on $\mathcal K_{\mathrm{hi}}$
with $\kappa(\pi)\le b$, process the marked items in their
fixed order, trying $0$ before $1$.
Stop extending a partial assignment once its cost exceeds $b$.
Every partial assignment kept extends to such a $\pi$ by
setting the remaining marked items to $0$.
Each $\pi$ has $h+1$ such partial assignments,
one for each possible number of processed items.
Hence the number of partial assignments kept is at most
$(h+1)$ times the number of these assignments $\pi$.
Since $h$ is at most the number of input items, this takes
polynomial work per generated branch.

Forming children, computing scores and interactions, forming
lists, and evaluating candidates take
$b_0^{O(1)}2^d\ell_{\mathrm{in}}^{O(1)}$ time per call and branch.
Scores and interactions are sums of input weights.
Their bit lengths, as well as those of the thresholds and
list sizes, are polynomial in $\ell_{\mathrm{in}}$.
Thus exact arithmetic also takes polynomial time.

Each generated branch creates one child call.
Multiplying the bound on the number of calls by the work
per call and branch therefore gives
\[
 (4q)^{b_0}b_0^{O(1)}2^d\ell_{\mathrm{in}}^{O(1)}
 =2^{O(b_0\log(1+db_0/\delta)+d)}
   \ell_{\mathrm{in}}^{O(1)}.
\]
\end{proof}

\subsection{The \Direct algorithm}

Algorithm~\ref{alg:arityd-direct} runs the item search for each
recorded triple and returns the selected set with largest value. It uses the coloring threshold $\theta=\delta/(2kc_d)$
and accuracy $\delta/2$ for the item search.

\begin{algorithm}[htbp]
\small
\DontPrintSemicolon
\SetAlgoNoEnd
\caption{\Direct algorithm for zero-free constraints}
\label{alg:arityd-direct}
\KwIn{Boolean instance $\mathcal I=(X,\mathcal C,w,k)$ with
nonempty zero-free \Direct relations; rational $\delta\in(0,1)$;
partial assignment $\rho$ fixing at most $k$ variables to $1$.}
\KwOut{A selected set $A$ with $|A|\le k$ such that
$\tau_A$ extends $\rho$.}
\If{$k=0$}{\Return $\varnothing$\;}
Set $d$ to the largest constraint arity, taking $d=1$
if $\mathcal C=\varnothing$\;
Set $H^1\gets\rho^{-1}(1)$, $H^0\gets\rho^{-1}(0)$,
and $Y\gets X\setminus H^0$\;
Set $\theta\gets\delta/(2kc_d)$\;
Compute $M$, $J$, and $G_\theta$\;
Set $d_{\max}\gets\lceil(d-1)/\theta\rceil$\;
Construct the coloring family $\mathcal F$ using
Lemma~\ref{arityd-direct-lem-universal-isolation}\;
Set $\mathcal D\gets\varnothing$\;
\ForEach{coloring $\chi\in\mathcal F$}{
 Apply the steps of
 Algorithm~\ref{alg:arityd-direct-color-coding}
 from the construction of $\mathcal K$ through the append step,
 using $\chi$ as the current coloring\;
}
$\mathcal L\gets\varnothing$\;
\ForEach{$(\mathcal K,P,b)\in\mathcal D$}{
 Construct $\mathcal J_{\rm col}$ from $(\mathcal K,P,b)$
 as in Section~\ref{arityd-direct-sec-contraction}\;
 Run Algorithm~\ref{alg:arityd-direct-items} on
 $(\mathcal J_{\rm col},\delta/2)$, obtaining $\sigma$\;
 $S\gets P\cup\bigcup_{K\in\supp(\sigma)}K$\;
 $\mathcal L\gets\mathcal L\cup\{S\}$\;
}
Choose $A\in\mathcal L$ with largest exact $F(A)$\;
\Return $A$\;
\end{algorithm}

\begin{theorem}
\label{arityd-direct-thm-zero-free}
Let $\mathcal I=(X,\mathcal C,w,k)$ be an instance of arity
at most $d\ge1$ such that every constraint has a nonempty
zero-free \Direct relation.
For every rational $\delta\in(0,1)$ and partial assignment
$\rho$ with $|\rho^{-1}(1)|\le k$,
Algorithm~\ref{alg:arityd-direct} returns a set $A\subseteq X$
with $|A|\le k$ such that $\tau_A$ extends $\rho$ and
\[
 F(A)\ge F(S)-\delta\OPT(\mathcal I)
\]
for every $S\subseteq X$ with $|S|\le k$ such that
$\tau_S$ extends $\rho$.
The running time is
\[
 2^{O(d+k\log(1+dk/\delta))}\ell_{\mathrm{in}}^{O(1)},
\]
where $\ell_{\mathrm{in}}$ is the total bit length of
$\mathcal I$, $\rho$, and $\delta$.
If relation descriptions are supplied as part of the input,
this bound assumes that membership in each relation can be
tested in time polynomial in $\ell_{\mathrm{in}}$, with a
fixed exponent for all inputs.
\end{theorem}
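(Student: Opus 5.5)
The proof assembles the pieces already established: the deterministic coloring family, the masking loss bound, and the item search guarantee. We treat $k=0$ first: the algorithm returns $\varnothing$, which is the unique feasible set. Because $\rho$ fixes at most $0$ variables to $1$, the empty set extends $\rho$.

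For $k\ge1$, we first check feasibility. Every triple $(\mathcal K,P,b)$ recorded from the family $\mathcal F$ satisfies $H^1\subseteq P$ and $|P|\le k$. All retained components lie in $Y$. Theorem~\ref{arityd-direct-thm-item-hus} returns $\sigma$ with $\kappa(\sigma)\le b=k-|P|$, so the set $S(\sigma)=P\cup\bigcup_{K\in\supp(\sigma)}K$ has size at most $k$, contains $H^1$ and lies in $Y$. Hence $\tau_{S(\sigma)}$ extends $\rho$, and so does the chosen $A$.

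For the value bound, we fix any $S$ with $|S|\le k$ such that $\tau_S$ extends $\rho$. By Lemma~\ref{arityd-direct-lem-universal-isolation}, some $\chi\in\mathcal F$ isolates $S$. The argument of Lemma~\ref{arityd-direct-lem-random-coloring}, applied to that coloring, shows it is not skipped and records a triple for which $\sigma_S$ is feasible. Lemma~\ref{arityd-direct-lem-mask-loss} then gives
\[
F_{\rm mask}(S)\ge F(S)-\theta kc_d\OPT(\mathcal I)=F(S)-\tfrac{\delta}{2}\OPT(\mathcal I),
\]
using $\theta=\delta/(2kc_d)$. Lemma~\ref{arityd-direct-lem-owner-hus} verifies the hypothesis of Theorem~\ref{arityd-direct-thm-item-hus}. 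Applying that theorem with accuracy $\delta/2$ gives an output $\sigma$ with
\[
F_{\rm mask}(\sigma)\ge(1-\tfrac{\delta}{2})\OPT(\mathcal J_{\rm col})\ge F_{\rm mask}(S)-\tfrac{\delta}{2}\OPT(\mathcal I).
\]
The last step uses $F_{\rm mask}(S)\le\OPT(\mathcal J_{\rm col})\le\OPT(\mathcal I)$. Since $F(S(\sigma))\ge F_{\rm mask}(\sigma)$, chaining these bounds gives $F(A)\ge F(S(\sigma))\ge F(S)-\delta\OPT(\mathcal I)$.

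For the running time, we set $q=k(d_{\max}+1)$. Since $d_{\max}=O(d^2k/\delta)$ (using $c_d\le ed$), we have $q=(1+dk/\delta)^{O(1)}$. The coloring family then has size $2^{O(k\log(1+dk/\delta))}\log(n+2)$ and is built in the same time times $\ell_{\mathrm{in}}^{O(1)}$. Computing $M$, $J$ and $G_\theta$ is polynomial, as in the \Rooted preprocessing. Each triple produces an item instance with $b_0\le k$. Its constraints are listed by at most $2^d$ membership tests each, so its description has length $2^{O(d)}\ell_{\mathrm{in}}^{O(1)}$. By Theorem~\ref{arityd-direct-thm-item-hus}, the item search on it takes $2^{O(k\log(1+dk/\delta)+d)}\ell_{\mathrm{in}}^{O(1)}$ time.

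The main obstacle is this bookkeeping. The item instance's own input length $\ell'_{\mathrm{in}}$ contains a $2^d$ factor and the bit length of $\delta/2$. Raising it to a polynomial power could yield $2^{O(d)}$ only with a constant in the exponent. That is still of the form $2^{O(d)}$, so the bound holds. The $\log(n+2)$ factor is absorbed into $\ell_{\mathrm{in}}^{O(1)}$. Evaluating the final candidates exactly is polynomial. Multiplying these factors gives the claimed bound.
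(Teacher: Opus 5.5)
Your proposal is correct and follows the paper's proof essentially step by step: you use a coloring from the deterministic family that isolates $S$, then the masking loss of at most $\tfrac{\delta}{2}\OPT(\mathcal I)$, then the item search at accuracy $\delta/2$, then the same running-time accounting (where you are slightly more explicit than the paper about absorbing the $(\ell'_{\mathrm{in}})^{O(1)}$ factor into $2^{O(d)}\ell_{\mathrm{in}}^{O(1)}$). The only step the paper states that you leave implicit is that at least one triple is recorded, so that $\mathcal L\ne\varnothing$ and $A$ is well defined; this follows from your value-bound argument applied to $S=H^1$, which has size at most $k$ and satisfies that $\tau_{H^1}$ extends $\rho$.
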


\begin{proof}
If $k=0$, then $\rho$ fixes no variable to $1$.
The algorithm returns the empty set, whose assignment
extends $\rho$ and is the unique feasible assignment.
Assume $k\ge1$.

Recall that $H^1=\rho^{-1}(1)$ and
$Y=X\setminus\rho^{-1}(0)$.
Since $|H^1|\le k$, the coloring family contains a coloring
that isolates $H^1$.
Applying Lemma~\ref{arityd-direct-lem-mask-loss} to $H^1$
shows that the algorithm records at least one triple.
Thus $\mathcal D\ne\varnothing$.

For each recorded triple $(\mathcal K,P,b)$,
Lemma~\ref{arityd-direct-lem-owner-hus} verifies the relation
assumption of Theorem~\ref{arityd-direct-thm-item-hus}.
The item search therefore returns a feasible assignment
$\sigma$.
The corresponding variable set $S(\sigma)$ satisfies
\[
 H^1\subseteq S(\sigma)\subseteq Y,
 \qquad
 |S(\sigma)|=|P|+\kappa(\sigma)\le |P|+b=k.
\]
Hence $\mathcal L$ is nonempty, and every candidate has
size at most $k$ and an assignment extending $\rho$.

Fix any $S\subseteq X$ with $|S|\le k$ such that
$\tau_S$ extends $\rho$.
Lemma~\ref{arityd-direct-lem-universal-isolation} gives a
coloring that isolates $S$.
By Lemma~\ref{arityd-direct-lem-mask-loss}, the algorithm
records its triple, and $\sigma_S$ is feasible for the
corresponding item instance $\mathcal J_{\rm col}$.
Since $\theta=\delta/(2kc_d)$, the same lemma gives
\[
 F_{\rm mask}(\sigma_S)
 \ge F(S)-\frac{\delta}{2}\OPT(\mathcal I).
\]

Let $\sigma$ be the item assignment returned for this
coloring.
Theorem~\ref{arityd-direct-thm-item-hus}, applied with
accuracy $\delta/2$, gives
\[
 F_{\rm mask}(\sigma)
 \ge (1-\delta/2)\OPT(\mathcal J_{\rm col}).
\]
Since $\sigma_S$ is feasible and
$\OPT(\mathcal J_{\rm col})\le\OPT(\mathcal I)$, it follows that
\[
 F_{\rm mask}(\sigma)
 \ge F_{\rm mask}(\sigma_S)
       -\frac{\delta}{2}\OPT(\mathcal I).
\]
The algorithm chooses $A$ with largest value among all
candidates.
Also, every constraint counted by $F_{\rm mask}(\sigma)$
is satisfied by $\tau_{S(\sigma)}$.
Therefore
\begin{align*}
 F(A)
 &\ge F(S(\sigma))\\
 &\ge F_{\rm mask}(\sigma)\\
 &\ge F_{\rm mask}(\sigma_S)
       -\frac{\delta}{2}\OPT(\mathcal I)\\
 &\ge F(S)-\delta\OPT(\mathcal I).
\end{align*}

\paragraph*{Running time.}
We may take $d$ to be the largest constraint arity,
or $1$ if $\mathcal C=\varnothing$, because this choice
only reduces the stated bound.
Using $\theta=\delta/(2kc_d)$ and $c_d\le ed$, we obtain
\[
 d_{\max}+1=O(1+d^2k/\delta),
 \qquad
 q:=k(d_{\max}+1)=O(k+d^2k^2/\delta).
\]
In particular, $q+1=(1+dk/\delta)^{O(1)}$.
Lemma~\ref{arityd-direct-lem-universal-isolation} therefore
constructs and enumerates the coloring family in time
\[
 (q+1)^{O(k)}\ell_{\mathrm{in}}^{O(1)}
 =2^{O(k\log(1+dk/\delta))}\ell_{\mathrm{in}}^{O(1)}.
\]

Computing $U_C$, $M$, $J$, and $G_\theta$ takes polynomial
time.
For each original constraint, this uses at most $d$
relation membership tests and $O(d^2)$ additions.

For each coloring, construct the item instance by
enumerating the at most $2^d$ assignments to the scope
of each $\widehat C$ and testing
\eqref{arityd-direct-eq-mask}.
The construction takes
$2^{O(d)}\ell_{\mathrm{in}}^{O(1)}$ time and produces an
item instance of at most this bit length.
Its budget satisfies $b\le k$.
Theorem~\ref{arityd-direct-thm-item-hus}, with accuracy
$\delta/2$, therefore solves this instance in time
\[
 2^{O(d+k\log(1+dk/\delta))}\ell_{\mathrm{in}}^{O(1)}.
\]
Forming the corresponding variable set and evaluating
its value in the original instance take polynomial time.
Multiplying by the number of colorings gives the stated
running time.

All scores are sums of input weights and have polynomial
bit length.
For $d\ge2$, the numerator and denominator of
$c_d=d^d/(d-1)^{d-1}$ have $O(d\log(d+1))$ bits;
$c_1=1$.
Since constraint scopes are explicitly listed, the actual
arity is at most the input size.
Together with the input encodings of $k$ and $\delta$,
this gives polynomial bit length for $\theta$,
$d_{\max}$, $q$, and the coloring parameters.
Thus exact arithmetic takes polynomial time.
Membership tests take constant time for a fixed language
and polynomial time under the assumption in the theorem.
Hence the stated bound counts bit operations.
\end{proof}

\paragraph*{Adding zero-valid constraints.}
\begin{theorem}
\label{arityd-direct-thm-main}
Let $\Gamma$ be a fixed Boolean constraint language of arity
at most $d\ge1$.
If every relation in $\Gamma_+$ is \Direct, then
\maxwtcsp{}$(\Gamma)$ admits an \epas{} parameterized by $k$.
For every rational $\varepsilon\in(0,1)$, its running time is
\[
 2^{O(d+k\log(1+dk/\varepsilon))}
 \ell_{\mathrm{in}}^{O(1)},
\]
where $\ell_{\mathrm{in}}$ is the total bit length of the
instance and $\varepsilon$.
\end{theorem}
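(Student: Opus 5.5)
The proof will mirror that of Theorem~\ref{thm:arityd-rooted-epas}, with Algorithm~\ref{alg:arityd-direct} in place of Algorithm~\ref{alg:arityd-rooted}. Fix a rational $\varepsilon\in(0,1)$ and apply Lemma~\ref{lem:arityd-zero-valid-lifting}. This requires an algorithm $\mathcal A$ that receives an instance $\mathcal J$ over $\Gamma_+$ with budget $k$, a partial assignment $\rho$ fixing at most $k$ variables to $1$, and a rational $\delta\in(0,1)$. It must return a feasible $\tau$ extending $\rho$ with $F^{\mathcal J}(\tau)\ge F^{\mathcal J}(\tau')-\delta\OPT(\mathcal J)$ for every feasible $\tau'$ extending $\rho$. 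Define $\mathcal A(\mathcal J,\rho,\delta)$ to run Algorithm~\ref{alg:arityd-direct} and output $\tau_A$ for the returned set $A$.

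Every relation in $\Gamma_+$ is nonempty, zero-free and \Direct, so Theorem~\ref{arityd-direct-thm-zero-free} applies to $\mathcal J$. One point to check is that $\mathcal J$ may be empty of constraints or use only some relations. Another is that the theorem quantifies over selected sets $S$ with $\tau_S$ extending $\rho$, which is the same as feasible assignments extending $\rho$ since a feasible assignment is determined by its support. Also, $\OPT(\mathcal J)$ is measured without $\rho$, as in both statements. Given these checks, the theorem gives exactly the guarantee $\mathcal A$ must satisfy. Its running time is $g(k,d,\delta)\ell_{\mathrm{in}}^{O(1)}$ with $g=2^{O(d+k\log(1+dk/\delta))}$. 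For fixed $\Gamma$, membership tests take constant time, so the theorem's membership hypothesis holds.

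Lemma~\ref{lem:arityd-zero-valid-lifting} then yields a $(1-\varepsilon)$-approximation in time
\[
 2^{O(k\log(1+dk/\varepsilon))}\cdot 2^{O(d+k\log(1+2dk/\varepsilon))}\ell_{\mathrm{in}}^{O(1)}.
\]
We absorb the factor $2$ via $1+2dk/\varepsilon\le 2(1+dk/\varepsilon)$, so $\log$ of it is $O(1+\log(1+dk/\varepsilon))$. The extra $O(k)$ is dominated because $\log(1+dk/\varepsilon)\ge\log 2$ when $dk\ge1$ and $\varepsilon<1$. The product is therefore $2^{O(d+k\log(1+dk/\varepsilon))}\ell_{\mathrm{in}}^{O(1)}$. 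The case $k=0$ is handled inside the lemma by returning the all-zero assignment.

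Since $\Gamma$ is fixed, $d$ is a constant and the bound has the form $f(k,\varepsilon)\ell_{\mathrm{in}}^{O(1)}$, which is an \epas{} parameterized by $k$. There is no real obstacle. The only care needed is the bookkeeping for the additive $d$ term in the exponent, and confirming that the instance $\mathcal I_+$ built in the lemma has only \Direct zero-free relations, which holds because it keeps exactly the constraints over $\Gamma_+$.
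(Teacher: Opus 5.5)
Your proposal is correct and matches the paper's proof. You instantiate the oracle $\mathcal A$ of Lemma~\ref{lem:arityd-zero-valid-lifting} with Algorithm~\ref{alg:arityd-direct}, invoke Theorem~\ref{arityd-direct-thm-zero-free} at $\delta=\varepsilon/2$ for the guarantee and running time, and absorb the factor $2$ via $1+2dk/\varepsilon\le 2(1+dk/\varepsilon)$; the only cosmetic difference is that the paper handles $k=0$ explicitly before invoking the lemma, while you rely on the lemma's own $k=0$ case, which is equally valid.
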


\begin{proof}
If $k=0$, the all-zero assignment is the unique feasible
assignment and is therefore optimal.
Assume $k\ge1$.

For an instance $\mathcal J$ over $\Gamma_+$, define
$\mathcal A(\mathcal J,\rho,\delta)$ by running
Algorithm~\ref{alg:arityd-direct} and returning the
assignment $\tau_A$ corresponding to its output set $A$.
By Theorem~\ref{arityd-direct-thm-zero-free}, this assignment
is feasible, extends $\rho$, and satisfies
\[
 F^{\mathcal J}(\tau_A)
 \ge F^{\mathcal J}(\tau')
       -\delta\OPT(\mathcal J)
\]
for every feasible assignment $\tau'$ extending $\rho$.
Here $\OPT(\mathcal J)$ is taken over all feasible
assignments, before imposing $\rho$.
The running time of $\mathcal A$ is
\[
 2^{O(d+k\log(1+dk/\delta))}
 \ell_{\mathrm{in}}^{O(1)}.
\]

Thus $\mathcal A$ satisfies the assumptions of
Lemma~\ref{lem:arityd-zero-valid-lifting}.
That lemma calls $\mathcal A$ with
$\delta=\varepsilon/2$ and returns a
$(1-\varepsilon)$-approximate solution to the original
instance.
Its running time is
\begin{align*}
 &2^{O(k\log(1+dk/\varepsilon))}
  2^{O(d+k\log(1+2dk/\varepsilon))}
  \ell_{\mathrm{in}}^{O(1)}\\
 &\qquad=
  2^{O(d+k\log(1+dk/\varepsilon))}
  \ell_{\mathrm{in}}^{O(1)}.
\end{align*}
Since $\Gamma$ is fixed, we can choose $d$ as a constant.
This proves that the algorithm is an \epas{} parameterized
by $k$.

The parameter values also agree with
Section~\ref{sec:overview}.
With $\delta=\varepsilon/2$, the zero-free algorithm uses
$\theta=\varepsilon/(4kc_d)$ and calls the item search with
accuracy $\varepsilon/4$.
For initial item budget $b_0$, the item search therefore uses
\[
 \eta_1=\frac{\varepsilon}{4(2b_0+1)},
 \qquad
 \eta_2=\frac{\eta_1}{2}.
\]
\end{proof}

\section{Hardness when neither repair rule holds throughout the language}
\label{arityd-dist-hard-subsec-necessity}

Fix a finite Boolean constraint language $\Gamma$
of maximum arity $d$.
Assume that $\Gamma_+$ contains a relation that is
not \Direct{} and a relation that is not \Rooted{}.
We prove the hardness half of the dichotomy using
only its nonempty zero-free relations. We use the
unit-witness and control coordinates and the rules \Direct and \Rooted
defined above. Selecting a variable means setting it to one, and the
score counts satisfied constraints. Throughout this section, $k$ is
the size of the sought clique and $k'$ is the budget of the constructed
instance.

From a \clique{} instance $(G,k)$, where
$G=(V_G,E_G)$ is a finite simple undirected graph,
we construct an unweighted multiplicity-one instance
$\mathcal I$ with budget $k'$ and integer target $t$,
both bounded by functions of $k$ and $d$, such that
\begin{equation}
 \OPT(\mathcal I)\le t,
 \qquad
 \OPT(\mathcal I)=t \iff G\text{ has a }k\text{-clique}.
 \label{arityd-dist-hard-eq-target}
\end{equation}
Recall that $e_i$ is the tuple whose only $1$ is in coordinate $i$.
A nonempty zero-free relation $R$ is \emph{accessible} if every tuple
in $R$ has a $1$ in some coordinate $i$ with $e_i\in R$.
It is \emph{inaccessible} if it contains a tuple whose $1$s all lie
in control coordinates.

We first consider a language containing an inaccessible relation.
We then consider the remaining case, where every nonempty zero-free
relation is accessible. In this second case, the hardness hypothesis
gives two relations: $P$, which fails \Direct, and $Q$, which fails
\Rooted. The relations $P$ and $Q$ may coincide.

\begin{lemma}[A gap of one rules out an \epas]
\label{arityd-dist-hard-lem-exact-cap}
Suppose a parameterized maximization problem has nonnegative integer
solution values and admits an \fpt{} reduction from \clique{}.
For each input $(G,k)$, the reduction produces an instance
$\mathcal I$ with parameter $k'=k'(k)$ and an integer target
$t=t(k)\ge1$ such that
\[
 \OPT(\mathcal I)\le t,
 \qquad
 \OPT(\mathcal I)=t
 \iff G\text{ contains a }k\text{-clique}.
\]
Then the problem admits no \epas{} unless
$\fpt{}=\wone{}$.
\end{lemma}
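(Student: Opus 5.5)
The plan is the standard gap argument. Suppose, for contradiction, that an \epas{} exists, running in time $f(k',\varepsilon)\ell_{\mathrm{in}}^{O(1)}$ for a computable $f$. Given a \clique{} instance $(G,k)$, we first apply the \fpt{} reduction. It produces $\mathcal I$ with parameter $k'=k'(k)$ and target $t=t(k)\ge1$ in time $g(k)|G|^{O(1)}$, so $\ell_{\mathrm{in}}(\mathcal I)\le g(k)|G|^{O(1)}$. We then run the \epas{} on $\mathcal I$ with the rational accuracy
\[
 \varepsilon:=\frac1{2t}\in(0,1),
\]
obtaining a feasible solution $\tau$. Finally, we compute its value $F(\tau)$ in polynomial time and answer \emph{yes} exactly when $F(\tau)=t$, equivalently when $F(\tau)\ge t$.

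Correctness follows from the integrality of solution values.

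\begin{itemize}
\item \emph{A $k$-clique exists.} Then $\OPT(\mathcal I)=t$, and the guarantee gives $F(\tau)\ge(1-\tfrac1{2t})t=t-\tfrac12$. Since $F(\tau)$ is an integer, $F(\tau)\ge t$.
\item \emph{No $k$-clique exists.} Then $\OPT(\mathcal I)\le t$ and $\OPT(\mathcal I)\ne t$, so $\OPT(\mathcal I)\le t-1$ by integrality. Hence $F(\tau)\le t-1$, and we answer \emph{no}.
\end{itemize}

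Note that the \epas{} only guarantees a value at least $(1-\varepsilon)\OPT$, but the no case is safe regardless, since no feasible solution exceeds $\OPT$.

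For the running time, $\varepsilon$ depends only on $k$, so the \epas{} call costs $f(k'(k),1/(2t(k)))\cdot(g(k)|G|^{O(1)})^{O(1)}$. The polynomial exponent is independent of the parameters. The whole procedure is therefore an \fpt{} algorithm for \clique{}.

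The main care point is ensuring that all functions involved are computable. This holds because $t$ and $k'$ are computable functions of $k$, as produced by the reduction, and $f$ is computable by the definition of an \epas. Since \clique{} is \wone-complete, an \fpt{} algorithm for it implies $\fpt=\wone$, which is the required contradiction.
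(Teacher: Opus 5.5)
Your proposal is correct and follows essentially the same argument as the paper. Both run the \epas{} with $\varepsilon=1/(2t)$, use integrality of solution values to separate the value $t$ from $t-1$, and observe that $k'$ and $\varepsilon$ depend only on $k$, which yields an \fpt{} algorithm for \clique{}.
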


\begin{proof}
Suppose an \epas{} exists. Run it on $\mathcal I$ with
$\varepsilon=1/(2t)$.
If $G$ contains a $k$-clique, the returned feasible solution has
value at least $(1-\varepsilon)t=t-1/2$. Its value is an integer
and is at most $t$, so it equals $t$.
If $G$ contains no $k$-clique, then $\OPT(\mathcal I)\le t-1$,
so the returned value is less than $t$.

Comparing the returned value with $t$ therefore decides \clique{}.
The reduction runs in \fpt{} time, and both $k'$ and $\varepsilon$
depend only on $k$, so this gives an \fpt{} algorithm for \clique{}.
Since \clique{} is $\wone{}$-complete~\cite{DBLP:journals/tcs/DowneyF95},
we obtain $\fpt{}=\wone{}$.
\end{proof}

Each construction below requires $k\ge k_{\min}$ for a fixed
positive integer $k_{\min}$ depending only on $\Gamma$.
If $k<k_{\min}$, add $k_{\min}-k$ new vertices, each adjacent
to every original vertex and every other new vertex, and replace
$k$ by $k_{\min}$.
A $k$-clique in the original graph extends to a $k_{\min}$-clique
by including all new vertices. Conversely, a $k_{\min}$-clique
in the enlarged graph contains at least $k$ original vertices,
and any $k$ of these form a clique in the original graph.
Since $k_{\min}$ is fixed, this step adds only a constant number
of vertices. We then use $(G,k)$ to denote the resulting instance,
so $k\ge k_{\min}$.
This lets us assume that $k$ is larger than a fixed constant
in either reduction.

Every relation used in the constructions belongs to $\Gamma$.
A fresh variable is a new variable, distinct from every variable
introduced earlier. When pairing two copies of a constraint family,
we state explicitly which variables they share.
Within each constraint, the variables are pairwise distinct,
as required in Section~\ref{sec:preliminaries}.

\subsection{An inaccessible relation}

\begin{lemma}[Reduction using an inaccessible relation]
\label{arityd-dist-hard-lem-inaccessible}
Let $\Gamma$ be a fixed finite Boolean constraint language containing
a nonempty zero-free inaccessible relation. Then there is an \fpt{}
reduction from \clique{} to unweighted instances of
\maxwtcsp{}$(\Gamma)$ with multiplicity one.
For each input $(G,k)$, the constructed instance $\mathcal I$
satisfies \eqref{arityd-dist-hard-eq-target}.
\end{lemma}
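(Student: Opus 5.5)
Following the overview, we fix an inaccessible relation $R\in\Gamma_+$ of arity $r\le d$ and choose a tuple $\alpha\in R$ whose $1$s all lie in control coordinates, minimal with respect to inclusion of supports among such tuples. Put $A:=\supp(\alpha)$ and $q:=|A|$. Since $R$ is zero-free, $q\ge1$, and since every $i\in A$ is a control, $e_i\notin R$, so $q\ge2$. Minimality says that no tuple of $R$ has support a proper nonempty subset of $A$ consisting of controls; because every subset of $A$ consists of controls and $\mathbf0\notin R$, every tuple with support strictly inside $A$ is rejected. Thus, with all coordinates outside $A$ at $0$, a tuple is accepted exactly when all of $A$ is $1$.

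The construction uses one vertex variable $x_v$ per $v\in V_G$ and $q-2$ shared variables $s_1,\dots,s_{q-2}$. For each edge $uv$ there is one constraint with relation $R$: the first two coordinates of $A$ receive $x_u,x_v$, the remaining $q-2$ coordinates of $A$ receive the shared variables, and each coordinate outside $A$ receives a fresh private variable. Set $k':=k+q-2$ and $t:=\binom k2$. The scope set determines the edge, so the instance has multiplicity one, variables are distinct within constraints, and the reduction is polynomial with $k',t$ depending only on $k,d$. For completeness, a $k$-clique together with all shared variables uses exactly $k'$ ones. With no fresh variable selected, every edge constraint reads $\alpha$ on its clique edges, giving score $t$.

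For the upper bound, fix a feasible selection. Each constraint with a selected fresh variable is charged to one such variable; these are private, so there are at most $r_f$ such constraints, where $r_f$ is the number of selected fresh variables. Any other satisfied constraint has all non-$A$ coordinates at $0$, so by minimality all of $A$ is $1$. This forces every shared variable to be selected and both endpoints selected. In the case where all shared variables are selected, with $s$ selected vertices we have $s+r_f\le k$, and the score is at most $\binom s2+r_f\le\binom s2+(k-s)$. For $s\le k-1$ this is at most $\binom{k-1}2+1<\binom k2$ once $k\ge3$. So score $t$ forces $s=k$ and $r_f=0$, and then all $\binom k2$ edges among the selected vertices must be satisfied, so they form a clique. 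In the case where some shared variable is unselected, only the fresh charges count, so the score is at most $k'=k+q-2\le k+d-2$, which is below $\binom k2$ for $k\ge k_{\min}$ with $k_{\min}$ a constant depending on $d$. Hence $\OPT\le t$ always, with equality iff a $k$-clique exists, using the padding step to assume $k\ge k_{\min}$.

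The main obstacle is the degenerate case $q=2$ together with the charging argument. When $q=2$ there are no shared variables and the argument simplifies. The point to verify carefully is that a constraint with no selected fresh variable really must have all of $A$ equal to $1$. This uses both that $\alpha$ is support-minimal among all-control tuples and that subsets of $A$ remain all-control, so intermediate tuples can never be accepted.
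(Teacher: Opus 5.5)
Your proposal is correct and follows the paper's proof. It uses the same support-minimal all-control tuple, $q-2$ shared anchors, fresh private variables outside $A$, $k'=k+q-2$, $t=\binom{k}{2}$, and the same two-case charging bound on anchors.

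One small slip in the arithmetic: $\binom{s}{2}+(k-s)$ is convex in $s$, so its maximum over $s\le k-1$ is $\max\{k,\binom{k-1}{2}+1\}$, which equals $t$ at $k=3$, $s=0$. The strict inequality therefore needs $k\ge4$, not $k\ge3$ (the paper uses $(k-s)(k+s-3)/2>0$ under $k\ge q+3$). This is harmless, because your choice of $k_{\min}$ with $k+d-2<\binom{k}{2}$ already forces $k\ge4$.
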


\begin{proof}
Choose a nonempty zero-free inaccessible relation $R\in\Gamma$.
By inaccessibility, there is a tuple $\beta\in R$ whose support
consists only of control coordinates. Among the tuples $\alpha\in R$
with $\supp(\alpha)\subseteq\supp(\beta)$, choose one with the
fewest $1$s. Put $A:=\supp(\alpha)$ and $q:=|A|$.

No tuple in $R$ has support strictly contained in $A$: such a tuple
would also have support contained in $\supp(\beta)$ and would have
fewer $1$s than $\alpha$.
Since $R$ is zero-free, $A$ is nonempty.
If $A=\{i\}$, then $\alpha=e_i\in R$, contradicting the fact
that $i$ is a control coordinate. Hence $q\ge2$.

\paragraph*{Construction.}
Assume $k\ge q+3$, using the preprocessing above.
Choose two distinct coordinates $a_1,a_2\in A$
for the edge endpoints.
Introduce $q-2$ distinct shared anchor variables
for the other coordinates of $A$.
For each vertex $v\in V_G$, introduce a variable $x_v$.
For each edge $uv$, choose an order $(u,v)$ and add
one unit-weight $R$-constraint with $x_u$ in coordinate
$a_1$ and $x_v$ in coordinate $a_2$.
Put the anchors in their coordinates of $A$ and a
fresh private variable in every coordinate outside $A$.
Different edges have disjoint private variables.
All vertex, anchor, and private variables are distinct.
Since $G$ is simple, different edges give different scope sets.

Set
\begin{equation}
 k':=k+q-2,
 \qquad
 t:=\binom{k}{2}.
 \label{arityd-dist-hard-eq-inaccessible-parameters}
\end{equation}
Let $X$ be the set of all vertex, anchor, and private
variables introduced above, and let $\mathcal C$
be the collection of the constructed $R$-constraints.
The output instance is $\mathcal I=(X,\mathcal C,w,k')$,
where $w(C)=1$ for every $C\in\mathcal C$.
Its decision threshold is $t$.

Given a $k$-clique, set its vertex variables and
every anchor to $1$, and set every other variable
to $0$.
This assignment has weight $k'$.
Each clique edge receives the tuple $\alpha\in R$.
On any other edge, the tuple has support strictly
contained in $A$, so its constraint is unsatisfied.
Thus the score is exactly $t$.

\paragraph*{Upper bound.}
Consider an assignment within the budget. Suppose first
that it selects every anchor, and let $s$ and $r$ be the numbers of
selected vertex and private variables. Then $s+r\le k$. At most
$\binom{s}{2}$ constraints have both endpoints selected. Every other
satisfied constraint needs a selected private variable: without one,
the tuple assigned to that constraint would have support properly
contained in $A$. Private variables are disjoint across edge constraints,
so the score is at most
\begin{equation}
 \binom{s}{2}+r\le\binom{s}{2}+k-s.
 \label{arityd-dist-hard-eq-inaccessible-all-anchors}
\end{equation}
For $s<k$, this is less than $t$, because
\begin{equation}
 \binom{k}{2}-\left(\binom{s}{2}+k-s\right)
 =\frac{(k-s)(k+s-3)}2>0.
 \label{arityd-dist-hard-eq-inaccessible-deficit}
\end{equation}
If $s=k$, no private variable is selected, and the score is the
number of edges among these $k$ vertices. It equals $t$ exactly when
they form a clique.

If an anchor is set to $0$, a constraint with all
its private variables set to $0$ receives a tuple
whose support is strictly contained in $A$.
Such a constraint is unsatisfied.
Therefore every satisfied constraint requires a
selected private variable.
Each private variable occurs in only one constraint,
so the score is at most $k'$.
Since $k\ge q+3\ge5$,
\[
 k'=k+q-2\le2k-5<2k\le\binom{k}{2}=t.
\]
Thus every feasible score is at most $t$, with
equality exactly for a $k$-clique.

There is one constraint per edge and at most $d-q$
private variables per edge.
Thus the instance has $|E_G|$ constraints and at most
\[
 |V_G|+(q-2)+(d-q)|E_G|
\]
variables.
Since $q\le d$,
\[
 k'\le k+d-2,\qquad t\le\frac{k^2}{2}.
\]
These bounds show that the construction is an
\fpt{} reduction.
\end{proof}

\subsection{A fixed constraint family with a switch}

We now consider the remaining case, where every nonempty zero-free
relation in $\Gamma$ is accessible. We first examine a relation
that fails \Rooted to identify a requirement we can use in the
reduction. The next lemma shows that we can leave at least two
coordinates free, prescribe a nonempty set of control coordinates
as $1$, and prescribe all other coordinates as $0$.
Among tuples with these prescribed values, a tuple belongs to
the relation exactly when the entry in every free coordinate is $1$.

Here, the prescribed values specify which tuples of the relation
we examine. In the reduction, every coordinate contains a Boolean
variable, and every constraint uses a relation from $\Gamma$.
We add constraints so that reaching the target score requires
selecting the variables corresponding to coordinates prescribed
as $1$. The proof also considers assignments that select variables
corresponding to coordinates prescribed as $0$, and bounds the
resulting score while accounting for their budget cost.

\begin{lemma}[A condition equivalent to \Rooted]
\label{arityd-dist-hard-lem-pinned-and}
Let $R$ be a nonempty zero-free relation of arity $r$.
Then $R$ is \Rooted if and only if it has no pinned conjunction.
If $R$ is accessible, this is equivalent to the following condition:
for every set $T\subseteq[r]$ of control coordinates and every
$A\subseteq[r]\setminus T$, if $R$ contains a tuple with support
$A\cup T$ but no tuple with support $A'\cup T$ for any
$A'\subsetneq A$, then $|A|=1$.

In particular, if $R$ is accessible but not \Rooted{}, there are
disjoint sets $A,T\subseteq[r]$, with $|A|\ge2$
and $T$ a nonempty set of control coordinates,
such that the following holds:
when the entries in $T$ are $1$ and those outside
$A\cup T$ are $0$, a tuple belongs to $R$ if and
only if every entry in $A$ is $1$.
\end{lemma}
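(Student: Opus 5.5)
The plan is to treat the lemma as a rephrasing of the definitions. There are two equivalences and a corollary. First I rewrite a pinned conjunction $(A,T)$ in support language. Fix the coordinates of $T$ at $1$ and those outside $A\cup T$ at $0$. The acceptance pattern is then ``every entry of $A$ is $1$'' exactly when $R$ contains the tuple with support $A\cup T$ and no tuple with support $A'\cup T$ for $A'\subsetneq A$. So a pinned conjunction is precisely a pair $(T,A)$ of the kind in the displayed condition, with $|A|\ge2$. The displayed condition can also be violated with $A=\varnothing$. That happens only when $R$ contains a tuple whose support $T$ consists solely of controls, which is impossible for accessible $R$ (Definition~\ref{def:arityd-accessibility}). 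Hence, for accessible $R$, ``no pinned conjunction'' and the displayed condition say the same thing. The ``in particular'' clause then follows from the main equivalence, except for the claim that $T$ can be taken nonempty; I treat this at the end.

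For ``\Rooted $\Rightarrow$ no pinned conjunction'', suppose $(A,T)$ is a pinned conjunction and let $\alpha$ be the tuple with support $A\cup T$. The \Rooted rule gives a unit witness $u$ with $\alpha_u=1$. Since $T$ consists of controls, $u\in A$. Taking $B=T$ in Definition~\ref{def:arityd-two-rules}, $R$ contains a tuple with support $\{u\}\cup T$. Because $|A|\ge2$, this contradicts the minimality of $A$.

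For the converse, assume $R$ has no pinned conjunction. Fix $\alpha\in R$ and write $B^*$ for the set of control coordinates in $\supp(\alpha)$. First consider the case where $\supp(\alpha)$ has no unit witness. Choose an inclusion-minimal support $A_0\subseteq\supp(\alpha)$ of a tuple in $R$. Zero-freeness makes $A_0$ nonempty, and $A_0$ is not a singleton because its coordinates are controls. So $(A_0,\varnothing)$ is a pinned conjunction, a contradiction. Otherwise, for each $B\subseteq B^*$, apply the absence of a pinned conjunction with $T=B$ and $A=\supp(\alpha)\setminus B$, passing to an inclusion-minimal $A_0\subseteq A$. The case $A_0=\varnothing$ would give a pinned conjunction inside $B$ exactly as above. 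So for every such $B$ there is a coordinate $w_B\in\supp(\alpha)\setminus B$ with $\{w_B\}\cup B$ a support in $R$.

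The main obstacle is to turn these per-$B$ witnesses into a single unit witness $u$ that works for all $B$ simultaneously, as \Rooted demands. I would argue by a minimal counterexample: take $\alpha\in R$ violating \Rooted with $|\supp(\alpha)|$ as small as possible. For each unit witness $u\in\supp(\alpha)$, let $B_u\subseteq B^*$ be a failing set, so that $\{u\}\cup B_u\notin R$ as a support. Then apply the previous paragraph with $T$ ranging over sets of the form $B_u\cap B_{u'}$ and $B_u\cup B_{u'}$, and with $A$ chosen inside $\supp(\alpha)$. The aim is to exhibit a set $A$ of size at least $2$ containing two such witnesses whose singletons fail together with $T$, and hence a pinned conjunction. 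Whenever the witness obtained is a control, we get an all-control support and fall into the first case. Whenever it is a unit witness on a strictly smaller support, we invoke the minimality of $|\supp(\alpha)|$. I expect the bookkeeping of this exchange between different $u$ and $B_u$ to be the delicate step.

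Finally, for the ``in particular'' clause with accessible $R$, a pinned conjunction with $T=\varnothing$ would require a minimal support $A$ with $|A|\ge2$ and no proper subset supported. Every coordinate of such an $A$ must be a control, since a unit witness $u\in A$ would give the support $\{u\}\subsetneq A$. This contradicts accessibility, so $T\neq\varnothing$.
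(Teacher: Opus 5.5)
Most of your proposal is correct. This covers the translation of a pinned conjunction into support language and the remark that the displayed condition can additionally fail only with $A=\varnothing$, which is impossible for accessible $R$. It also covers the direction from \Rooted{} to the absence of pinned conjunctions, the derivation of accessibility from the absence of pinned conjunctions, and the closing argument that $T\neq\varnothing$.

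\textbf{The gap.} It is in the converse, at exactly the step you flag as the main obstacle: you never produce one unit witness $u$ that works for every $B\subseteq B^*$. The minimal-counterexample plan with $B_u\cap B_{u'}$ and $B_u\cup B_{u'}$ is a hope rather than an argument, and it is not clear which pinned conjunction it would yield. The difficulty comes from your choice of search space. By minimizing inside $\supp(\alpha)\setminus B$, you let every unit witness of $\alpha$ compete, so different $B$ can produce different witnesses $w_B$.

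\textbf{The missing idea.} Fix $u$ first, then shrink the search space so that $u$ is the only unit witness available. This is what the paper does.
\begin{enumerate}
\item Apply your per-$B$ argument once, with $B=B^*$. Then $\supp(\alpha)\setminus B^*$ consists of the unit witnesses of $\supp(\alpha)$. You obtain a unit witness $u$ such that $R$ contains a tuple with support $\{u\}\cup B^*$.
\item For an arbitrary $B\subseteq B^*$, take $T=B$ and choose an inclusion-minimal $A'\subseteq\{u\}\cup(B^*\setminus B)$ such that $A'\cup B$ is a support. Such an $A'$ exists because $\{u\}\cup B^*$ is a support.
\item As in your argument, $A'$ is nonempty, by zero-freeness and accessibility. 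It is a singleton, since otherwise $(A',B)$ is a pinned conjunction.
\item If $A'$ were a control singleton $\{c\}$, then $\{c\}\cup B$ would be an all-control support. This contradicts the accessibility you already derived, so $A'=\{u\}$.
\end{enumerate}
Equivalently, apply your per-$B$ statement to the tuple with support $\{u\}\cup B^*$ instead of to $\alpha$. Its only unit witness is $u$, so every witness $w_B$ is forced to equal $u$. With this change no induction on $|\supp(\alpha)|$ and no exchange argument is needed.
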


\begin{proof}
Assume first that $R$ is accessible.
If $R$ is \Rooted, then for any sets $A,T$ with the stated
support property, its definition gives a unit-witness coordinate
$u\in A$ and a tuple in $R$ with support $\{u\}\cup T$,
so the condition on proper subsets forces $A=\{u\}$.

Conversely, suppose the stated condition holds.
Take any $\alpha\in R$. Let $W$ be the set of unit-witness
coordinates in $\supp(\alpha)$, and let $N$ be the set of
control coordinates in $\supp(\alpha)$.
Thus $\supp(\alpha)=W\cup N$.

Choose a set $A\subseteq W$ of minimum size such that $R$
contains a tuple with support $A\cup N$. Such a set exists
because $\alpha\in R$. No proper subset of $A$ has this property,
so the assumed condition gives $A=\{u\}$ for some $u\in W$.

Let $B\subseteq N$ be arbitrary. Choose a set
$A'\subseteq\{u\}\cup(N\setminus B)$ of minimum size such that
$R$ contains a tuple with support $A'\cup B$.
Such a set exists because $R$ contains a tuple with support
$\{u\}\cup N$. Again, no proper subset of $A'$ has this property,
so the assumed condition gives $|A'|=1$.
If $u\notin A'$, every coordinate in $A'\cup B$ would be a
control coordinate, contrary to accessibility.
Hence $A'=\{u\}$.

Thus, for every $B\subseteq N$, the relation $R$ contains a tuple
with support $\{u\}\cup B$. The same unit-witness coordinate $u$
therefore satisfies the requirement of \Rooted for $\alpha$.
Since $\alpha$ was arbitrary, $R$ is \Rooted.

Finally, suppose $R$ is not \Rooted. The condition in the lemma
then fails, so there are a set $T\subseteq[r]$ of control coordinates
and a set $A\subseteq[r]\setminus T$ with $|A|\ne1$ such that
$R$ contains a tuple with support $A\cup T$, but no tuple with
support $A'\cup T$ for any $A'\subsetneq A$.

By accessibility, the tuple with support $A\cup T$ has a $1$
in a unit-witness coordinate $u$. Since $T$ contains only
control coordinates, $u\in A$. Thus $A$ is nonempty, and
$|A|\ne1$ gives $|A|\ge2$.
If $T=\varnothing$, then $e_u\in R$ and
$\supp(e_u)=\{u\}\subsetneq A$, contradicting the condition
on proper subsets. Therefore $T$ is nonempty.
Under these prescribed entries, every possible
tuple has support $A'\cup T$ for some $A'\subseteq A$.
The support condition therefore gives the stated
equivalence.
A pinned conjunction violates the condition on proper subsets.
The preceding argument therefore proves that an accessible relation
is \Rooted{} if and only if it has no pinned conjunction.

If $R$ is inaccessible, choose a satisfying tuple whose support
$A$ consists only of controls and is minimal under inclusion.
Since $R$ is zero-free and every coordinate in $A$ is a control,
$|A|\ge2$. No tuple with support properly contained in $A$
belongs to $R$. Hence $A$, together with $T=\varnothing$, is a
pinned conjunction. Such an $R$ is also not \Rooted.
\end{proof}

We now use this requirement to build a fixed family of constraints.
The following definition specifies its variables and the score
bounds needed for the reduction.

\begin{definition}[Switch gadget]
\label{arityd-dist-hard-def-activator}
A \emph{switch gadget} $H(x;Z,Y)$ is a family of unit-weight
constraints, fixed independently of $(G,k)$.
Its Boolean variables are partitioned into three disjoint groups:
\begin{itemize}
 \item $\{x\}$, where $x$ is called the \emph{switch};
 \item $Z$, whose members are called \emph{private variables};
 \item a nonempty set $Y$, whose members are called \emph{pins}.
\end{itemize}
The family has a specified set $B\subseteq Z$ with
$h:=|B|\ge3$ and contains exactly $h+1$ constraints.

For $b\in\{0,1\}$ and $S\subseteq Z$, let $f_b(S)$ be the number
of constraints satisfied by the assignment that sets
$x=b$, every variable in $Y\cup S$ to $1$, and every variable
in $Z\setminus S$ to $0$.
We require
\begin{equation}
\begin{aligned}
 f_0(S)&\le |S| &&\text{for every }S\subseteq Z,\\
 f_1(S)&\le |S| &&\text{for every }S\subseteq Z\text{ with }S\ne B,\\
 f_1(B)&=h+1.
\end{aligned}
\label{arityd-dist-hard-eq-switch-score}
\end{equation}

The constraints also satisfy the following conditions:
\begin{enumerate}
 \item Every relation used is nonempty, zero-free, and accessible.
       The switch and pins occur only in control coordinates.
 \item Every constraint contains the switch or at least one pin.
 \item Each constraint uses pairwise distinct variables, and
       no two constraints have the same underlying scope set.
\end{enumerate}

We say that $\Gamma$ implements $H$ if every constraint in $H$
uses a relation from $\Gamma$.
\end{definition}

With all pins in $Y$ set to $1$, select every variable in $B$
and leave every variable in $Z\setminus B$ at $0$.
This selects $h$ private variables. Turning the switch $x$ on
then satisfies all $h+1$ constraints. Turning it off satisfies
at most $h$ constraints.

\begin{center}
\begin{tabular}{@{}lcc@{}}
\toprule
Selected private set & Switch off ($x=0$) & Switch on ($x=1$) \\
\midrule
$S\ne B$ & At most $|S|$ points & At most $|S|$ points \\
$S=B$, where $|B|=h$ & At most $h$ points & Exactly $h+1$ points \\
\bottomrule
\end{tabular}
\end{center}

The bounds in the table assume that every pin in
$Y$ is set to $1$.
Under this assumption, all $h+1$ constraints are
satisfied with at most $h$ selected private variables
if and only if $S=B$ and $x=1$.
The switch and pins also count toward the
Hamming-weight budget.

If no private variable is selected, no constraint is satisfied,
regardless of the values of the switch and pins.

\subsection{Construction}
\label{arityd-dist-hard-sec-construction}

Choose accessible relations $P,Q\in\Gamma_+$ such that $P$ is not
\Direct and $Q$ is not \Rooted. They may be the same relation.
We first construct the fixed family $H(x;Z,Y)$ and verify
that it is a switch gadget. We then use two copies of $H$
for each edge of the input graph to construct the CSP instance.
The lemma below proves correctness of the reduction.

\paragraph*{The switch gadget.}
We first build one $P$-constraint with the switch in a control
coordinate. We then add $Q$-constraints, each using a consecutive
group of private variables in cyclic order.

\emph{1. The constraint containing the switch.}
Since $P$ is not \Direct, choose a tuple $\alpha\in P$
and a control coordinate $c$ with $\alpha_c=1$ such that
\begin{equation}
 \alpha-e_c\notin P.
 \label{arityd-dist-hard-eq-nondirect-witness}
\end{equation}
Thus changing only entry $c$ from $1$ to $0$ changes a
satisfying tuple into a tuple outside $P$.
We will place the switch in coordinate $c$.

Define
\[
 A_P:=\{j\in\supp(\alpha):e_j\in P\},
 \qquad
 T_P:=\{j\in\supp(\alpha)\setminus\{c\}:e_j\notin P\}.
\]
Thus $A_P$ consists of the unit-witness coordinates $j$
with $\alpha_j=1$, and $T_P$ consists of the control
coordinates $j\ne c$ with $\alpha_j=1$.
These sets and $\{c\}$ are pairwise disjoint, and
\[
 \supp(\alpha)=A_P\cup T_P\cup\{c\}.
\]
Since $P$ is accessible and $\alpha\in P$, there is a
coordinate $j$ with $\alpha_j=1$ and $e_j\in P$.
Therefore $A_P\ne\varnothing$.
Put $\ell:=|A_P|$ and write
$A_P=\{u_1<\cdots<u_\ell\}$.

Introduce a switch variable $x$ and distinct private variables
$z_1,\ldots,z_\ell$. Add one unit-weight $P$-constraint
using the following variables:

\begin{center}
\begin{tabular}{@{}ll@{}}
\toprule
Coordinate & Variable \\
\midrule
$u_j$, for $j\in[\ell]$ & $z_j$ \\
$c$ & $x$ \\
$j\in T_P$ & A new pin $y_j^P$ \\
Outside $\supp(\alpha)$ & A new private variable \\
\bottomrule
\end{tabular}
\end{center}

We call this the \emph{switch constraint}.

Call the private variables in the last row \emph{fillers}.
The pins and fillers are pairwise distinct and distinct
from $x,z_1,\ldots,z_\ell$.

If $z_1,\ldots,z_\ell$ and every pin $y_j^P$ are set
to $1$, and every filler in the switch constraint is set
to $0$, then its tuple of variable values is $\alpha$
when $x=1$ and $\alpha-e_c$ when $x=0$.
Under these conditions, the switch constraint is satisfied
if and only if $x=1$.

\emph{2. The cycle constraints.}
Since $Q$ is accessible and not \Rooted,
Lemma~\ref{arityd-dist-hard-lem-pinned-and} gives disjoint
coordinate sets $A_Q$ and $T_Q$, where $|A_Q|\ge2$ and $T_Q$ is a
nonempty set of control coordinates.
Put $q:=|A_Q|$ and write $A_Q=\{a_1<\cdots<a_q\}$.
The set $A_Q$ may include control coordinates.

With the entries in $T_Q$ fixed to $1$ and those outside
$A_Q\cup T_Q$ fixed to $0$, a tuple belongs to $Q$ exactly
when the entry in every coordinate in $A_Q$ is $1$.

Set
\[
 h:=\max\{q+1,\ell\}.
\]
Then $h\ge3$, $q<h$, and $\ell\le h$.
If $\ell<h$, introduce additional new private variables
$z_{\ell+1},\ldots,z_h$. Put
\[
 B:=\{z_1,\ldots,z_h\}.
\]
Thus $B$ includes all the variables $z_1,\ldots,z_\ell$
already used in the switch constraint.

Use cyclic subscripts: after $z_h$, continue with $z_1$.
Formally, define $z_{h+j}:=z_j$ for $j\in[h]$.
For each $i\in[h]$, put
\[
 E_i:=\{z_i,z_{i+1},\ldots,z_{i+q-1}\}.
\]
This is the group of $q$ consecutive variables starting at $z_i$.
Since $q<h$, each $E_i$ is a proper subset of $B$,
and the groups $E_1,\ldots,E_h$ are distinct.
For example, when $q=2$ and $h=3$, the groups are
\[
 \{z_1,z_2\},\qquad
 \{z_2,z_3\},\qquad
 \{z_3,z_1\}.
\]

For each $j\in T_Q$, introduce a pin $y_j^Q$.
These pins are pairwise distinct and distinct from all
variables already introduced. Each constraint added below uses
the same variable $y_j^Q$ in coordinate $j$.
For each $i\in[h]$, add one unit-weight $Q$-constraint:

\begin{center}
\begin{tabular}{@{}ll@{}}
\toprule
Coordinate & Variable \\
\midrule
$a_j$, for $j\in[q]$ & $z_{i+j-1}$ \\
$j\in T_Q$ & $y_j^Q$ \\
Outside $A_Q\cup T_Q$ & A new private filler \\
\bottomrule
\end{tabular}
\end{center}

Use a different new filler for each remaining coordinate
of each constraint added in this step. All these fillers are distinct
from one another and from all previously introduced variables.

We call these $h$ constraints \emph{cycle constraints}.
Each uses relation $Q$ and one group $E_i$ of
$q$ consecutive private variables in the cyclic
order of $B$.

For each $i\in[h]$, if every pin $y_j^Q$ is set to $1$
and every filler in the cycle constraint indexed by $i$
is set to $0$, then that constraint is satisfied exactly
when every variable in $E_i$ is set to $1$.

\emph{3. The complete gadget.}
Let
\[
 Y:=\{y_j^P:j\in T_P\}\cup\{y_j^Q:j\in T_Q\},
\]
and let $Z$ consist of the variables in $B$ and all
fillers introduced above.
The groups $\{x\}$, $Z$, and $Y$ are disjoint.
Each filler occurs in exactly one constraint.
Since $T_Q\ne\varnothing$, the pin set $Y$ is nonempty.

The family $H(x;Z,Y)$ contains one switch constraint
using $P$ and $h$ cycle constraints using $Q$.
If every pin in $Y$ and every variable in $B$ is set
to $1$, and every filler is set to $0$, then all $h$
cycle constraints are satisfied. The switch constraint is
satisfied exactly when $x=1$. Thus
\[
 f_0(B)=h,\qquad f_1(B)=h+1.
\]
\paragraph*{Verification of the switch gadget.}
The construction already gives $f_0(B)=h$ and $f_1(B)=h+1$.
We now prove the bounds for $S\ne B$.

First, under any assignment, every satisfied constraint
contains a private variable set to $1$ in a unit-witness
coordinate. This follows from accessibility, since the
switch and pins occur only in control coordinates.
In particular, no constraint is satisfied when every
private variable is $0$, regardless of the switch and
pin values.

Fix $b\in\{0,1\}$ and $S\subseteq Z$ with $S\ne B$,
and set every pin to $1$. Put
\[
 a:=|S\cap B|,\qquad r:=|S\setminus B|.
\]
Thus $|S|=a+r$, and $r$ counts the selected fillers.

\begin{enumerate}
 \item If $a=0$, every satisfied constraint contains a
       selected filler. Each filler occurs in only one
       constraint, so $f_b(S)\le r=|S|$.

 \item Suppose $0<a<h$.
       On the cycle, some selected variable $z_i$ has
       an unselected successor $z_{i+1}$.
       Every group $E_j\subseteq S$ starts at a selected
       variable $z_j$. However, $E_i\not\subseteq S$
       because $q\ge2$. Therefore at most $a-1$ groups
       $E_j$ are contained in $S$.

       A satisfied cycle constraint whose group is not
       contained in $S$ must have a selected filler.
       Otherwise its pins would be $1$ and its fillers
       would be $0$, so the defining property of
       $A_Q,T_Q$ would make it unsatisfied.
       Distinct cycle constraints have distinct fillers,
       so at most $r$ such constraints are satisfied.
       The switch constraint contributes at most one point.
       Hence
       \begin{equation}
        f_b(S)\le(a-1)+r+1=|S|.
        \label{arityd-dist-hard-eq-baseline-bound}
       \end{equation}

 \item If $a=h$, then $r\ge1$ because $S\ne B$.
       Since $H$ has $h+1$ constraints,
       \[
        f_b(S)\le h+1\le h+r=|S|.
       \]
\end{enumerate}

These cases, together with the scores for $S=B$,
prove \eqref{arityd-dist-hard-eq-switch-score}.

Every constraint uses distinct variables.
The cycle constraints have different scope sets because
their intersections with $B$ are the distinct groups
$E_1,\ldots,E_h$.
The switch constraint has a different scope set because
it contains $x$, which occurs in no cycle constraint.
Every cycle constraint contains a pin, and the
switch constraint contains the switch.
All relations used are nonempty, zero-free, and accessible.
Thus $H$ satisfies Definition~\ref{arityd-dist-hard-def-activator}.

\paragraph*{The CSP instance.}
Let $(G,k)$ be a \clique{} instance, where $G=(V_G,E_G)$
is a finite simple undirected graph. Assume $k\ge6$, using
the preprocessing at the start of this section.
We use the switch gadget $H(x;Z,Y)$ constructed above.

\emph{1. Vertex and edge variables.}
For each vertex $v\in V_G$, introduce a variable $x_v$.
Selecting $x_v$ represents selecting vertex $v$.

For each edge $e\in E_G$, introduce $h$ private variables
\[
 B_e:=\{z_{e,1},\ldots,z_{e,h}\}.
\]
These correspond to the variables $z_1,\ldots,z_h$ in $B$.
Use cyclic subscripts, defining
$z_{e,h+j}:=z_{e,j}$ for $j\in[h]$.

Recall that $Z\setminus B$ is the set of fillers introduced
in the switch gadget construction.
For each filler $f\in Z\setminus B$ and each edge $e\in E_G$,
introduce a variable $d_{e,f}$. Define
\[
 Z_e:=B_e\cup\{d_{e,f}:f\in Z\setminus B\}.
\]
Thus $Z_e$ consists of $B_e$ and the fillers for edge $e$.
All vertex variables and all variables in the sets $Z_e$
are pairwise distinct.

\emph{2. Pins shared across all edges.}
For each $r\in\{1,2\}$, introduce pins
\[
 y_{j,r}^P\quad(j\in T_P),
 \qquad
 y_{j,r}^Q\quad(j\in T_Q),
\]
and define
\[
 Y^{(r)}
 :=\{y_{j,r}^P:j\in T_P\}
   \cup\{y_{j,r}^Q:j\in T_Q\}.
\]
All these pins are pairwise distinct and distinct from the
vertex and edge variables.

Introduce these two pin groups once for the entire instance.
The same pins will be used across all edges.

\emph{3. Two gadget copies for each edge.}
For each edge $e=uv$, choose an order $(u,v)$ for its endpoints.
Add two groups of constraints, one with switch $x_u$
and one with switch $x_v$.
Each group contains all $h+1$ constraints of $H$, with
variables replaced according to the following table.
We call these groups the two copies of $H$ on edge $e$.
Keep the relation and coordinate order of every constraint.

\begin{center}
\begin{tabular}{@{}lcc@{}}
\toprule
Variable in $H$ & Copy for $u$ & Copy for $v$ \\
\midrule
Switch $x$ & $x_u$ & $x_v$ \\
Variable $z_j\in B$ & $z_{e,j}$ & $z_{e,j}$ \\
Filler $f$ & $d_{e,f}$ & $d_{e,f}$ \\
Pin $y_j^P$ & $y_{j,1}^P$ & $y_{j,2}^P$ \\
Pin $y_j^Q$ & $y_{j,1}^Q$ & $y_{j,2}^Q$ \\
\bottomrule
\end{tabular}
\end{center}

The two copies on edge $e$ share every variable in $Z_e$:
both the variables in $B_e$ and the fillers.
They use different vertex variables as switches and
different pin groups.

We call the $2(h+1)$ constraints in the two switch
gadgets on each edge its \emph{edge constraints}.

Different edges have disjoint private-variable sets.
They reuse the same two global pin groups.
A vertex variable $x_v$ serves as the switch in one copy
on each edge incident to $v$.

With all pins set to $1$, all variables in $B_e$ set to $1$,
and all edge fillers set to $0$, the two copies together
satisfy $2h$ constraints, plus one additional constraint
for each endpoint variable set to $1$.
In particular, selecting both endpoints gives $2h+2$ points
using only $h$ shared private variables.

\paragraph*{The budget, target, and pin constraints.}
Let
\[
 L:=|Y^{(1)}\cup Y^{(2)}|
   =2(|T_P|+|T_Q|)
\]
be the number of pins.
The sets $A_P$, $T_P$, and $\{c\}$ are pairwise disjoint,
and $|A_P|\ge1$, so $|T_P|\le d-2$.
Similarly, $A_Q$ and $T_Q$ are disjoint and $|A_Q|\ge2$,
so $|T_Q|\le d-2$.
Since $T_Q\ne\varnothing$, we obtain
\[
 2\le L\le4(d-2).
\]
Thus $L$ is a constant depending only on $\Gamma$.
Set
\begin{equation}
 k_0:=k+h\binom{k}{2},
 \qquad
 t_0:=2(h+1)\binom{k}{2}.
 \label{arityd-dist-hard-eq-base-budget-target}
\end{equation}
For a $k$-clique, selecting its $k$ vertex variables
and every variable in $B_e$ on each of its
$\binom{k}{2}$ edges uses $k_0$ variables.
With all pins selected and all edge fillers set to $0$,
the two gadget copies on each clique edge satisfy
$2(h+1)$ constraints. These edges therefore contribute
$t_0$ points.

Set the budget to $k':=L+k_0$.
This allows selecting all $L$ pins in addition to
the vertex and edge variables described above.

We next add constraints to make selecting every pin
necessary for reaching the target. Set
\[
 \varphi:=2(h+1)k'+1.
\]
For each pin $y\in Y^{(1)}\cup Y^{(2)}$, add $\varphi$
unit-weight $P$-constraints with $y$ in coordinate $u_1$.
In every other coordinate, put a new variable.
Use a different new variable for every such coordinate
of every constraint. These variables are also distinct
from all previously introduced variables.
Let $D$ be their set.

Since $u_1\in A_P$, we have $e_{u_1}\in P$.
Thus setting $y=1$ and all the new variables in its
constraints to $0$ satisfies all $\varphi$ constraints
added for $y$.
If $y=0$, satisfying any of these constraints requires
setting one of its new variables to $1$, because
$P$ is zero-free.

Define the target by
\begin{equation}
 t:=L\varphi+t_0.
 \label{arityd-dist-hard-eq-force-and-total-target}
\end{equation}
Selecting all pins and setting every variable in $D$
to $0$ satisfies all $L\varphi$ pin constraints.
The remaining $t_0$ points must come from edge constraints.
The correctness lemma proves that any assignment
selecting at most $k'$ variables and leaving a pin
at $0$ has score less than $t$.

The variable set of the output instance is
\[
 X:=\{x_v:v\in V_G\}\cup Y^{(1)}\cup Y^{(2)}
      \cup\bigcup_{e\in E_G}Z_e\cup D.
\]
Let $\mathcal C$ be the collection of all constraints
added for edges and pins, and set $w(C):=1$ for every
$C\in\mathcal C$.
The output instance is $\mathcal I=(X,\mathcal C,w,k')$.
The decision question is whether some assignment sets
at most $k'$ variables to $1$ and satisfies at least
$t$ constraints.

\subsection{Correctness of the reduction}

\begin{lemma}[Correctness of the reduction]
\label{arityd-dist-hard-lem-activator-clique}
Let $\Gamma$ be a fixed finite Boolean constraint language
of maximum arity $d$.
Suppose $P,Q\in\Gamma_+$ are accessible, $P$ is not \Direct,
and $Q$ is not \Rooted.
For every \clique{} instance $(G,k)$ with $k\ge6$,
the construction above produces an unweighted
multiplicity-one instance $\mathcal I$ of
\maxwtcsp{}$(\Gamma)$ with budget $k'$ and target $t$
such that
\[
 \OPT(\mathcal I)\le t,
 \qquad
 \OPT(\mathcal I)=t
 \iff G\text{ contains a }k\text{-clique}.
\]
The construction runs in \fpt{} time, and
$k'\le dk^2$ and $t\le12d^3k^2$.
\end{lemma}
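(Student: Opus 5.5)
The plan has four parts: check the instance properties, prove the completeness direction, prove the upper bound by splitting on whether every pin is selected, and bound the parameters.

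\textbf{Instance properties and completeness.} Unweightedness is immediate. Multiplicity one needs three separate checks.
\begin{itemize}
\item Within one gadget copy, scope sets are distinct by Definition~\ref{arityd-dist-hard-def-activator}.
\item The two copies on an edge differ in switch and pin group. Switch constraints contain distinct vertex variables. Cycle constraints in the two copies contain pins from $Y^{(1)}$ versus $Y^{(2)}$.
\item Constraints on different edges have disjoint private variables. Pin constraints each contain fresh variables from $D$.
\end{itemize}
For completeness, take a $k$-clique. Select its vertex variables, $B_e$ for each clique edge, and all $L$ pins. This uses exactly $k'$ variables. All $L\varphi$ pin constraints are satisfied, and each clique edge contributes $2(h+1)$ points, so the score is $t$.

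\textbf{Upper bound when every pin is selected.} Fix an assignment of weight at most $k'$ and suppose all pins are $1$. Let $s$ be the number of selected vertex variables. For each edge $e$, let $S_e=$ selected variables of $Z_e$, and let $D'$ be the selected variables of $D$. Then $s+\sum_e|S_e|+|D'|\le k_0$.

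The pin constraints give at most $L\varphi$ points. Each selected variable of $D$ lies in one pin constraint, but those are already counted, so $D'$ contributes nothing extra.

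On edge $e$, apply \eqref{arityd-dist-hard-eq-switch-score} to each copy. If $S_e\ne B_e$, the edge scores at most $2|S_e|$. If $S_e=B_e$, it scores at most $2h+(\text{number of selected endpoints})$. Let $m$ be the number of edges with $S_e=B_e$, and let $m_2$ be those among them with both endpoints selected.
\begin{itemize}
\item Each full edge contributes $2|S_e|$ plus at most its selected endpoints, so the edge score is at most $2\sum_e|S_e|+m+m_2$.
\item We have $m_2\le\binom{s}{2}$, and $m\le\sum_e|S_e|/h$.
\end{itemize}
Write $\Sigma=\sum_e|S_e|\le k_0-s$. The edge score is then at most $2\Sigma+\Sigma/h+\binom{s}{2}$, which needs comparison with $t_0=2h\binom{k}{2}+2\binom{k}{2}$.

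With $s=k$ and $\Sigma=h\binom k2$, the bound is exactly $2h\binom k2+\binom k2+\binom k2=t_0$, with equality forcing $m=\binom k2$, $m_2=\binom k2$, and hence a clique. Otherwise I need a strict deficit. Each unit of budget moved from vertices to private variables gains at most $2+1/h$. Each missing vertex loses about $k$ from $\binom{s}{2}$ and also forces edges without both endpoints, each losing $1$. The sharper form is to bound the score by
\[
2\Sigma+m+m_2\le 2\Sigma+\tfrac{\Sigma}{h}+\min\Bigl\{\binom s2,\tfrac{\Sigma}{h}\Bigr\}.
\]
A short calculation then shows this is below $t_0$ unless $s=k$, using $k\ge6$ and $h\ge3$.

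\textbf{Upper bound when a pin is zero, and parameters.} If some pin is $0$, its $\varphi$ constraints each need a distinct selected $D$-variable. So at most $k'$ of them are satisfied, losing at least $\varphi-k'$. Edge constraints can gain at most $2(h+1)k'$ in total, since every satisfied edge constraint needs a selected private variable and each private variable lies in at most $2(h+1)$ constraints. Hence the score is at most $(L-1)\varphi+k'+2(h+1)k'<t$ by the choice of $\varphi$. For the parameters, use $h\le d$ and $L\le4d$, which give $k'\le dk^2$ and $t\le12d^3k^2$ after routine estimation, and the construction has size polynomial in $|G|$ times a function of $k$.

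\textbf{Main obstacle.} The expected hardest step is the exact budget accounting when all pins are selected, specifically showing strictness when $s<k$ or the chosen vertices do not form a clique. I would settle it by the charging argument above, handling $s<k$ and $s=k$ separately.
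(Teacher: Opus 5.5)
Apart from one step, your argument is the paper's. The pins are forced by the $\varphi$ pin constraints, and each edge is bounded by the switch-gadget inequalities. The bounds $m\le\Sigma/h$ and $m_2\le\binom{s}{2}$ are combined into your $\min$ bound, which is exactly the paper's pair of edge-score bounds. You then split on whether the number of selected vertex variables is above, below, or equal to $k$, and read off the clique from equality. The ``short calculation'' you defer goes through as in the paper, via $\binom{k}{2}-\binom{s}{2}\ge\frac52(k-s)$ for $k\ge6$ and $2+1/h\le\frac73$.

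\textbf{The step that fails is the missing-pin case.} You bound the satisfied pin constraints of zero pins by $k'$. Separately, you bound the satisfied edge constraints by $2(h+1)k'$. You then assert $(L-1)\varphi+k'+2(h+1)k'<t$ by the choice of $\varphi$.

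With $\varphi=2(h+1)k'+1$, the left side equals $L\varphi+k'-1$, whereas $t=L\varphi+t_0$. So you would need $k'\le t_0$, i.e.\ $L+k\le(h+2)\binom{k}{2}$. The choice of $\varphi$ does not give this, and it can fail. Note that $L=2(|T_P|+|T_Q|)$ can be close to $4d$ while $h=3$, e.g.\ with $|A_P|=1$, $|A_Q|=2$, and $T_P,T_Q$ of size about $d$. For $k=6$, any $L\ge70$ violates the inequality.

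The repair is to charge both contributions to the same budget. If $\Sigma$ private variables and $|D'|$ variables of $D$ are selected, then $\Sigma+|D'|\le k'$. Together they contribute at most $2(h+1)\Sigma+|D'|\le 2(h+1)k'=\varphi-1$. The score is then at most $L\varphi-1<t$, which is how the paper argues.

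There are also smaller omissions in your multiplicity-one check. You should note that every edge constraint contains a variable of $B_e$, and that cycle constraints contain no vertex variable. You should also note that $P$ has arity at least $2$ because $u_1\ne c$, so each pin constraint really contains a $D$-variable.
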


\begin{proof}
\textbf{Forward direction.}
Suppose $G$ contains a $k$-clique.
Set all pins to $1$, select the vertex variables of
the clique, and select exactly $B_e$ on each edge
between its vertices.
Set every other variable to $0$.
The number of selected variables is
\[
 L+k+h\binom{k}{2}=k'.
\]
On each clique edge $e=uv$, both endpoint variables
are set to $1$, exactly the private variables in $B_e$
are selected, and all pins are set to $1$.
The two switch gadgets on $e$ have switches $x_u$
and $x_v$.
Both switches are set to $1$, and each gadget
satisfies its $h$ cycle constraints and its switch
constraint.
Together, the two gadgets contribute
$2(h+1)=2h+2$ points.
On every edge outside the clique, all private variables
are set to $0$.
In every constraint of the two switch gadgets on
such an edge, vertex variables and pins occur only
in control coordinates.
Thus the variable in every unit-witness coordinate
is $0$, so accessibility makes the constraint unsatisfied.

In every pin constraint, the pin is set to $1$
in coordinate $u_1$, and all other variables are
set to $0$.
Its tuple is therefore $e_{u_1}\in P$, so all
$L\varphi$ pin constraints are satisfied.
There are $\binom{k}{2}$ clique edges, giving the
total score
\[
 2(h+1)\binom{k}{2}+L\varphi=t.
\]

\smallskip\noindent\textbf{Backward direction.}
Suppose an assignment $\tau$ satisfies
$\wt(\tau)\le k'$ and $F^{\mathcal I}(\tau)\ge t$.
We show that $G$ contains a $k$-clique and that
$F^{\mathcal I}(\tau)=t$.
Let $s$ be the number of selected variables in
$\bigcup_{e\in E_G}Z_e$.
Let $r$ be the number of variables in $D$ that $\tau$
sets to $1$.
These are the fresh variables introduced in the
pin constraints.
Write $F_{\mathrm{edge}}(\tau)$ for the number of
edge constraints satisfied by $\tau$.

\emph{1. Every pin is selected.}
Suppose $\tau$ leaves at least one pin at $0$.
Every satisfied edge constraint must have a variable
set to $1$ in a unit-witness coordinate, by accessibility.
Any vertex variable or pin in an edge constraint occurs
in a control coordinate, so this selected variable must
be private.

Each selected private variable belongs to one edge.
It occurs only in the two switch gadgets on that edge,
which together contain $2(h+1)$ constraints.
Thus at most $2(h+1)s$ edge constraints contain a
selected private variable, giving
\[
 F_{\mathrm{edge}}(\tau)\le2(h+1)s.
\]

A satisfied pin constraint whose pin is $0$ must
contain a selected variable from $D$, because $P$
is zero-free.
Each variable in $D$ occurs in only one constraint,
so these constraints contribute at most $r$ points.
At most $L-1$ pins are selected, and the constraints
added for each selected pin contribute at most
$\varphi$ points.
Since $s+r\le k'$, we obtain
\begin{equation}
\begin{aligned}
 F^{\mathcal I}(\tau)
 &\le(L-1)\varphi+2(h+1)s+r\\
 &\le(L-1)\varphi+2(h+1)k'\\
 &=L\varphi-1<t.
\end{aligned}
 \label{arityd-dist-hard-eq-missing-pin}
\end{equation}
The last equality uses
$\varphi=2(h+1)k'+1$.
This contradicts $F^{\mathcal I}(\tau)\ge t$.
Therefore $\tau$ selects every pin.

\emph{2. Edge-score bounds.}
Let $p$ be the number of selected vertex variables.
Since all $L$ pins are selected,
\[
 L+p+s+r=\wt(\tau)\le k'=L+k_0.
\]
Therefore $p+s+r\le k_0$, and hence
\begin{equation}
 s\le k_0-p=h\binom{k}{2}+k-p.
 \label{arityd-dist-hard-eq-private-budget}
\end{equation}
There are $L\varphi$ pin constraints, each of weight one.
They therefore contribute at most $L\varphi$ points.

We next bound the score from the two switch gadgets
on each edge.
For each edge $e=uv$, use the endpoint order chosen
in the construction.
The two switch gadgets on $e$ have switches $x_u$
and $x_v$ and share the private-variable set $Z_e$.
Let $S_e$ be the set of these private variables that
$\tau$ sets to $1$, and let $s_e$ be their number:
\[
 S_e:=\{z\in Z_e:\tau(z)=1\},
 \qquad s_e:=|S_e|.
\]
Since different edges have disjoint private-variable
sets,
\[
 s=\sum_{e\in E_G}s_e.
\]
All pins are selected, so we can apply the switch
gadget bounds to each switch gadget.

If $S_e\ne B_e$, each switch gadget contributes at most $s_e$
points, whether its switch is $0$ or $1$, by
\eqref{arityd-dist-hard-eq-switch-score}.
Both switch gadgets have the same selected private set $S_e$.
Their combined score is therefore at most
$s_e+s_e=2s_e$, regardless of which endpoints are selected.

If $S_e=B_e$, every variable in $B_e$ is set to $1$
and every edge filler is set to $0$.
Each switch gadget then satisfies its $h$ cycle constraints.
Its switch constraint is satisfied exactly when
its endpoint variable is set to $1$.
The combined score is therefore
\[
 2h+\tau(x_u)+\tau(x_v).
\]
These two cases give the following table:

\begin{center}
\begin{tabular}{@{}lccc@{}}
\toprule
Private set & Selected endpoints & Gadget scores & Total score \\
\midrule
$S_e\ne B_e$ & $0$, $1$, or $2$
  & At most $s_e$ each & At most $2s_e$ \\
$S_e=B_e$ & $0$ & $h$ and $h$ & $2h$ \\
$S_e=B_e$ & $1$ & $h$ and $h+1$ & $2h+1$ \\
$S_e=B_e$ & $2$ & $h+1$ and $h+1$ & $2h+2$ \\
\bottomrule
\end{tabular}
\end{center}

The table bounds the score on each edge by $2s_e$,
plus at most two additional points when $S_e=B_e$.
The terms $2s_e$ sum to $2s$.

If $S_e=B_e$, the two switch gadgets on $e$ share exactly
$h$ selected private variables.
Private variables belonging to different edges are
distinct.
Thus each such edge uses $h$ different variables
from the total of $s$ selected private variables.
There can therefore be at most $s/h$ edges with
$S_e=B_e$.

We count the additional points in two ways.

\emph{First bound.}
Allow two additional points for every edge with
$S_e=B_e$.
Since there are at most $s/h$ such edges, these
points contribute at most $2s/h$.

\emph{Second bound.}
First allow one additional point for every edge
with $S_e=B_e$, even if neither endpoint is selected.
These points contribute at most $s/h$.

An edge can contribute a second additional point
only when both endpoint variables are set to $1$.
Let $m_G$ be the number of graph edges with both
endpoint variables set to $1$.
Each of these edges contributes at most one second
additional point, so these points contribute at
most $m_G$.
Since $p$ vertex variables are selected and $G$ is
simple, $m_G\le\binom{p}{2}$.

Adding these estimates to $2s$ gives
\begin{equation}
\begin{aligned}
 F_{\mathrm{edge}}(\tau)&\le2s+\frac{2s}{h},\\
 F_{\mathrm{edge}}(\tau)&\le2s+\frac{s}{h}+m_G.
\end{aligned}
 \label{arityd-dist-hard-eq-master-bound}
\end{equation}

\emph{3. Exactly $k$ vertices are selected, and they form a clique.}
The pin constraints contribute at most $L\varphi$ points.
Since $F^{\mathcal I}(\tau)\ge t$, the edge constraints
must contribute at least
\[
 F_{\mathrm{edge}}(\tau)\ge t-L\varphi=t_0.
\]

If $p>k$, selecting these extra vertex variables leaves
fewer private variables available:
\[
 s\le h\binom{k}{2}-(p-k).
\]
The first edge-score bound therefore gives
\[
 F_{\mathrm{edge}}(\tau)
 \le t_0-\left(2+\frac2h\right)(p-k)<t_0.
\]
This contradicts the required edge score.

If $p<k$, the private-variable budget gives
\[
 s\le h\binom{k}{2}+(k-p).
\]
Compared with $h\binom{k}{2}$, this allows at most
$k-p$ more selected private variables.
In the second edge-score bound, this increases
the term $(2+1/h)s$ by at most $(2+1/h)(k-p)$.
However, the number of possible pairs of selected vertices
decreases by
\[
 \binom{k}{2}-\binom{p}{2}
 =\frac{(k-p)(k+p-1)}2
 \ge\frac52(k-p),
\]
using $k\ge6$ and $p\ge0$.
Since $h\ge3$, we have $2+1/h\le7/3$.
Thus the decrease in possible pairs is larger than
the increase from the private-variable budget.
Formally,
\begin{equation}
\begin{aligned}
 F_{\mathrm{edge}}(\tau)
 &\le t_0+\left(2+\frac1h\right)(k-p)
       -\left(\binom{k}{2}-\binom{p}{2}\right)\\
 &\le t_0+\frac73(k-p)-\frac52(k-p)\\
 &=t_0-\frac{k-p}{6}<t_0.
\end{aligned}
 \label{arityd-dist-hard-eq-fewer-vertices}
\end{equation}
This also contradicts the required edge score.
Therefore $p=k$.

We now have $s\le h\binom{k}{2}$ and $m_G\le\binom{k}{2}$.
The second edge-score bound gives
\[
 F_{\mathrm{edge}}(\tau)
 \le2s+\frac{s}{h}+m_G
 \le(2h+1)\binom{k}{2}+\binom{k}{2}
 =t_0.
\]
The edge score is at least $t_0$, so equality holds
throughout the displayed inequalities.
Both coefficients in $(2+1/h)s+m_G$ are positive.
If either $s<h\binom{k}{2}$ or $m_G<\binom{k}{2}$,
this upper bound would be strictly less than $t_0$.
Therefore
\[
 s=h\binom{k}{2},
 \qquad m_G=\binom{k}{2}.
\]
There are $\binom{k}{2}$ pairs among the $k$ selected
vertices, and $m_G$ counts the graph edges between them.
Thus every pair is an edge, giving a $k$-clique.

We have also shown that $F_{\mathrm{edge}}(\tau)=t_0$.
The pin constraints contribute at most $L\varphi$ points, so
\[
 F^{\mathcal I}(\tau)\le L\varphi+t_0=t.
\]
Our assumption $F^{\mathcal I}(\tau)\ge t$ therefore
forces equality.
Thus no feasible assignment has score greater than $t$,
and any feasible assignment with score $t$ gives a
$k$-clique.
Together with the forward direction, this proves
\eqref{arityd-dist-hard-eq-target}.

\paragraph*{Distinct variables and constraint scopes.}
Every constraint has weight one and uses either $P$
or $Q$, both of which belong to $\Gamma$.

No constraint repeats a variable.
Within each switch gadget, different variables of
$H$ are replaced by different variables.
Each pin constraint uses its pin and a distinct
fresh variable in every other coordinate.

We now show that no two constraints contain
exactly the same set of variables.

\emph{Pin constraints.}
Since $u_1\ne c$, the relation $P$ has arity at
least two.
Every pin constraint therefore contains a fresh
variable from $D$.
This variable occurs in no other constraint,
so it distinguishes this pin constraint from
every other constraint.

\emph{Constraints on different edges.}
Every constraint on an edge $e$ contains a variable
from $B_e$: the switch constraint contains
$z_{e,1}$, and each cycle constraint contains
$q$ variables from $B_e$.
Variables in $B_e$ occur only in constraints on $e$.
Thus constraints on different edges cannot contain
the same set of variables.

\emph{Constraints on the same edge $e=uv$.}
The switch constraint for $u$ contains $x_u$
as its only vertex variable.
The switch constraint for $v$ contains $x_v$
as its only vertex variable.
Since $u\ne v$, these constraints have different
sets of variables.
Neither can have the same set of variables as
a cycle constraint, because cycle constraints
contain no vertex variables.

Within either switch gadget, the cycle constraint
indexed by $i\in[h]$ contains exactly the following
variables from $B_e$:
\[
 \{z_{e,i},z_{e,i+1},\ldots,z_{e,i+q-1}\}.
\]
Since $q<h$, the first variable $z_{e,i}$ is the
unique variable in this group whose predecessor
in the cyclic order lies outside the group.
Thus different starting indices give different
groups.
Consequently, cycle constraints within one
switch gadget have different sets of variables.

Finally, $T_Q\ne\varnothing$.
Every cycle constraint in the gadget with switch
$x_u$ contains a pin from $Y^{(1)}$ and none
from $Y^{(2)}$.
Every cycle constraint in the gadget with switch
$x_v$ contains a pin from $Y^{(2)}$ and none
from $Y^{(1)}$.
Thus cycle constraints from the two gadgets
also have different sets of variables.

Every pair of constraints falls into one of
these cases.
Hence $\mathcal I$ has multiplicity one.

\paragraph*{Parameter bounds and reduction size.}
We first bound the budget and target in terms of
$d$ and $k$, then count the variables and constraints.

The relation $Q$ has $q$ coordinates in $A_Q$ and
at least one coordinate in $T_Q$, so $q\le d-1$.
The control coordinate $c$ lies outside $A_P$,
so $\ell\le d-1$.
Hence
\[
 3\le h=\max\{q+1,\ell\}\le d.
\]
In particular, $d\ge3$.
We also have $2\le L\le4(d-2)$.

\emph{Budget.}
Since $h\le d$ and $k\ge6$,
\[
\begin{aligned}
 k_0
 &=k+h\binom{k}{2}
   \le k+d\binom{k}{2}
   \le\frac{dk^2}{2},\\
 k'
 &=L+k_0
   \le4(d-2)+\frac{dk^2}{2}
   \le dk^2.
\end{aligned}
\]

\emph{Target.}
The bound on $k'$ gives
\[
 \varphi=2(h+1)k'+1
 \le2d(d+1)k^2+1
 \le3d^2k^2.
\]
Since $h\ge1$, the definition
$k'=L+k+h\binom{k}{2}$ gives $\binom{k}{2}\le k'$.
Therefore
\[
 t_0=2(h+1)\binom{k}{2}
 \le2(h+1)k'
 =\varphi-1
 <\varphi.
\]
Using $L+1\le4d$, we obtain
\[
 t=L\varphi+t_0
 <(L+1)\varphi
 \le4d\varphi
 \le12d^3k^2.
\]

\emph{Number of variables and constraints.}
Each constraint of $H$ contains the switch or a pin.
Its remaining coordinates contain at most $d-1$
private variables.
Since $H$ has $h+1$ constraints and every private
variable occurs in one of them,
\[
 |Z|\le(h+1)(d-1)\le d^2-1.
\]
Each pin constraint introduces at most $d-1$ fresh
variables. There are $L\varphi$ such constraints,
so
\[
 |D|\le(d-1)L\varphi\le(d-1)t.
\]

Each edge contributes $2(h+1)$ constraints from
its two switch gadgets.
Adding the $L\varphi$ pin constraints gives
\[
 |\mathcal C|
 =2(h+1)|E_G|+L\varphi
 \le2(d+1)|E_G|+t.
\]

There are $|V_G|$ vertex variables, $L$ pins,
$|Z|$ private variables for each edge, and
$|D|$ fresh variables from the pin constraints.
The two switch gadgets on an edge share all its
private variables, so we count these variables once.
The four groups are disjoint. Thus
\[
\begin{aligned}
 |X|
 &=|V_G|+L+|Z||E_G|+|D|\\
 &\le|V_G|+4(d-2)+(d^2-1)|E_G|+(d-1)t.
\end{aligned}
\]

For fixed $\Gamma$, the value $d$ is constant.
The bounds above therefore give
$O(|V_G|+|E_G|+k^2)$ variables and constraints.
Each constraint has at most $d$ coordinates,
so the construction takes time polynomial in
$|V_G|+|E_G|+k^2$.
The output budget $k'$ and target $t$ are bounded
by functions of $k$ and $d$.
Together with the correctness proved above,
this establishes an \fpt{} reduction.
\end{proof}

\subsection{The hardness theorem}

\begin{theorem}[Unweighted multiplicity-one hardness]
\label{arityd-dist-hard-thm-necessity}
Let $\Gamma$ be a fixed finite Boolean constraint
language of maximum arity $d$.
Suppose $\Gamma_+$ contains a relation that is not
\Direct{} and a relation that is not \Rooted{}.

Then \maxwtcsp{}$(\Gamma)$ admits no \epas{}
on unweighted multiplicity-one instances unless
$\fpt{}=\wone{}$.
\end{theorem}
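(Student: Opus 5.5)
The plan is to combine the two reductions already established and feed them into Lemma~\ref{arityd-dist-hard-lem-exact-cap}. The hypothesis gives relations in $\Gamma_+$ failing \Direct{} and failing \Rooted{} (possibly the same one). I split into two cases according to whether some relation of $\Gamma_+$ is inaccessible.

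\emph{Case 1: $\Gamma_+$ contains an inaccessible relation.} Here Lemma~\ref{arityd-dist-hard-lem-inaccessible} applies directly. It yields an \fpt{} reduction from \clique{} to unweighted multiplicity-one instances over $\Gamma$, with budget $k'\le k+d-2$ and integer target $t=\binom{k}{2}$ satisfying \eqref{arityd-dist-hard-eq-target}. The preprocessing that raises $k$ to $k_{\min}=q+3$ is part of that reduction and keeps it \fpt{}. Since $k\ge5$, we have $t\ge1$.

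\emph{Case 2: every relation in $\Gamma_+$ is accessible.} Then the relation failing \Direct{} and the relation failing \Rooted{} are both accessible, so we may take them as $P$ and $Q$ in the construction of Section~\ref{arityd-dist-hard-sec-construction}. After padding the graph so that $k\ge6$, Lemma~\ref{arityd-dist-hard-lem-activator-clique} provides the required instance. It is unweighted and has multiplicity one, and it satisfies \eqref{arityd-dist-hard-eq-target} with $k'\le dk^2$ and $t\le 12d^3k^2$. Here $t\ge L\varphi\ge1$.

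In both cases, solution values are nonnegative integers, because all weights equal $1$. Also, $k'$ and $t$ depend only on $k$, since $d$ is fixed with $\Gamma$. Lemma~\ref{arityd-dist-hard-lem-exact-cap} applied to this restricted problem therefore shows that an \epas{} on unweighted multiplicity-one instances would decide \clique{} in \fpt{} time, which gives $\fpt{}=\wone{}$.

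The only point needing care is verifying that the padding step and the case split are exhaustive and preserve the reduction's properties. Case 2 relies on both witnesses being accessible, which is exactly the negation of Case 1. The substantive work lies in the two reduction lemmas, which may be assumed.
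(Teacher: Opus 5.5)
Your proposal is correct and follows the paper's own proof: it splits on whether $\Gamma_+$ contains an inaccessible relation and applies Lemma~\ref{arityd-dist-hard-lem-inaccessible} or Lemma~\ref{arityd-dist-hard-lem-activator-clique} accordingly. It then feeds the resulting integer-valued instances, whose budget and target are bounded in terms of $k$, into Lemma~\ref{arityd-dist-hard-lem-exact-cap}. Your added checks that $t\ge1$ in both cases and that the case split is exhaustive are correct.
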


\begin{proof}
If $\Gamma_+$ contains an inaccessible relation,
apply Lemma~\ref{arityd-dist-hard-lem-inaccessible}.

Otherwise, every relation in $\Gamma_+$ is accessible.
Choose $P,Q\in\Gamma_+$ such that $P$ is not \Direct{}
and $Q$ is not \Rooted{}, and apply
Lemma~\ref{arityd-dist-hard-lem-activator-clique}.

Both reductions produce unweighted multiplicity-one
instances satisfying \eqref{arityd-dist-hard-eq-target}.
Their output budgets and targets are bounded by
functions of the clique parameter $k$ and the arity
bound $d$.
Since $\Gamma$ is fixed, $d$ is constant.

Every feasible score is a nonnegative integer.
Lemma~\ref{arityd-dist-hard-lem-exact-cap} therefore
excludes an \epas{} unless
$\fpt{}=\wone{}$.
\end{proof}

\section{Completing the classification}
\label{sec:classification-completion}
\label{ss:arityd-proof}

\begin{proof}[Proof of Theorem~\ref{thm:arityd-complete-dichotomy}]
If every relation in $\Gamma_+$ is \Direct{},
Theorem~\ref{arityd-direct-thm-main} gives the claimed
\epas{} and running time.
If every relation in $\Gamma_+$ is \Rooted{},
Theorem~\ref{thm:arityd-rooted-epas} gives the claimed
\epas{} and running time.
Both results use Lemma~\ref{lem:arityd-zero-valid-lifting}
to extend the corresponding zero-free algorithm to
instances that also contain zero-valid constraints.
The zero-free algorithms respect prescribed variable
values, as required by that lemma.
If both conditions hold, either algorithm applies.

If neither condition holds, $\Gamma_+$ contains
a relation that is not \Direct{} and a relation
that is not \Rooted{}.
Theorem~\ref{arityd-dist-hard-thm-necessity} therefore
excludes an \epas{} unless
$\fpt{}=\wone{}$, even on unweighted
multiplicity-one instances.

It remains to justify
Remark~\ref{rem:arityd-uniform-algorithms}.
Both algorithms access the original relations only
through membership tests.
Testing the all-zero tuple identifies zero-valid
relations.
Testing the at most $d$ unit tuples identifies
the unit-witness coordinates.
Under either sufficient condition on $\Gamma_+$,
Lemma~\ref{lem:arityd-repair-accessibility} implies
that a zero-free relation with no unit-witness
coordinate is empty.
Its constraints can therefore be discarded without
enumerating satisfying tuples.

The running-time proofs of
Theorems~\ref{thm:arityd-rooted-zero-free}
and~\ref{arityd-direct-thm-zero-free} bound the number
of membership tests and show that the intermediate
rational numbers have polynomial bit length in
$\ell_{\mathrm{in}}$.
Thus polynomial-time membership tests, with a fixed
exponent across all inputs, and exact arithmetic
preserve the claimed running-time bounds.
The proof of Lemma~\ref{lem:arityd-zero-valid-lifting}
establishes the same guarantee for the steps handling
zero-valid constraints.
This proves the remark.
\end{proof}

\subsection{Relations of arity at most two}
\label{ss:arityd-binary-specialization}

\begin{proposition}
\label{prop:arityd-binary-specialization}
Every nonempty zero-free relation of arity at most two
other than $R_{\mathrm{AND}}=\{(1,1)\}$ is both
\Direct{} and \Rooted{}.
The relation $R_{\mathrm{AND}}$ satisfies neither rule.
\end{proposition}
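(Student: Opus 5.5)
The plan is a direct finite case analysis over all nonempty zero-free relations of arity $r\in\{1,2\}$, checking Definition~\ref{def:arityd-two-rules} for each.

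\emph{Arity one.} The only nonempty zero-free relation is $\{(1)\}$. Its single coordinate is a unit witness, so there are no controls and \Direct holds vacuously. For \Rooted, the tuple $(1)$ can take $u=1$, and the only admissible $B$ is $\varnothing$, which gives $\beta=(1)\in R$.

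\emph{Arity two.} A zero-free relation $R$ is a nonempty subset of $\{(1,0),(0,1),(1,1)\}$.

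First suppose $R$ contains at least one of $e_1,e_2$, say $e_1$ by symmetry. Then coordinate $1$ is a unit witness. I will argue generally rather than list all six relations.

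For \Direct, a control coordinate $c$ with $\alpha_c=1$ can only be $c=2$ with $e_2\notin R$. Then $\alpha\in\{(1,1)\}$, since $(0,1)\notin R$. Removing coordinate $2$ gives $(1,0)=e_1\in R$, as required.

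For \Rooted, take any $\alpha\in R$. If $\alpha=e_i$, choose $u=i$; then $B=\varnothing$ and $\beta=e_i\in R$. If $\alpha=(1,1)$, there are two subcases.
\begin{itemize}
\item[$\circ$] If both coordinates are unit witnesses, the only $B$ is $\varnothing$. Choosing $u=1$ gives $\beta=e_1\in R$.
\item[$\circ$] If coordinate $2$ is a control, choose $u=1$. The case $B=\varnothing$ gives $e_1\in R$, and the case $B=\{2\}$ gives $(1,1)=\alpha\in R$.
\end{itemize}
So \Rooted holds.

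Otherwise $R$ contains neither unit tuple, which forces $R=\{(1,1)\}=R_{\mathrm{AND}}$. Both coordinates are then controls. \Direct fails because $(1,1)-e_1=(0,1)\notin R$. \Rooted fails because the tuple $(1,1)$ has no unit-witness coordinate. Equivalently, $R_{\mathrm{AND}}$ is inaccessible, so Lemma~\ref{lem:arityd-repair-accessibility} already rules out both rules.

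There is no real obstacle here; the only care needed is to handle the subcase where $(1,1)\in R$ with exactly one unit witness. That subcase is where the subsets $B$ of controls must actually be checked for \Rooted, and where the repair $\alpha-e_c$ must be checked for \Direct.
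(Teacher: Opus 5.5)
Your proof is correct and follows essentially the same route as the paper: both rest on the observation that a nonempty zero-free binary relation other than $R_{\mathrm{AND}}$ contains a unit tuple $e_u$, so every satisfying tuple has a selected unit witness and every repair allowed by either rule yields $e_u$ or $\alpha$ itself. The paper states this in one unified step, whereas you split out the subcases explicitly, including the case $(1,1)$ with exactly one unit witness. Your remark that $R_{\mathrm{AND}}$ fails both rules by Lemma~\ref{lem:arityd-repair-accessibility} is also valid.
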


\begin{proof}
The only nonempty zero-free unary relation is $\{1\}$,
which satisfies both rules.

Let $R$ be nonempty, zero-free, binary, and different from
$R_{\mathrm{AND}}$. Every $\alpha\in R$ has a selected unit witness $u$:
this is immediate for a unit tuple, and for $\alpha=(1,1)$ it follows
because $R$ contains a unit tuple $e_u$.
Deleting a selected control gives $e_u$, proving \Direct{}.
Keeping only $u$ among the unit witnesses and any subset of the selected
controls gives either $e_u$ or $\alpha$, proving \Rooted{}.
Finally, $R_{\mathrm{AND}}$ has no unit witness and deleting a control
from $(1,1)$ gives a tuple outside the relation, so it satisfies neither rule.
\end{proof}

\begin{proof}[Proof of Corollary~\ref{cor:arity-two}]
If $R_{\mathrm{AND}}\notin\Gamma$, Proposition~\ref{prop:arityd-binary-specialization}
and the \Rooted{} case of Theorem~\ref{thm:arityd-complete-dichotomy}
with $d=2$ give the stated deterministic \epas{} and running time.
If $R_{\mathrm{AND}}\in\Gamma$, it belongs to $\Gamma_+$ and satisfies
neither rule, so the hardness case of the same theorem excludes an
\epas{} unless $\fpt{}=\wone{}$.
\end{proof}

\section{Conclusion}
\label{sec:conclusion}

We give a complete classification of the fixed finite Boolean constraint
languages of arity at most $d$ that admit an \epas{} for \maxwtcsp{},
parameterized by $k+d$, when variables within each constraint are distinct
(Theorem~\ref{thm:arityd-complete-dichotomy}).
An interesting open question is whether the running time of these approximation schemes can be improved, particularly the dependence
on $k$, $d$, and $\varepsilon$.
Another natural open question is whether an analogous classification
holds over larger finite domains, with a distinguished value $0$ and
a budget of at most $k$ variables assigned a nonzero value.

\section*{Acknowledgments}
Souvik Saha acknowledges support from the Anusandhan National Research
Foundation (ANRF), Government of India, through the National Post Doctoral
Fellowship (N-PDF), File No.~PDF/2026/005409.

\section*{AI disclosure}
ChatGPT (OpenAI) helped prepare the manuscript and provided first drafts
of proofs, including the hardness argument. The algorithms were developed
by the authors. The authors verified and extensively simplified the
hardness proofs through interactive discussions with ChatGPT. They
independently checked the hardness proofs and references, reviewed and
edited the final text, and take full responsibility for the paper's
results, proofs, and text.

\bibliographystyle{plainurl}
\bibliography{references}
\end{document}